\definecolor{DarkRed}{rgb}{.7,0,.4}
\definecolor{czB}{HTML}{0041C2}
\numberwithin{equation}{section}
\newcommand{\beginsupplement}{%
        \section*{Supplementary Material}
        \setcounter{section}{0}
        \setcounter{table}{0}
        \renewcommand{\thetable}{S\arabic{table}}%
        \setcounter{figure}{0}%
        \renewcommand{\thefigure}{S\arabic{figure}}%
        \renewcommand{\thesection}{S\arabic{section}}
     }
\theoremstyle{definition}
\newtheorem{defn}{Definition}[section]
\theoremstyle{plain}
\newtheorem{thm}{Theorem}[section]
\newtheorem{lem}[thm]{Lemma}
\newtheorem{eg}{Example}[section]
\theoremstyle{remark}
\newcommand{\abs}[1]{\left \lvert #1 \right \rvert} 
\newcommand{\E}[1]{\mathbb{E} \left[ #1 \right]} 
\newcommand{\wE}[1]{\mathbb{E}_\oplus \left[ #1 \right]} 
\newcommand{\Var}[1]{\operatorname{Var} \left[ #1 \right]} 
\newcommand{\Cov}[2]{\operatorname{Cov} \left[ #1, #2 \right]} 
\newcommand{\ds}{\mathrm{d}s}
\DeclareMathOperator*{\argmin}{argmin}
\DeclareMathOperator*{\supp}{supp}
\def\references{\bibliography{reference}}
\def\mc{\mathcal}
\def\D{\mc{D}}
\def\R{\mathbb{R}}
\def\Z{\mathbb{Z}}
\def\id{\mathrm{id}}
\def\inv{^{-1}}
\def\du{\mathrm{d}u}
\def\ds{\mathrm{d}s}
\def\dv{\mathrm{d}v}
\def\d{\mathrm{d}}
\def\Log{\operatorname{Log}}
\def\Exp{\operatorname{Exp}}
\def\dw{d_W}
\def\fp{f_\oplus}
\def\Vp{\mathrm{Var}_\oplus}
\def\Qp{Q_\oplus}
\def\Fp{F_\oplus}
\def\hfp{\hat{f}_\oplus}
\def\hQp{\widehat{Q}_\oplus}
\def\hFp{\widehat{F}_\oplus}
\newcommand{\blind}{1}
\begin{document}

\def\spacingset#1{\renewcommand{\baselinestretch}%
{#1}\small\normalsize} \spacingset{1}


\if1\blind
{
  \title{\bf Wasserstein Autoregressive Models for Density Time Series}
\author{Chao Zhang \\
{\small University of California Santa Barbara}
\and Piotr Kokoszka \\
{\small Colorado State University}
\and Alexander Petersen \\
{\small University of California Santa Barbara}
}
  \maketitle
  \fi

\if0\blind
{
  \bigskip
  \bigskip
  \bigskip
  \begin{center}
    {\LARGE\bf Wasserstein Autoregressive Models for Density Time Series}
\end{center}
  \medskip
} \fi

\bigskip
\begin{abstract}
Data consisting of time-indexed distributions of  cross-sectional or intraday  returns have been extensively studied  in finance, and provide one example in which the data atoms consist of serially dependent probability distributions.  Motivated by such data, we propose an autoregressive model for density time series by exploiting the tangent space structure on the space of distributions that is induced by the Wasserstein metric.  The densities themselves are not assumed to have any specific parametric form, leading to flexible forecasting of future unobserved densities.  The main estimation targets in the order-$p$ Wasserstein autoregressive model are Wasserstein autocorrelations and the vector-valued autoregressive parameter.
We propose suitable estimators and establish
their asymptotic normality, which is  verified in a simulation study.
The new order-$p$ Wasserstein autoregressive model leads to a
prediction algorithm, which includes a data driven order selection
procedure. Its performance is compared to existing prediction procedures
via application to four financial return data sets, where a variety of metrics are used to quantify forecasting accuracy.  For most metrics, the proposed model outperforms existing  methods in two of the data sets, while the best empirical performance in the other two data sets is attained by existing methods based on functional transformations of the densities.
\end{abstract}

\noindent%
{\it Keywords:}  Random Densities; Wasserstein Metric; Time Series; Distributional Forecasting.
\vfill

\newpage

\section{Introduction}
Samples of probability density functions or, more generally,
probability distributions arise in a variety of settings.  Examples include fertility and mortality data \citep{mazzuco:scarpa:2015},
\citep{shang:haberman:2020},
functional connectivity in the brain \citep{petersen2019wasserstein}, distributions of image features from head CT scans \citep{salazar2019exploration}, and distributions of stock returns \citep{harvey:liu:zhu:2016},  \citep{bekierman:Gribisch:2019},
with the above recent references  provided for illustration only.
This paper is concerned with modeling, estimation and forecasting
of probability density functions which form a time series.

An early approach to the analysis of distributional data by \cite{kneip2001inference} used cross-sectional averaging and functional principal component analysis (FPCA) applied directly to yearly income densities.  In a more recent work, \cite{yang2020quantile} represented the sample of distributions by their quantile functions, and applied a linear function-on-scalar regression model with quantile functions as response variables.
These two approaches are principled alternatives to naively apply methods of functional data analysis (FDA) to density-valued data.  Since there are a variety of functional representations that provide unique characterizations of the distributions, including densities, quantile functions, and cumulative distribution functions, one faces the need to choose a representation prior to applying the (typically linear) methods of functional data analysis.  Further complicating this dilemma is the fact that these standard functional representations do not constitute linear spaces due to inherent nonlinear constraints (e.g., monotonicity for quantile functions or positivity and mass constraints for densities), so that outputs from models with linear underlying structures are generally inadequate.
For this reason, methodological developments for the analysis of distributional data have taken a geometric approach over the last decade.  Rather than choosing a functional form under which to analyze the data, one chooses a metric on the space of distributions in order to develop coherent models.  Examples of suitable metrics that have been used successfully in the modeling of distributional data include the Fisher-Rao metric \citep{srivastava2007riemannian}, an infinite-dimensional version of the Aitchison metric \citep{egozcue2006hilbert,hron2016simplicial}, and the Wasserstein optimal transport metric \citep{panaretos2016amplitude,petersen2019wasserstein,bigo:17}.

In many cases, the distributions in a sample are indexed by time, for
example annual income, fertility and  mortality data, or financial returns
or insurance claims at various time resolutions.
In this paper, we will assume that all such distributions possess a density with respect to the Lebesgue measure, and will refer to this type of data as a density time series.  A motivating example is shown in Figure~\ref{fig:denseg}, depicting the distribution of 5-minute intraday returns of the XLK fund, which tracks the technology and telecommunication sectors within the S\&P 500 index.  The data we plot in Figure~\ref{fig:denseg1} covers 305 trading days, each with 78 records of 5-minute intraday return.  Figure~\ref{fig:denseg2} demonstrates an alternative look at this dataset by plotting returns from three selected trading days.  \cite{kokoszka2019forecasting} considered various methods for forecasting density time series, most of which produced forecasts by first applying FPCA to the densities (or transformations of these), followed by fitting a multivariate time series models to the vectors of coefficients.  Finally, the density forecasts were obtained by using the forecasts of the coefficients in the FPCA basis representation.  Of these different methods, a modified version of the transformation of \cite{petersen2016functional} gave superior forecasts in the majority of cases, and was also based on a sound theoretical justification in terms of explicitly controlling for the density constraints.

The main contribution of this paper is to develop a geometric approach to density time series modeling under the Wasserstein metric.  It is well-known that this geometry is intimately connected with quantile functions, and thus provides a flexible framework for modeling samples of densities that tend to exhibit ``horizontal" variability, which can be thought of as variability of the quantiles. Examples of such variability in densities are given in Figure~\ref{fig:denseg2}.  We develop theoretical foundations
of autoregressive modeling in the space of densities equipped
with the Wasserstein metric, followed by methodology for
estimation  and forecasting, including order selection.
Since the Wasserstein geometry is not linear, care needs to be taken to ensure the model components and their restrictions are appropriately specified.
Autoregressive models have been the backbone of time series analysis
for scalar and vector-valued data for many decades, see e.g.
\citep{lutkepohl:2006},  among many other excellent textbooks.
Autoregression has been extensively studied in the
context of linear functional time series; most
papers study  or use order one autoregression, see
\citep{bosq:2000} and \citep{HKbook}. This paper thus merges
two successful approaches: the Wasserstein geometry and time series
autoregression.

\begin{figure}[t]
     \centering
     \begin{subfigure}[b]{0.42\textwidth}
         \centering
         \includegraphics[width=\textwidth, angle = 0]{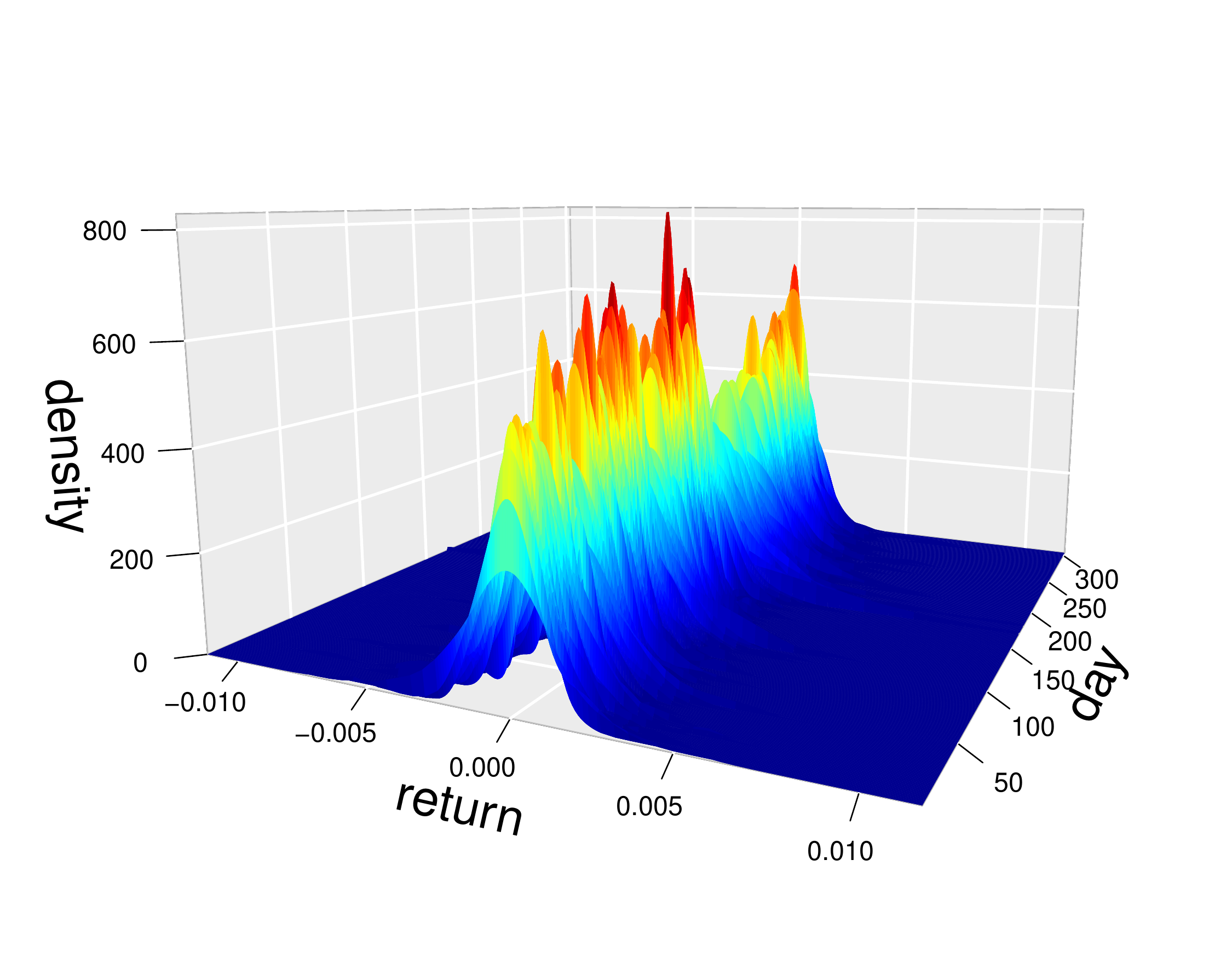}
         \caption{9/1/2009 - 11/17/2010}
         \label{fig:denseg1}
     \end{subfigure}
     \begin{subfigure}[b]{0.42\textwidth}
         \centering
         \includegraphics[width=\textwidth]{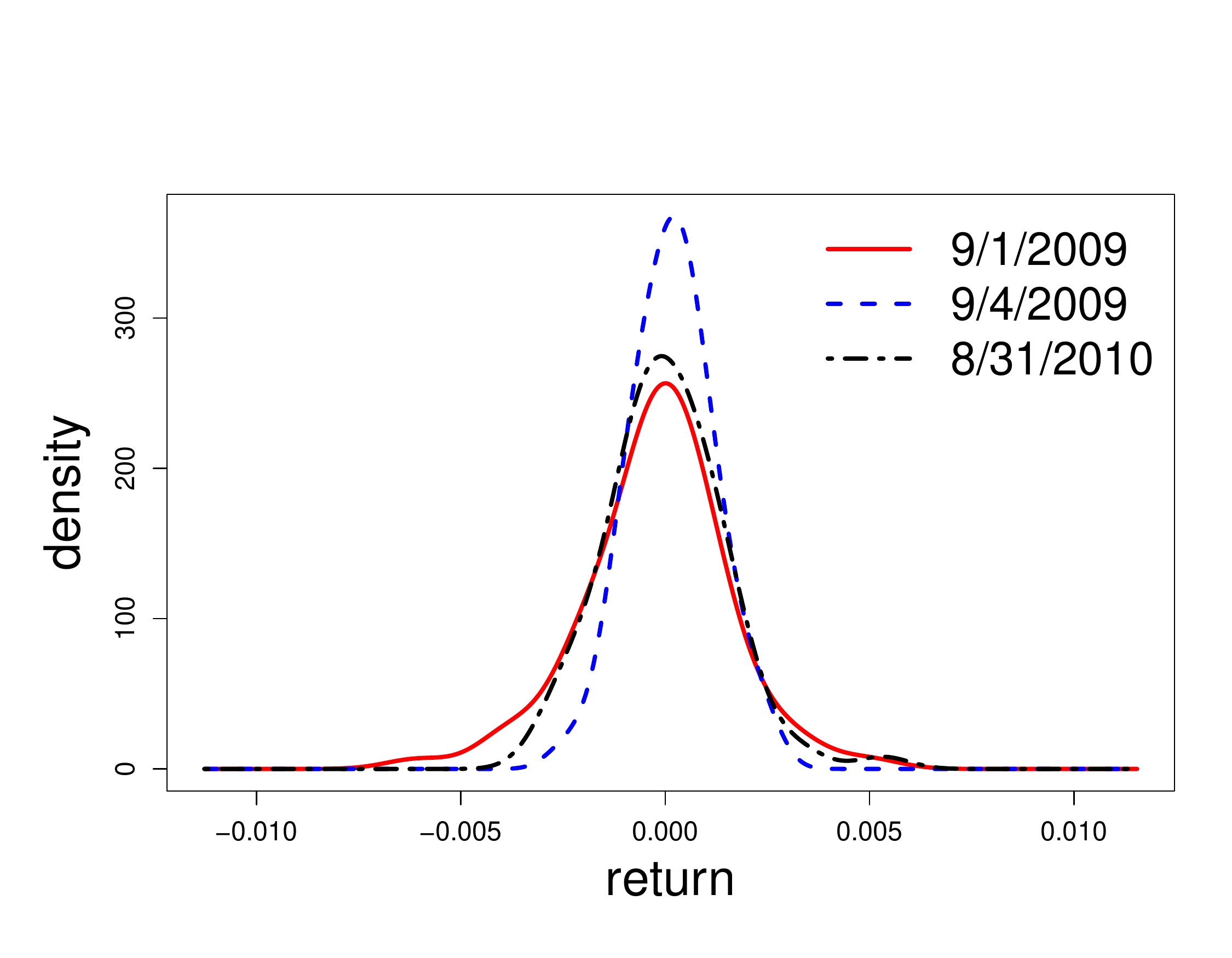}
         \caption{9/1/2009, 9/4/2009 and 8/31/2010}
         \label{fig:denseg2}
     \end{subfigure}
        \caption{Densities of XLK, the Technology Select Sector SPDR Fund 5-minute intraday returns on selected dates.}
        \label{fig:denseg}
\end{figure}

In a very recent preprint, \citep{chenWassReg} independently proposed a similar geometric approach to regression when distributions appear as both predictors and responses.  As an extension of this formulation, they also developed an autoregressive model of order one for distribution-valued time series.  Our AR(1)  model proposed in Section~\ref{ssec:WAR1} can be viewed as a special case of the model in \cite{chenWassReg}.  However, the generalization, theory and methodology we subsequently pursue  move in
a completely different direction, so the two papers  have little overlap. Even though we were not aware of the work of \citep{chenWassReg}, we did include their model, which is termed
the fully functional Wasserstein autoregressive model in this work,
as a one of the competing methods in our empirical analyses in Section~\ref{sec:comp}.

The remainder of the paper is organized as follows. In Section~\ref{s:pre}
we provide the requisite background on Wasserstein geometry and introduce relevant definitions related to density time series. Section~\ref{sec:WAR}
is devoted to the development of the Wasserstein AR($p$) model, including
its estimation and forecasting, both in terms of theory and algorithms.
Finite sample properties of our estimator are explored in Section~\ref{s:sim}, while Section~\ref{sec:comp} compares our forecasting algorithm to
those currently available. We conclude the paper with a discussion in Section~\ref{s:dis}.  Online Supplementary Material contains proofs of
the Theorems stated in Section~\ref{sec:WAR}.

\section{Preliminaries}\label{s:pre}

A density time series is a sequence of random densities $\{f_t, t \in \Z\}$.  In the spirit of functional data analysis, no parametric form for the densities will be assumed.  Furthermore, the models will be developed under the setting in which the densities are completely observed, although in practical situations they will need to be estimated from raw data that they generate.  For example, the densities in Figure~\ref{fig:denseg} are kernel density estimates with a Gaussian kernel.

Density time series are a special case of functional time series, so it would be natural to adapt a functional autoregressive model (see e.g. Chapter 8 of \cite{kokoszka2017introduction}).  However, such a direct approach is only suitable if one first transforms the densities into a linear space, although this approach too comes with disadvantages.  The transformations of \cite{petersen2016functional} and \cite{hron2016simplicial} require that all densities in the sample share the same support, an assumption that is often broken in real data sets.  Although \cite{kokoszka2019forecasting} modified the method of \cite{petersen2016functional} to remove this constraint, the associated transformation is not connected with any meaningful density metric, and can suffer from noticeable boundary effects if the observed densities decay to zero near the boundaries.  Still, the transformation approach remains viable and will be compared to the Wasserstein models that we propose.

\subsection{Wasserstein Geometry and Tangent Space}

We begin with a brief discussion of the necessary components of the Wasserstein geometry.  Consider the space of probability measures $\mathcal{W}_2  = \{ \mu : \mu \text{ is a probability measure on } \mathbb{R}$ \text{ and }$\int x^2 \mathrm{d}\mu(x) < \infty $\}.  Denoted by $\mc{D}$ the subset of $\mathcal{W}_2$ consisting of measures with densities with respect to Lebesgue measure, so that one may think of $\mc{D}$ as a collection of densities.  For $f, g \in \mc{D}$, consider the collection $\mathbb{K}_{f,g}$ of maps $K:\R \rightarrow \R$ that transport $f$ to $g$, that is, if $K \in \mathbb{K}_{f,g}$ and $U$ is a random variable that follows the distribution characterized by $f$, i.e. $U \sim f$, then $K(U) \sim g$.  Intuitively, $f$ and $g$ are close if there exists a $K \in \mathbb{K}_{f,g}$ such that $K \approx \id,$ where $\id(u)= u$ denotes the identity map.  This is the motivation behind the Wasserstein distance
\begin{equation}
\label{eq:wDist}
  d_W(f,g) =  \inf_{K \in \mathbb{K}_{f,g}} \left\{\int_\mathbb{R} \left(K(u) - u \right)^2 f(u)\du \right\}^{1/2}.
\end{equation}

That $d_W$ is a proper metric is well-established \citep{villani2003topics}, and \eqref{eq:wDist} is indeed only one of a large class of such metrics that can in fact be defined for measures on quite general spaces.  In the particular setting of univariate distributions, a surprising property is that the infimum in \eqref{eq:wDist} is attained by the so-called optimal transport map $K^* = G\inv \circ F,$ where $F$ and $G$ are the cdfs of $f$ and $g$, respectively.  Note that any optimal transport map must be strictly increasing, so that, by the change of variable $s = F(u),$ this leads to an alternative definition of the Wasserstein metric
\begin{equation}
    \label{eq:wDistQ}
    d_W(f,g) = \left\{ \int_{\R} (K^*(u) - u)^2 f(u) \du \right\}^{1/2} = \left\{\int_0^1 \left(G\inv(s) - F\inv(s)\right)^2 \ds\right\}^{1/2}.
\end{equation}
For clarity, we will use $u$ as the input for densities and cdfs, and $s$ as the input for quantile functions.  Interestingly, even for univariate probability measures in $\mc{W}_2$ that do not admit a density, the Wasserstein metric remains well-defined, and both optimal transport maps and corresponding distance can be expressed in terms of their quantile functions (which always exist), as above.

Another surprising characteristic of the Wasserstein metric is that, although $(\mc{W}_2, d_W)$ is not a linear space, its structure is strikingly similar to that of a Riemannian manifold \citep{ambrosio2008gradient}.  As mentioned previously, a key challenge in analyzing samples of probability density functions is that these reside in a convex space where linear methods fall short.  However, due to the manifold-like structure, to each $\mu \in \mc{W}_2$ corresponds a tangent space $\mc{T}_\mu$ that \emph{is} a complete linear subspace of $L^2(\R, \d \mu)$ (see Chapter 8 of \cite{ambrosio2008gradient}), opening the door for development of linear models for distributional data.  According to (8.5.1) in \cite{ambrosio2008gradient}, we define the tangent space for $\mu \in \mc{W}_2$ by
\begin{align}
\label{eq:tang}
\mc{T}_\mu =\overline{ \left\{ \lambda(T - \id): T \text{ is the optimal transport from } \mu \text{ to some } \nu \in \mc{W}_2, \lambda > 0 \right\} },
\end{align}
where the closure is with respect to $L^2(\R,\d \mu)$. With a slight abuse of notation, when $\mu$ possesses a density $f,$ we will denote this tangent space by $\mc{T}_f.$

We next describe two maps that bridge the tangent space and the space of densities.  Let $f, g \in \D$ have cdfs $F$ and $G$, respectively.  The map $\Log_f$: $\mc{D} \rightarrow \mc{T}_f$ defined by
\begin{equation}
\label{eq:log}
\Log_f(g) = G^{-1} \circ F - \id
\end{equation}
is called the logarithmic map at $f$, and effectively lifts the space $\mc{D}$ to the tangent space $\mc{T}_f$.  Intuitively, $\Log_f(g)$ represents the discrepancy between the optimal transport map $G\inv \circ F$ and the identity.  In fact, \eqref{eq:wDistQ} shows that $\dw^2(f,g) = \int_\R \Log_f(g)^2(u)f(u) \du,$ so that the logarithmic map takes the place of the ordinary functional difference $g-f$ that is commonly used in linear spaces.  The second is the exponential map $\Exp_f: \mathcal{T}_f \rightarrow \mc{W}_2$. Let $V \in \mc{T}_f$, and define $\Exp_f$ by
\begin{align}
\label{eq:exp}
\Exp_f(V) = (V + \mathrm{id})_{\#} \mu_f,
\end{align}
where $\mu_f$ is the measure with density $f$ and $$(V+ \mathrm{id})_\# \mu_f(A) = \mu_f\left( (V+\mathrm{id})^{-1}(A) \right), \quad A \in \mathcal{B}(\mathbb{R}),$$ where $\mathcal{B}(\mathbb{R})$ denotes the Borel sets.  Observe that, for any $f,g \in \mc{D},$ $\Exp_f(\Log_f(g)) = g,$ but $\Log_f(\Exp_f(V)) = V$ holds if and only if $V + \id$ is increasing.

Looking forward to building a Wasserstein autoregressive model, the logarithmic map will be used to lift the random densities into a linear tangent space, where the autoregressive model is imposed.  An important point to keep in mind is that the image of $\D$ under $\Log_f$ is a convex cone, and thus a nonlinear subset of $\mc{T}_\mu \subset L^2(\R, f(u)\du).$  We will deal with this technicality in the development of Wasserstein autoregressive models in Section~\ref{sec:WAR}.  In particular, the forecasts produced by the model in the tangent space will not be constrained to lie in the image of the logarithmic map.  This poses no practical problem since the forecasted densities are obtained through the exponential map, which is defined on the entirety of the tangent space.

\subsection{Wasserstein Mean, Variance, and Covariance}
Consider a random density $f$, which is a measurable map that assumes values in $\mc{D}$ almost surely.  Assume $\E{\dw^2(f,g)} < \infty $ for some, and thus all, $g \in \mc{D}$.  \cite{petersen2019WRI} demonstrated sufficient conditions for the Wassersetin mean density of $f$, written as
\begin{align}
\label{eq:wMean}
\wE{f} = f_\oplus  = \argmin_{g \in \mc{D}} \E{\dw^2(f,g)},
\end{align}
to exist, which represents the Fr{\'e}chet mean in the metric space $\mc{D}$ equipped with the Wasserstein distance.  We will thus assume that $\fp$ exists and is unique, and write $\Fp$ and $\Qp$ for the cdf and quantile functions, respectively, that correspond to $\fp$.  Letting $T = F\inv \circ \Fp$ be the random optimal transport map from $\fp$ to $f,$ the Wasserstein variance of $f$ is
\begin{equation}
\label{eq:wVar}
\Vp(f) = \E{\dw^2(f, \fp)} = \E{\int_\R (T(u) - u)^2 \fp(u) \du}.
\end{equation}
Since $\E{\dw^2(f,g)}<\infty$ for all $g \in \mc{D}$ by assumption, existence of the Wasserstein mean $\fp$ implies that the Wasserstein variance $\Vp(f)$ is finite.

Now, suppose $f_1$ and $f_2$ are two random densities, with Wasserstein means $f_{\oplus,1}$ and $f_{\oplus,2},$ respectively.  Since we will consider an autoregressive model, it is necessary to develop a suitable notion of covariance within and between these random densities.  The usual approach in functional data analysis would quantify this by the crosscovariance kernel of the centered processes $f_t - f_{\oplus,t},$ $t=1,2.$  However, as mentioned previously, this differencing operation is not suitable for nonlinear spaces, and we thus replace it with the logarithmic map in \eqref{eq:log}.  Let $T_t = F_t\inv \circ F_{\oplus,t}$ be the optimal transport map from the Wasserstein mean $f_{\oplus,t}$ to the random density $f_t$. To make clear the parallel between the ordinary functional covariance and the Wasserstein version we will define, recall that the logarithmic map replaces the usual notion of difference between two densities, so we introduce the alternative suggestive notation
\begin{equation}
f_t \ominus f_{\oplus,t} = \Log_{f_{\oplus,t}}(f_t) = T_t - \id
\end{equation}
for the logarithmic map.  Then the Wasserstein covariance kernel is defined by
\begin{align}
    \label{eq:wCKer}
    \mc{C}_{t,t'}(u,v) &= \Cov{(f_t \ominus f_{\oplus,t})(u)}{(f_{t'} \ominus f_{\oplus,t'})(v)} \\
    &= \Cov{T_t(u) - u}{T_{t'}(v) - v}, \quad t,t' = 1,2. \nonumber
\end{align}
Since $\int_\R \mathbb{E} \left(f_t \ominus f_{\oplus,t}(u)\right)^2 f_{\oplus,t}(u) \du < \infty$, $\mathbb{E}\left( f_t \ominus f_{\oplus, t}(u) \right)^2 < \infty$ for almost all $u$ in the support of $f_{\oplus,t}$.  This means that the Wasserstein covariance kernels $\mc{C}_{t,t'}(u,v)$ are defined for almost all $(u,v) \in \supp(f_{\oplus,t}) \times \supp(f_{\oplus,t'})$.  To further solidify the intuition behind this definition, observe that the Wasserstein variance in \eqref{eq:wVar} can be rewritten as
$$
\Vp(f_t) = \int_\R \mc{C}_{t,t}(u,u) f_{\oplus, t}(u) \du,
$$
echoing the notion of total variance typically used for functional data.  This was the motivation used in \cite{petersen2019wasserstein} in order to define a scalar measure of Wasserstein covariance between two random densities.

\subsection{Stationarity of Density Time Series}
\label{ssec:sta}
Stationarity plays a fundamental role in time series analysis.  It is a condition generally imposed on the random part of the process that remains after removing trends, periodicity, differencing or after other transformations.  It is needed to develop estimation and prediction techniques.  Here we develop notions of stationarity and strict stationarity for a time series of densities $\{f_t, t \in \Z\}$.

\begin{defn}
\label{d:sta}
A density time series $\{f_t, t\in \Z\}$ is said to be (second-order) stationary if the following two conditions hold.
\begin{enumerate}
\item $\wE{f_t} = f_\oplus$ for all $t \in \mathbb{Z}$, so the $f_t$ share a common Wasserstein mean. Denote $\supp(\fp)$ by $D_{\oplus}.$
\item $\Vp{(f_t)} < \infty.$
\item For any $t, h \in \mathbb{Z},$ and almost all $u,v \in D_\oplus,$ $\mc{C}_{t,t+h}(u,v)$ does not depend on $t.$
\end{enumerate}
\end{defn}

As we take the approach that focuses on the geometry of the space of densities, the above notion of stationarity is defined by the Wasserstein mean and covariance kernel, which is not equivalent to those traditional stationarity definitions of functional time series.  In particular, a conventional stationarity notion for a stochastic process is understood in the following sense, see e.g. \cite{bosq:2000}.

\begin{defn} \label{d:H-sta} A sequence $\{ V_t \}$
of elements
of a separable Hilbert space is said to be stationary if the
following conditions hold: \ (i) $\E{\| V_t \|^2} < \infty$, \ (ii) $\E{V_t}$
does not depend on $t$, and
\ (iii) the autocovariance operators defined by
$
\mc{G}_{t, t+ h} (x)
= \mathbb{E} \left [ \langle (V_t-\mu) ,  x\rangle (V_{t+h}-\mu) \right ]
$
do not depend on $t$ ($\mu= \mathbb{E} V_0$).
\end{defn}

Observe that Definition~\ref{d:H-sta} clearly does not apply to the density time series $\{f_t, t \in \Z\}$ as densities do not form a vector space.  The fact alone that differences $f_t- \E{f_0}$ are not well-defined in a nonlinear space renders Definition~\ref{d:H-sta} unsuitable for density time series.  However, upon taking $V_t = \Log_{\fp}(f_t)$,  Definition~\ref{d:sta} implies Definition~\ref{d:H-sta}, with the separable Hilbert space in the latter being the tangent space $\mc{T}_{\fp}$. As has been observed elsewhere (e.g., \cite{panaretos2016amplitude,petersen2016functional}), the Wasserstein mean $\fp$ (when it exists) is characterized by being the unique solution to $\E{\Log_{\fp}(f_t)(u)} = 0$ for almost all $u$ in the support of $\fp.$  Hence, condition (ii) is satisfied since $\mu = \E{V_0} = 0,$ from which condition (i) follows as $\E{\| V_t \|^2} = \Vp(f_t) < \infty.$ Lastly, condition (iii) holds since, for any element $x \in \mc{T}_{\fp}$,
\[
    \mc{G}_{t, t+h}(x) = \E{\left( \int_{D_\oplus} V_t(u) x(u) \fp(u) \du\right) V_{t+h}} = \int_{D_\oplus} \mc{C}_{t,t+h}(\cdot, u) x(u) \fp(u) \du,
\]
which is independent of $t$.  Equivalently, if $Q_t$ is the quantile function corresponding to $f_t,$ Definition~\ref{d:sta} implies that the optimal transport maps $T_t = Q_t \circ F_\oplus = X_t + \id$ form a stationary sequence in $\mc{T}_{\fp}$ according to Definition~\ref{d:H-sta} with $\mu = \id.$

\begin{defn}
\label{d:S-sta}
A density time series $\{f_t, t\in \Z\}$ is said to be strictly stationary if the joint distributions on $\mc{D}^k$ of $(f_{t_1},f_{t_2}, \dots, f_{t_k})$ and $(f_{t_1+h},f_{t_2+h}, \dots, f_{t_k+h})$ are the same for any $k \in \mathbb{N}$ and choices $t_1, t_2, \dots, t_k, h \in \mathbb{Z}$.
\end{defn}

Note that, if the densities $f_t$ share a common Wasserstein mean $\fp$ and the joint distributions of $(T_{t_1},T_{t_2}, \dots, T_{t_k})$ and $(T_{t_1+h},T_{t_2+h}, \dots, T_{t_k+h})$ are the same for any $k \in \mathbb{N}$ and choices $t_1, t_2, \dots, t_k, h \in \mathbb{Z}$, then $\{f_t, t\in \Z\}$ is strictly stationary according to Definition~\ref{d:S-sta}.  Since the existence of the Wasserstein mean implies that the Wasserstein variance is finite, it also follows that $\{f_t, t \in \Z\}$ is stationary according to Definition~\ref{d:sta}. 

\section{Wasserstein Autoregression}
\label{sec:WAR}

The above notions of stationarity and strict stationarity in the tangent space facilitate the development of autoregressive models in $\mathcal{T}_{\fp}$ by lifting the random densities via the logarithmic map.  As observed previously, the image of $\D$ under this map is a convex cone in $\mc{T}_{\fp},$ so it is not immediately possible to impose onto the tangent space standard structures used for functional time series, which rely on linearity of the function space (see e.g. Chapter 8 of \cite{kokoszka2017introduction} and references therein).  To illustrate the challenges that must be overcome, we begin with a simple model involving a single scalar autoregressive parameter, and then consider extensions.  For a stationary density time series $\{f_t, t\in \Z\}$, with Wasserstein mean cdf and quantile functions $\Fp$ and $\Qp$, respectively, define
\begin{equation}
    \label{eq:wAcvf_u}
     \gamma_h(u,v) := \operatorname{Cov}\left( f_t \ominus f_{\oplus}(u), f_{t+h} \ominus f_{\oplus}(v)  \right).
\end{equation}

\subsection{Wasserstein AR Model of Order 1}
\label{ssec:WAR1}
From Definition~\ref{d:sta}, a useful path to pursue in developing an autoregressive model for density time series is to first establish a suitable primary model for a sequence $\{V_t\}$ on a tangent space $\mc{T}_{\fp},$ for some $\fp \in \mc{D}.$  Recall that $\mc{T}_{\fp}$ is a separable Hilbert space.
The second step is to impose conditions on $\{V_t\}$ such that
\begin{itemize}
\item[a)] the measures $\mu_t = \Exp_{\fp}(V_t)$ possess densities $f_t$ that form a stationary density time series with Wasserstein mean $\fp,$ and
\item[b)] the parameters in the primary model can still be estimated given observations of the $f_t.$
\end{itemize}
To this end, fix $\fp \in \mc{D}$, where we assume that the support $D_\oplus$ of $\fp$ is an interval, possibly unbounded.  Let $\beta \in \mathbb{R}$ be the autoregressive parameter, and $\{\epsilon_t\}$ a sequence of independent and identically distributed stochastic processes (innovations) that reside in $\mc{T}_{\fp}$ almost surely.  We assume that the $\epsilon_t$ satisfy $\E{\epsilon_t(u)} = 0$ for all $u \in D_\oplus$ and define the innovation covariance kernel
\begin{equation}
    \label{eq:InnoCov}
    C_\epsilon(u,v) = \Cov{\epsilon_t(u)}{\epsilon_t(v)}, \quad u,v \in \R.
\end{equation}
We say that a sequence $\{ V_t \}$
follows an autoregressive model of order 1 if the random elements $V_t \in\mc{T}_{\fp}$ satisfy the equation
\begin{equation}
\label{eq:tangAR}
V_t = \beta V_{t-1} + \epsilon_t, \ \  \ t\in {\mathbb Z}.
\end{equation}
As will be detailed in Theorem~\ref{t:sol},
 \eqref{eq:tangAR} has a unique, suitably convergent,  solution
 $V_t = \sum_{i=0}^\infty \beta^i \epsilon_{t-i}$ under the following conditions:
\begin{itemize}
    \item[(A1)] $|\beta| < 1$,
    \item[(A2)] The innovations are iid elements of $\mc{T}_{\fp}$, have mean zero, and $\int_\R C_\epsilon(u,u) \fp(u)\du < \infty.$
\end{itemize}

To ensure that requirements a) and b) above are met, we impose
the following condition.

\begin{itemize}
    \item[(A3)] Almost surely, $V_t$ is differentiable, and $V_t'(u) > -1$ for all $u \in D_\oplus.$
\end{itemize}

Denote the usual Hilbert norm on $L^2(\R, \fp(u) \du)$ by $\lVert \cdot \rVert$. We now state our first result associated with model \eqref{eq:tangAR}, and its consequences for the density time series induced by the exponential map. Its proof, along with those of all other theoretical results, can be found in the \hyperref[supplement]{Supplementary Material}.

\begin{thm}
\label{t:sol}
If (A1) and (A2) hold, then
\begin{align}
\label{eq:sol}
V_t = \sum_{i = 0}^\infty \beta^i \epsilon_{t-i}
\end{align}
defines a unique, strictly stationary solution in $\mc{T}_{\fp}$ to model \eqref{eq:tangAR}. This solution converges strongly,
\begin{align}
\lim_{n \rightarrow \infty} \left\lVert V_t -   \sum_{i = 0}^n \beta^i \epsilon_{t-i} \right\rVert = 0 \,\text{ almost surely,}
\end{align}
and in mean square,
\begin{align}
\lim_{n \rightarrow \infty} \mathbb{E}  \left\lVert V_t -  \sum_{i = 0}^n \beta^i \epsilon_{t-i} \right\rVert^2= 0.
\end{align}

If, in addition, (A3) holds, then the measures $\mu_t = \Exp_{\fp}{(V_t)}$ possess densities that form a strictly stationary sequence $\{f_t, t \in \mathbb{Z}\}$ with common Wasserstein mean $\fp,$ and $V_t = T_t - \id$ almost surely.
\end{thm}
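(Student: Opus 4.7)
The plan is to treat \eqref{eq:tangAR} as a standard AR(1) recursion in the separable Hilbert subspace $\mc{T}_{\fp} \subset L^2(\R,\fp(u)\du)$ and then translate back through the exponential map using (A3). First I would check that the candidate series \eqref{eq:sol} converges in mean square: since the $\epsilon_{t-i}$ are iid, centered, and square integrable in $\mc{T}_{\fp}$, the partial sums $S_n := \sum_{i=0}^n \beta^i \epsilon_{t-i}$ are orthogonal, so $\E\lVert S_n - S_m\rVert^2 = \sum_{i=m+1}^n |\beta|^{2i}\, \E\lVert \epsilon_0\rVert^2$, which is Cauchy by (A1)-(A2). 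Hence $S_n$ converges in $L^2(\Omega; \mc{T}_{\fp})$ to some $V_t$; because $\mc{T}_{\fp}$ is closed in the ambient Hilbert space, the limit lies in $\mc{T}_{\fp}$ almost surely. For almost sure convergence I would use that $\E \sum_{i=0}^\infty |\beta|^i \lVert \epsilon_{t-i}\rVert \le (1-|\beta|)^{-1} (\E\lVert\epsilon_0\rVert^2)^{1/2} < \infty$, so the numerical series $\sum |\beta|^i \lVert \epsilon_{t-i}\rVert$ is a.s. finite, giving a.s. absolute convergence of $S_n$ in norm. Plugging the series into \eqref{eq:tangAR} verifies it is a solution.

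For uniqueness, given any stationary solution $\widetilde V_t$, iterating \eqref{eq:tangAR} gives $\widetilde V_t - V_t = \beta^n(\widetilde V_{t-n} - V_{t-n})$; taking norms and using $|\beta|^n \to 0$ with stationarity of the right-hand side forces $\widetilde V_t = V_t$ in $L^2$. Strict stationarity of $\{V_t\}$ then follows from \eqref{eq:sol}: the map $(x_0,x_1,\ldots) \mapsto \sum_{i\geq 0} \beta^i x_i$ is measurable and applied identically to the shift-invariant iid sequence $(\epsilon_{t-i})_{i\geq 0}$, so the joint laws of $(V_{t_1},\ldots,V_{t_k})$ are translation invariant in $t$.

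For the second half I would invoke (A3). Since $V_t$ is a.s. differentiable with $V_t'>-1$ on $D_\oplus$, the map $V_t + \id$ is a.s. strictly increasing on $D_\oplus$. Therefore the pushforward $\mu_t = (V_t + \id)_\# \mu_{\fp}$ admits a density $f_t$, given explicitly by the change-of-variables formula, and $V_t + \id$ is exactly the unique monotone transport from $\fp$ to $f_t$. By the definition of the logarithmic map this means $T_t = V_t + \id$, i.e.\ $V_t = T_t - \id$ a.s. The Wasserstein mean identity $\wE{f_t} = \fp$ then follows from the characterization recalled after Definition~\ref{d:H-sta}: $\E\{\Log_{\fp}(f_t)(u)\} = \E\{V_t(u)\} = 0$ for a.e.\ $u \in D_\oplus$ by \eqref{eq:sol} and (A2). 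Strict stationarity of $\{f_t\}$ follows because $f_t$ is a fixed measurable function of $V_t$, so the already-established strict stationarity of $\{V_t\}$ transfers to $\{f_t\}$.

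I expect the main obstacle to be the pushforward step under (A3): one has to argue carefully that a.s. differentiability of $V_t$ combined with $V_t'>-1$ genuinely produces a Lebesgue density for $\mu_t$ and that the resulting map $V_t + \id$ \emph{is} the optimal transport (not merely \emph{an} increasing transport), which is the point where the tangent-space representation in \eqref{eq:tang} must be used to identify $V_t + \id$ with $F_t^{-1}\circ F_\oplus$. The Hilbert-space convergence arguments for \eqref{eq:sol} are by contrast routine given that $\mc{T}_{\fp}$ is closed and separable.
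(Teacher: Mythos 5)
Your proof is correct and follows the same overall skeleton as the paper's --- construct the series, establish almost sure and mean-square convergence, verify it solves \eqref{eq:tangAR}, prove uniqueness, then use (A3) to push forward through the exponential map --- but two sub-steps are handled by genuinely different, more elementary, order-one-specific arguments. For mean-square convergence you exploit orthogonality of the iid centered innovations to get the exact identity $\mathbb{E}\lVert S_n - S_m\rVert^2 = \sum_{i=m+1}^{n}\beta^{2i}\,\mathbb{E}\lVert\epsilon_0\rVert^2$, whereas the paper proves Theorem~\ref{t:sol} as the $p=1$ case of Theorem~\ref{t:WARpConv} and bounds the tail by $\left(\sum_{j=m}^n\lvert\psi_j\rvert\right)^2\mathbb{E}\lVert X_0\rVert^2$ via Lemma~\ref{l:BD312}, a Hilbert-space version of Proposition 3.1.2 of Brockwell and Davis that needs only stationarity of the summands, not independence. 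For uniqueness you use the contraction iteration $\widetilde V_t - V_t = \beta^n(\widetilde V_{t-n}-V_{t-n})$ together with boundedness of second moments, while the paper inverts the autoregressive filter, writing $V_t^\star = \psi(B)(\phi(B)V_t^\star) = \psi(B)\epsilon_t$. Your versions are cleaner for AR(1) but do not extend to order $p$ without the filter-composition machinery, which is precisely why the paper routes the proof through the general lemma. The final step --- strict monotonicity of $V_t+\id$ under (A3), the change-of-variables formula for the density of the pushforward, identification of $V_t+\id$ as the optimal transport, and $\mathbb{E}\left[V_t(u)\right]=0$ characterizing $\fp$ as the Wasserstein mean --- coincides with the paper's argument; the worry you flag about identifying $V_t+\id$ with $F_t^{-1}\circ F_\oplus$ is resolved exactly as you suggest, since for univariate measures an increasing transport map is automatically the optimal one, which is what the paper means by appealing to ``the structure of $\mc{T}_{\fp}$.''
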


In light of Theorem~\ref{t:sol}, we define the Wasserstein autoregressive model of order 1, or WAR(1) model, for a density time series $\{f_t, t \in \mathbb{Z}\}$ by
\begin{equation}
    \label{eq:WAR1}
  T_t - \mathrm{id} = \beta (T_{t-1} - \mathrm{id}) + \epsilon_t.
\end{equation}
Under (A1)--(A3), we now know that a unique solution $f_t \ominus \fp = T_t - \id = \sum_{i = 0}^\infty \beta^i \epsilon_{t-i}$ exists such that $\{f_t, t\in \mathbb{Z}\}$ is strictly stationary according to Definition~\ref{d:S-sta}. Since they also share a common Wasserstein mean, the sequence is also stationary according to Definition~\ref{d:sta}.

In order for the results of Theorem~\ref{t:sol} to not be vacuous, we will establish a set of innovation examples that satisfy (A2) and (A3).  Given the structure of the tangent space in \eqref{eq:tang}, consider innovations of the form
$\epsilon_t (u) = \lambda_t(S_t(u) - u),$
where $\lambda_t > 0 $ and $S_t$ is an increasing map defined on $D_\oplus$ (and is thus an optimal transport map from $\fp$ to some $\nu \in \mc{W}_2$).  Both $\lambda_t$ and $S_t$ can be random.  We now list specific examples for which (A2) and (A3) hold, where $|\beta| < 1$ throughout.

\begin{eg}
\label{eg:Const}
Let $\eta_t$ be iid random variables with mean zero and finite variance, and set $S_t(u) = \eta_t\lambda_t^{-1} + u$ so that $\epsilon_t(u) \equiv \eta_t$.
\end{eg}

\begin{eg}
\label{eg:Lin}
Let $\eta_t$ be as in Example \ref{eg:Const}, and $\delta_t$ be iid random variables with mean zero such that $|\delta_t| < \min\{\lambda_t,1 - |\beta|\}.$ Set $S_t(u) = (1 + \delta_t \lambda_t^{-1})u + \eta_t\lambda_t^{-1}$ so that $\epsilon_t(u) = \eta_t + \delta_t u$.
\end{eg}

\begin{eg}
\label{eg:Sin}
Let $\eta_t$ and $\delta_t$ be as in Example~\ref{eg:Lin}, with the additional constraint that the $\delta_t$ be symmetric about 0. Set $S_t(u) = u + \eta_t\lambda_t^{-1} + \lambda_t^{-1}\sin(\delta_t u)$ so that $\epsilon_t(u) = \eta_t + \sin( \delta_t u)$.
\end{eg}

In Examples \ref{eg:Const} -- \ref{eg:Sin}, (A2) is clearly satisfied.  Moreover, we have $\epsilon_t'(u) = 0$, $\epsilon_t'(u) = \delta_t$ and $\epsilon_t'(u) = \delta_t\cos(\delta_t u)$, respectively in each example. Thus, $\sup_{u \in D_\oplus} |\epsilon_t'(u)| \leq 1 - |\beta|$, so that differentiation and summation can be interchanged, yielding
\[
T_t'(u) - 1 = \sum_{i=0}^\infty \beta^i \epsilon'_{t-i}(u)\geq - \sum_{i=0}^\infty \abs{\beta}^i \sup_{u \in \R} \lvert \epsilon_{t-i}'(u)\rvert > (\abs{\beta} - 1) \sum_{i=0}^\infty \abs{\beta}^i = -1.
\]
These examples establish one way to validate the WAR(1) model, namely by imposing a deterministic bound on the supremum of the derivative $\epsilon_t'$ that is related to $\beta$.  In general, (A3) may be considered a compatibility restriction between the innovation sequence and the autoregressive parameter.

Next, we express the autoregressive coefficient $\beta$ in terms of the autocovariance functions $\gamma_h$ defined in \eqref{eq:wAcvf_u}.  Following the derivation of the Yule-Walker equations, it can be shown that
\begin{equation}
    \label{eq:beta}
\beta  =  \frac{\int_\R \gamma_1(u,u) f_\oplus(u) \du}{ \int_\R \gamma_0(u,u) f_\oplus(u) \du}.
\end{equation}
The denominator is recognizable as the usual Wasserstein variance of each $f_t,$ while the numerator corresponds to the lag-1 scalar measure of Wasserstein covariance defined in \cite{petersen2019wasserstein}.  Thus, $\beta$ can be interpreted as a lag-1 Wasserstein autocorrelation measure. This characterization of $\beta$ thus resembles the autocorrelation function of an AR(1) scalar time series.

\subsubsection{Estimation and Forecasting}
For any integer $h \geq 0$, define the lag-$h$ Wasserstein autocorrelation function by
\begin{equation}
\label{eq:acf}
    \rho_h = \frac{\int_\R \gamma_h(u,u) f_\oplus(u)\du}{\int_\R \gamma_0(u,u) f_\oplus(u)\du} = \frac{\int_\R \eta_h(u) \fp(u) \du}{\int_\R \eta_0(u)\fp(u) \du}, \quad \eta_h(u) = \gamma_h(u,u).
\end{equation}
For each fixed $u$, $\eta_h(u)$ is the autocovariance function of the scalar time series $\{T_t(u), t \in \Z\}$.  First, we estimate the Wasserstein mean by
\begin{equation}
\label{eq:wmeanEst}
    \hfp(u) = \hFp'(u), \quad \hFp = \left(\frac{1}{n}\sum_{t = 1}^n Q_t\right)\inv.
\end{equation}
Defining $\widehat{T}_t = Q_t \circ \hFp$, the estimators for $\rho_h$ and $\eta_h,$ $h \in \{0,1,\dots, n-1\},$ are
\begin{equation}
    \label{eq:acfH}
    \hat{\rho}_h = \frac{\int_\mathbb{R} \hat{\eta}_h(u)\hfp(u) \du}{\int_\mathbb{R} \hat{\eta}_0(u) \hfp(u) \du}, \quad \hat{\eta}_h(u) = \frac{1}{n} \sum_{t = 1}^{n-h} \left\{\widehat{T}_t(u) - u\right\}\left\{\widehat{T}_{t+h}(u) - u\right\}.
\end{equation}
Then the natural estimator  for $\beta$ in (\ref{eq:WAR1}) is
\begin{align}
\hat{\beta} = \hat{\rho}_1. 
\end{align}
In order to establish asymptotic normality of the above estimators, we require 
\begin{enumerate}
    \item[(A4)] The innovations $\epsilon_t$ satisfy $\int_\R \E{\epsilon_t^4(u)}\fp(u) \du < \infty.$
\end{enumerate}
The following result is a special case of Theorem~\ref{t:asymWARp} in Section~\ref{ssec:WARp}; the proof of the more general result can be found in the \hyperref[supplement]{Supplementary Material}.

\begin{thm}
\label{t:asym}
Suppose (A1)--(A4) hold. Then
\[
\begin{aligned}
\label{eq: AsympVar}
n^{1/2}\left( \hat{\beta} - \beta \right) &\overset{D}{\rightarrow} \mathbf{N}\left(0, \sigma^2_\epsilon(1-\beta^2)\right),
\end{aligned}
\]
where
\begin{equation}
    \label{eq:asymSig}
    \sigma^2_\epsilon = \frac{\int_{\R^2} C_\epsilon^2(u,v)\fp(u)\fp(v)\du \dv}{\left[\int_\R C_\epsilon(u,u) \fp(u) \du\right]^2}
\end{equation}
is finite due to (A4). 
\end{thm}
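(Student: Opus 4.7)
\textbf{Proof plan for Theorem~\ref{t:asym}.} The plan is to (i) pass to quantile coordinates to transport the problem to the separable Hilbert space $L^2([0,1],\d s)$, where $X_t(s):=Q_t(s)-Q_\oplus(s)$ satisfies the standard AR(1) recursion $X_t=\beta X_{t-1}+\tilde\epsilon_t$ with $\tilde\epsilon_t=\epsilon_t\circ Q_\oplus$; (ii) exploit the Yule-Walker identity \eqref{eq:beta} to rewrite the estimator error as a ratio; and (iii) identify the leading term as a scalar martingale-difference sum and apply a CLT. First I would verify that the change of variable $s=\hat F_\oplus(u)$ converts $\int\hat\eta_h(u)\hat f_\oplus(u)\du$ into $n^{-1}\sum_t\int_0^1(Q_t(s)-\hat Q_\oplus(s))(Q_{t+h}(s)-\hat Q_\oplus(s))\ds$, and analogously for the population quantities. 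Writing $\hat\Gamma_h$ and $\Gamma_h$ for the resulting sample and population scalar covariances, and using $\Gamma_1=\beta\Gamma_0$, the decomposition
\[
n^{1/2}(\hat\beta-\beta)=\frac{n^{1/2}(\hat\Gamma_1-\beta\hat\Gamma_0)}{\hat\Gamma_0}
\]
reduces the problem to proving a CLT for the numerator and a law of large numbers for the denominator.

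For the denominator, $\hat\Gamma_0$ converges in probability to $\Gamma_0=(1-\beta^2)^{-1}\int_0^1 \tilde C_\epsilon(s,s)\ds$ by the ergodic theorem applied to the stationary linear process $X_t$ (existence and strong convergence of the MA($\infty$) representation is supplied by Theorem~\ref{t:sol}). For the numerator, I would expand
\[
\hat\Gamma_1-\beta\hat\Gamma_0=\frac1n\sum_{t=1}^{n-1}\langle X_t,X_{t+1}-\beta X_t\rangle+R_n=\frac1n\sum_{t=1}^{n-1}\langle X_t,\tilde\epsilon_{t+1}\rangle+R_n,
\]
where $R_n$ collects the boundary term at $t=n$ and the remainder caused by replacing $Q_\oplus$ with $\hat Q_\oplus$. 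Standard algebra shows the latter error is of the form $\langle \hat Q_\oplus-Q_\oplus,\,a_n\rangle+\|\hat Q_\oplus-Q_\oplus\|^2\cdot O(1)$; since $\|\hat Q_\oplus-Q_\oplus\|=O_p(n^{-1/2})$ by the Hilbert-space CLT for $\bar X=n^{-1}\sum X_t$ and the factor $a_n$ is $O_p(n^{-1/2})$ by the same reasoning applied to the residual mean, the total remainder is $O_p(n^{-1})=o_p(n^{-1/2})$.

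The leading term $n^{-1/2}\sum_{t=1}^{n-1}\langle X_t,\tilde\epsilon_{t+1}\rangle$ is a sum of scalar martingale differences with respect to $\mathcal F_t=\sigma(\tilde\epsilon_s:s\le t)$, since $X_t$ is $\mathcal F_t$-measurable and $\tilde\epsilon_{t+1}$ is independent of $\mathcal F_t$ with mean zero. The conditional variance is
\[
\mathbb E\langle X_t,\tilde\epsilon_{t+1}\rangle^2=\iint \mathbb E[X_t(s)X_t(s')]\,\tilde C_\epsilon(s,s')\,\ds\,\d s'=(1-\beta^2)^{-1}\iint \tilde C_\epsilon^2(s,s')\,\ds\,\d s',
\]
using the MA($\infty$) form $X_t=\sum_{i\ge0}\beta^i\tilde\epsilon_{t-i}$ to evaluate $\mathbb E[X_t(s)X_t(s')]$. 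Condition (A4) supplies the fourth-moment bound needed to verify the Lindeberg condition and the convergence of the quadratic variation, so Billingsley's martingale CLT applies. Combining the pieces via Slutsky's theorem and dividing by $\Gamma_0^2=(1-\beta^2)^{-2}[\int\tilde C_\epsilon(s,s)\ds]^{2}$ gives the claimed asymptotic variance $\sigma_\epsilon^2(1-\beta^2)$.

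The main obstacle I anticipate is the bookkeeping in step (ii), especially controlling the plug-in error from using $\hat Q_\oplus$ rather than $Q_\oplus$: one must be careful that the combined effect on the products $(Q_t-\hat Q_\oplus)(Q_{t+1}-\hat Q_\oplus)$, after summation, telescopes to order $n^{-1}$ rather than $n^{-1/2}$. Everything else — the ergodic theorem for $\hat\Gamma_0$, the martingale CLT, and the variance calculation — is essentially mechanical once the problem has been lifted to quantile coordinates.
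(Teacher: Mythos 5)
Your proposal is correct, and its skeleton coincides with the paper's: the paper also passes to quantile coordinates $X_t(s)=Q_t(s)-Q_\oplus(s)$, introduces an infeasible (population-centered) estimator whose error is exactly the ratio $\bigl(n^{-1}\sum_t\langle X_{t-1},\varepsilon_t\rangle\bigr)\big/\bigl(n^{-1}\sum_t\|X_{t-1}\|^2\bigr)$ for $p=1$, and disposes of the $\hat Q_\oplus$-versus-$Q_\oplus$ plug-in error exactly as you do, by showing $n^{1/2}\int_0^1 \mathbb{E}\bar X^2(s)\,\ds\to 0$ so that the replacement costs $O_p(n^{-1})=o_p(n^{-1/2})$ (their Lemma~\ref{l:HS_op1}); your telescoping worry is legitimate but resolves the same way. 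Where you genuinely diverge is the central limit step for $n^{-1/2}\sum_t\langle X_{t-1},\varepsilon_t\rangle$: the paper truncates $X_t$ to a finite moving average $X_t^m$, applies the CLT for $(m+p)$-dependent sequences together with the Cram\'er--Wold device, and passes to the limit via the standard truncation argument (Proposition~6.3.9 of \cite{brockwell1991time}), whereas you observe that $\langle X_{t-1},\tilde\epsilon_t\rangle$ is a stationary ergodic martingale difference sequence and invoke the martingale CLT. Both are valid; your route is arguably leaner for the numerator (Billingsley's stationary-ergodic MDS CLT needs only the finite second moment $\mathbb{E}\langle X_0,\tilde\epsilon_1\rangle^2=(1-\beta^2)^{-1}\iint\tilde C_\epsilon^2<\infty$, which already follows from (A2) by Cauchy--Schwarz, so no Lindeberg verification or fourth moments are required there), and your ergodic-theorem argument for $\hat\Gamma_0\overset{P}{\to}\Gamma_0$ likewise sidesteps the paper's variance computation (Lemma~\ref{l:CvgCMat}), which is where (A4) enters their argument. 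The paper's $m$-dependent machinery buys uniformity with the general WAR($p$) case (Theorem~\ref{t:asymWARp}), of which Theorem~\ref{t:asym} is stated as a corollary, but your martingale argument extends to $p>1$ via Cram\'er--Wold with no extra difficulty. Your variance bookkeeping checks out: $(1-\beta^2)^{-1}\iint\tilde C_\epsilon^2$ divided by $\Gamma_0^2=(1-\beta^2)^{-2}\bigl[\int\tilde C_\epsilon(s,s)\,\ds\bigr]^2$ gives $\sigma_\epsilon^2(1-\beta^2)$, matching $\Sigma_{11}=\sigma^2_\epsilon\bigl(\sum_k\psi_k^2\bigr)^{-1}$ in the general theorem.
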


With a consistent estimator of $\beta$ in hand, we proceed to define a one-step ahead forecast.  Given observations $f_1,\ldots,f_n,$ we first obtain $\hat{\beta}$ and compute the measure forecast
$$
\hat{\mu}_{n+1} = \Exp_{\hfp}(\widehat{V}_{n+1}),\quad \widehat{V}_{n+1} = \hat{\beta}(\widehat{T}_n - \id),
$$
where $\widehat{T}_n = Q_n \circ \hFp.$ 
It remains to convert this measure-valued forecast into a density function.  Observe that one can always compute the cdf forecast
\begin{equation}
\label{eq:fcst}
\begin{aligned}
\widehat{F}_{n+1}(u) &= \int_\R \mathbf{1}\left(\widehat{V}_{n+1}(v) + v \leq u\right) \hfp(v) \dv \\
&= \int_0^1 \mathbf{1}\left(\hat{\beta}Q_n(s) + (1 - \hat{\beta})\hQp(s) \leq u\right) \ds,
\end{aligned}
\end{equation}
where the second line follows from the change of variable $s = \hFp(u).$  The cdf forecast can then be converted into a density numerically.  The same procedure can be followed to produce further forecasts $\hat{f}_{n+l}$, $l \geq 2$, by using the previous forecast $\hat{f}_{n+l-1}.$  Assume we observe $n$ densities $f_1, f_2, \dots, f_n$.  The numerical implementation of our forecasting procedure is summarized in Algorithm~\ref{alg:fcst}, which uses the equivalent representation of $\hat{\beta}$ obtained through the change of variable $s = \hFp(u)$ as
\begin{equation}
    \label{eq:betaHequiv}
\hat{\beta} = \frac{\int_0^1 \hat{\lambda}_1(s) \ds}{\int_0^1 \hat{\lambda}_0(s) \ds}, \quad \hat{\lambda}_h(s) = \hat{\eta}_h(\hQp(s)) = \frac{1}{n}\sum_{t = 1}^{n-h}(Q_t(s) - \hQp(s))(Q_{t+h}(s) - \hQp(s))
\end{equation}

\begin{algorithm}[H]
\label{alg:fcst}
\small
\SetAlgoLined
 \textbf{Input:} densities ${{f_t, t=1,2,\dots,n}}$, density grid {dSup}, quantile grid {QSup}\;
 \tcc{Quantities in steps 2--6 are evaluated on {QSup}}
 Evaluate quantiles ${Q_1, Q_2, \dots, Q_n}$\;%
 ${\hQp(s)} \leftarrow {n^{-1} \sum_{t=1}^n Q_t(s)}$\;
 ${\hat{\lambda}_h(s) \leftarrow n^{-1}\sum_{t = 1}^{n-h} (Q_t(s) - \hQp(s))(Q_{t+h}(s) - \hQp(s))}$, $h = 0,1$\;
 $\hat{\beta}  \leftarrow {\int_0^1 \hat{\lambda}_1(s) \ds / \int_0^1 \hat{\lambda}_0(s) \ds}$\;
 ${\widehat{V}_{n+1}(\widehat{Q}_\oplus(s)) \leftarrow}  \hat{\beta}{(Q_n(s) - \hQp(s))}$ \;
 \tcc{Quantities in steps 7--9 are evaluated on {dSup}}
 Compute ${ \{ [a_i, b_i] \} \leftarrow \left\{ s \in \text{{QSup}} : \widehat{V}_{n+1}(\widehat{Q}_\oplus(s)) + \hQp(s)) \leq u \right\}}$, $[a_i, b_i] \cap [a_j, b_j] = \emptyset$ for $i \neq j$\;
${\widehat{F}(u)_{n+1}\leftarrow\sum_i (\hFp(b_i) - \hFp(a_i))}$\;
${\hat{f}(u)_{n+1}\leftarrow\widehat{F}'(u)_{n+1}}$
 \caption{Forecasting ${\hat{f}_{n+1}}$}
\end{algorithm}

\subsection{Wasserstein AR Model of Order $p$}
\label{ssec:WARp}

A natural way to extend the WAR(1) model is to develop a Wasserstein autoregressive model of order $p\ge 1$ defined by
\begin{equation}
    \label{eq:WARp}
    T_t - \id = \sum_{j=1}^p \beta_j (T_{t-j}  - \id) + \epsilon_t,
\end{equation}
where $\beta_j \in \R, j = 1,2,\dots,p$, and the $\epsilon_t\in {\mathcal T}_{f_\oplus}$ are again iid with mean $0$ and satisfy (A2).  Define the autoregressive polynomial
\[
\phi(z)  = 1 - \beta_1 z - \beta_2 z^2 - \dots - \beta_p z^p, \ \ \
z \in \mathbb{C}.
 \]
The WAR($p$) model in (\ref{eq:WARp}) can then  be written as
\begin{equation}
    \label{eq:WARpAlt}
    \phi(B) \left( T_t - \id \right) = \epsilon_t,
\end{equation}
where $B$ is the backward shift operator, i.e., for a discrete stochastic process $\{X_t, t\in \Z\}$, $B^i X_t = X_{t-i}$, $i \in \Z$.  For the WAR($p$) to have a causal solution, we make the following assumption as a generalization of (A1) in Section~\ref{ssec:WAR1}.
\begin{enumerate}
    \item [(A1')] The autoregressive polynomial $\phi(z) = 1 - \beta_1 z - \beta_2 z^2 - \dots - \beta_p z^p $ has no root in the unit disk $\left\{ z: \abs{z} \leq 1 \right\}$.
\end{enumerate}
Under (A1'), $\frac{1}{\phi(z)} = \sum_{i=0}^\infty \psi_i z^i$, and the sequence $\{\psi_i\}_{i=0}^\infty$ satisfies $\sum_{i=0}^\infty \abs{\psi_i} < \infty$.
We will show that the solution to equations \eqref{eq:WARpAlt} can be written as
\begin{equation}
    \label{eq:WARpSol}
    T_t - \id   = \sum_{i=0}^\infty \psi_i \epsilon_{t-i}.
\end{equation}

Observe \eqref{eq:WARpSol} is a strictly stationary and causal process.  Similarly  to the development of the WAR(1) model, $\{T_t - \id\}$ in {\eqref{eq:WARp}} should be understood at this point as a general zero mean autoregressive process of order $p$ in $\mc{T}_{\fp}$.  As shown below, (A1') and (A2) together imply the existence of a unique, suitably convergent, solution $T_t - \id = \sum_{i = 0}^\infty \psi_i \epsilon_{t-i}(u)$ that is stationary in $\mc{T}_{\fp}$ according to Definition~\ref{d:H-sta}.  Once again, (A3) applied to $V_t = T_t - \id$ ensures that the application of the exponential map to $T_t - \id$ produces a stationary density time series with mean $\fp$, as seen in the Theorem~\ref{t:WARpConv} below.
We also remark that Examples~\ref{eg:Const}--\ref{eg:Sin} can be modified directly to guarantee the viability of the WAR($p$) model; essentially $1 - |\beta|$ must be replaced with $ \left(\sum_{i=0}^\infty \abs{\psi_i} \right)^{-1}$.

\begin{thm}
\label{t:WARpConv}
The following claims hold under Assumptions (A1') and (A2).

 (i)  The series
  \eqref{eq:WARpSol} is a strictly stationary solution in $\mc{T}_{\fp}$ to the WAR($p$) equation \eqref{eq:WARp}. This solution converges almost surely and in mean square, i.e.,
\begin{equation}
    \lim_{n \rightarrow \infty} \left\lVert T_t - \id  - \sum_{i=0}^n \psi_i \epsilon_{t-i} \right\rVert = 0 \quad a.s.,
\end{equation}
and
\begin{equation}
    \lim_{n \rightarrow \infty} \mathbb{E} \left\lVert T_t - \id  - \sum_{i=0}^n \psi_i \epsilon_{t-i} \right\rVert^2 = 0.
\end{equation}

(ii) There is no other stationary solution (according to Definition~\ref{d:H-sta}) in $\mc{T}_{\fp}$.

(iii) If, in addition, Assumption (A3) holds for $V_t = T_t - \id$, then $T_t$ is strictly increasing, almost surely,  and
the measures $\Exp_{\fp}{(T_t - \id)}$ possess densities $f_t$ that form a strictly stationary sequence according to Definition~\ref{d:sta} with common Wasserstein mean $\fp$.
\end{thm}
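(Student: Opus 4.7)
The plan is to prove the three claims in sequence, mirroring the classical causal-AR construction but carried out in the Hilbert space $\mc{T}_{\fp}$. The main technical hurdle is part (i), where one must justify that the formal series $\sum_{i=0}^\infty \psi_i \epsilon_{t-i}$ defines a bona fide element of $\mc{T}_{\fp}$ and satisfies $\phi(B) V_t = \epsilon_t$ as an actual identity in the tangent space (not merely a power-series manipulation). The uniqueness in (ii) then reduces to a spectral bound on a companion operator, and (iii) follows from a change-of-variables argument once (A3) is invoked.

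\textbf{Part (i).} Under (A2) the innovations are iid zero-mean elements of $\mc{T}_{\fp}$ with second moment $\sigma^2 := \E{\norm{\epsilon_t}^2} = \int_\R C_\epsilon(u,u)\fp(u)\du < \infty$. Orthogonality of independent centered Hilbert-valued summands gives $\E{\norm{\sum_{i=m+1}^n \psi_i \epsilon_{t-i}}^2} = \sigma^2 \sum_{i=m+1}^n \psi_i^2$, which tends to zero because $\sum_i \psi_i^2 \leq (\sum_i \abs{\psi_i})^2 < \infty$ by (A1'); this produces the mean-square limit $V_t$, which lies in the closed subspace $\mc{T}_{\fp}$. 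Monotone convergence yields $\E{\sum_i \abs{\psi_i}\norm{\epsilon_{t-i}}} \leq \sigma \sum_i \abs{\psi_i} < \infty$, so the series also converges absolutely in norm almost surely. To verify $\phi(B) V_t = \epsilon_t$, I would apply the coefficient identity from $\phi(z)(1/\phi(z)) = 1$, interchanging the finite sum defining $\phi(B)$ with the infinite $\psi$-series via Fubini on the double series $\sum_{i,j} \beta_j \psi_i \epsilon_{t-i-j}$, justified by absolute summability of $\{\psi_i\}$ together with the finite second moment of $\epsilon_t$. Strict stationarity is then immediate since $V_t$ is the same measurable functional of the shift-invariant iid family $(\epsilon_{t-i})_{i \geq 0}$.

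\textbf{Part (ii).} Let $\tilde{V}_t$ be any other solution stationary in the sense of Definition~\ref{d:H-sta}, and set $W_t = V_t - \tilde{V}_t$, so that $\phi(B) W_t = 0$ and $\sup_t \E{\norm{W_t}^2} < \infty$ by the triangle inequality. Stacking into companion form, $\mathbf{W}_t = (W_t, W_{t-1}, \ldots, W_{t-p+1})^\top \in \mc{T}_{\fp}^p$ satisfies $\mathbf{W}_t = A \mathbf{W}_{t-1}$, where $A$ is the usual companion matrix. Its eigenvalues are the reciprocals of the roots of $\phi$, which under (A1') all lie strictly outside the unit disk, so $A$ has spectral radius less than one and $\norm{A^n} \to 0$ by Gelfand's formula. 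Iterating backward gives $\mathbf{W}_t = A^n \mathbf{W}_{t-n}$, whence $\E{\norm{\mathbf{W}_t}^2} \leq \norm{A^n}^2 \sup_s \E{\norm{\mathbf{W}_s}^2} \to 0$, forcing $W_t = 0$ almost surely.

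\textbf{Part (iii).} Under (A3), $T_t = V_t + \id$ is almost surely differentiable with $T_t'(u) = V_t'(u) + 1 > 0$ on $D_\oplus$, hence strictly increasing. So $T_t$ is a valid optimal transport map, and by the change-of-variables formula the pushforward $(T_t)_\# \mu_{\fp} = \Exp_{\fp}(V_t)$ admits the density $f_t(y) = \fp(T_t\inv(y))/T_t'(T_t\inv(y))$; correspondingly $T_t = F_t\inv \circ \Fp$ and $V_t = \Log_{\fp}(f_t)$. Strict stationarity of $\{f_t\}$ transfers from that of $\{V_t\}$ because each $f_t$ is obtained by a fixed deterministic mapping of $V_t$ and the reference $\fp$. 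The common Wasserstein mean is verified via the characterization $\E{\Log_{\fp}(f_t)} = \E{V_t} = 0$ noted in Section~\ref{ssec:sta}, and the remaining requirements of Definition~\ref{d:sta} (finite Wasserstein variance and shift-invariant covariance kernel) follow from the finite second moment of $V_t$ and the strict stationarity of $\{V_t\}$, respectively.
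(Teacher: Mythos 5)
Your proposal is correct, and parts (i) and (iii) follow essentially the paper's own path: almost-sure absolute convergence of the series via monotone convergence applied to $\sum_i \abs{\psi_i}\,\lVert\epsilon_{t-i}\rVert$, verification of $\phi(B)V_t=\epsilon_t$ through the coefficient identity $\phi(z)\psi(z)=1$ with a justified interchange of summation, and, for (iii), the change-of-variables identification of the pushforward density and of $V_t$ with $\Log_{\fp}(f_t)$, followed by the observation that $\E{T_t(u)}=u$ pins down $\fp$ as the common Wasserstein mean. (Your use of orthogonality of the independent centered summands to get the mean-square Cauchy property is a minor variation; the paper instead invokes a general filter lemma that only uses $\sum_i\abs{\psi_i}<\infty$, but both are valid.) Part (ii) is where you genuinely diverge. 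The paper applies the inverse filter $\psi(B)$ to both sides of $\phi(B)V_t^\star=\epsilon_t$ and relies on a filter-composition lemma (associativity of absolutely summable filters acting on stationary Hilbert-space-valued sequences, an extension of Proposition 3.1.2 of Brockwell and Davis) to conclude $V_t^\star=\psi(B)\bigl(\phi(B)V_t^\star\bigr)=\psi(B)\epsilon_t=V_t$. You instead pass to the homogeneous difference $W_t=V_t-\tilde V_t$, stack into companion form, and use that the companion matrix has spectral radius strictly less than one under (A1'), together with the uniform second-moment bound supplied by stationarity of both solutions, to force $\E{\lVert W_t\rVert^2}\to 0$ and hence $W_t=0$ almost surely. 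Your route is more elementary and self-contained in that it avoids the limit interchange underlying filter composition, needing only Gelfand's formula and the triangle inequality; the paper's route has the advantage that the composition lemma is already established (it is also what delivers mean-square convergence in part (i)), so uniqueness follows in a single line once that machinery is in place.
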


Questions of the existence and uniqueness of solutions to ARMA
equations are not obvious beyond the  setting of scalar innovations, even
though care must be exercised even in that standard case, as explained in Chapter 3 of \cite{brockwell1991time}. In the multivariate case, conditions on the spectral decomposition of the autoregressive matrices are needed, see  \cite{brockwell:lindrer:2010} and
\cite{brockwell:lindrer:bollenbroker:2012} whose results were extended to Banach spaces by \cite{spangenberg:2013}. Simpler sufficient conditions in Hilbert spaces are given in \cite{bosq:2000} (AR($p$) case) and  \cite{kelpsch:kluppelberg:wei:2017} (ARMA($p, q$) case).
In our setting, the coefficients are scalars, but the innovations must conform to a nonlinear
functional structure, so our conditions involve an interplay between  the structure of the functional noise and the coefficients.
The  fully functional WAR($1$) considered in \citep{chenWassReg}
is also constructed in the tangent space, so it is also subject to 
similar constraints as our WAR($p$) model, namely that the solution must be restricted to image of the logarithmic map with probability one. We 
have addressed it through our assumption (A3) and suitable examples 
or error sequences. Assumption (B2) in \citep{chenWassReg} is general, 
and it is, at this point, unclear whether concrete examples of innovations can be established that satisfy it  for fully functional WAR  models.

\subsubsection{Estimation and Forecasting}
Recall $\hfp,$ $\eta_h$ and $\hat{\eta}_h$ as defined in \eqref{eq:wmeanEst}, \eqref{eq:acf} and \eqref{eq:acfH}, respectively.  Set $\{\mathbf{H}_p(u)\}_{jk} = \eta_{|j-k|}(u),$ $j,k = 1,\ldots,p,$ $\bm{\beta} = (\beta_1,\ldots,\beta_p)^\top,$ and $\bm{\eta}_p(u) = (\eta_1(u),\ldots,\eta_p(u))^\top.$
Following the derivation of  the Yule-Walker equations, we obtain
$
\mathbf{H}_p(u)\bm{\beta} = \bm{\eta}_p(u)
$
as a characterization of the autoregressive parameters of the WAR($p$) model, whence
\begin{equation}
	\label{eq:Beta}
\bm{\beta} = \left(\int_\R \mathbf{H}_p(u) \fp(u) \du\right)^{-1}\int_\R \bm{\eta}_p(u) \fp(u) \du,
\end{equation}
where the integrals are taken element-wise.  Plugging in our estimators $\hat{\eta}_h(u)$ to obtain $\widehat{\mathbf{H}}_p(u)$ leads to
\begin{equation}
	\label{eq:BetaH}
\widehat{\bm{\beta}} = \left(\int_\R \widehat{\mathbf{H}}_p(u) \hfp(u) \du\right)^{-1} \int_\R\hat{\bm{\eta}}_p(u)\hfp(u) \du.
\end{equation}

The following theorem establishes the asymptotic normality of the estimator \eqref{eq:BetaH}.

\begin{thm}
\label{t:asymWARp}
Suppose (A1'), (A2), (A3), and (A4) hold. Then
\begin{equation}
\label{eq:asymWARp}
n^{1/2} (\widehat{\bm{\beta}} - \bm{\beta}) \overset{D}{\rightarrow} \mathbf{N} \left(0, \bm{\Sigma} \right),
\end{equation}
where $\Sigma_{ij} = \sigma^2_\epsilon \left(\sum_k \psi_k\psi_{k + |i-j|}\right)^{-1},$ $i,j = 1,\ldots,p,$ and $\sigma^2_\epsilon$ is the same as \eqref{eq:asymSig} in Theorem~\ref{t:asym}.

\end{thm}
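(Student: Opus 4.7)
The plan is to reduce the asymptotic analysis of $\widehat{\bm{\beta}}$ to a standard martingale central limit theorem, exploiting the causal MA($\infty$) expansion $T_t - \id = \sum_{i \geq 0} \psi_i \epsilon_{t-i}$ from Theorem~\ref{t:WARpConv}. Writing $X_t = T_t - \id$ and $\langle \cdot, \cdot\rangle$ for the $L^2(\R, \fp(u)\du)$ inner product, I will use the algebraic identity
$$
n^{1/2}(\widehat{\bm{\beta}} - \bm{\beta}) = \widehat{\mathbf{A}}^{-1} \cdot n^{1/2}(\widehat{\mathbf{b}} - \widehat{\mathbf{A}}\bm{\beta}),
$$
where $\mathbf{A} = \int_\R \mathbf{H}_p(u) \fp(u) \du$ and $\mathbf{b} = \int_\R \bm{\eta}_p(u)\fp(u)\du$, with plug-in versions $\widehat{\mathbf{A}}, \widehat{\mathbf{b}}$. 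A Slutsky argument then reduces the theorem to (i) consistency $\widehat{\mathbf{A}} \overset{P}{\to} \mathbf{A}$ and (ii) an asymptotic normality statement for $n^{1/2}(\widehat{\mathbf{b}} - \widehat{\mathbf{A}}\bm{\beta})$. Step (i) follows from ergodicity of $\{X_t\}$ (inherited from i.i.d.\ innovations via the MA($\infty$) representation) together with $\hQp - \Qp = O_p(n^{-1/2})$ in $L^2[0,1]$.

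Step (ii) is the crux. I aim to prove, component-wise,
$$
n^{1/2}(\widehat{\mathbf{b}} - \widehat{\mathbf{A}}\bm{\beta})_k = n^{-1/2}\sum_{t=1}^n \langle \epsilon_t, X_{t-k}\rangle + o_p(1), \qquad k=1,\ldots,p.
$$
To derive it, re-express all quantities in $s$-coordinates using the change of variable $s = \hFp(u)$ as in \eqref{eq:betaHequiv}, then expand the product $(Q_t(s) - \hQp(s))(Q_{t+h}(s) - \hQp(s))$ around $(Q_t(s) - \Qp(s))(Q_{t+h}(s) - \Qp(s))$. The algebraic identity $\sum_{t=1}^n (Q_t(s) - \Qp(s)) = n(\hQp(s) - \Qp(s))$ causes the cross terms in $\hQp - \Qp$ to collapse to $O_p(n^{-1})$ contributions. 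Substituting the pointwise recursion $Q_t(s) - \Qp(s) = \sum_j \beta_j(Q_{t-j}(s) - \Qp(s)) + \epsilon_t(\Qp(s))$ and integrating over $s$ yields the claimed representation, modulo $O_p(n^{-1})$ boundary effects from the summation bounds $t \leq n - h$.

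Next, the vector $\mathbf{Z}_t = (\langle \epsilon_t, X_{t-1}\rangle, \ldots, \langle \epsilon_t, X_{t-p}\rangle)^\top$ is a stationary, ergodic martingale difference sequence relative to $\mathcal{F}_t = \sigma(\epsilon_s : s \leq t)$, because $X_{t-k}$ is $\mathcal{F}_{t-1}$-measurable and $\epsilon_t$ is independent of $\mathcal{F}_{t-1}$ with mean zero. A direct Fubini computation based on the MA($\infty$) expansion yields the separable covariance $\E{X_s(u)X_t(v)} = \bigl(\sum_{k\geq 0}\psi_k\psi_{k+|s-t|}\bigr)C_\epsilon(u,v)$, from which
$$
\E{\mathbf{Z}_t^{(i)}\mathbf{Z}_t^{(j)}} = G_{ij}\int_{\R^2} C_\epsilon^2(u,v)\fp(u)\fp(v)\du\dv, \qquad G_{ij} := \sum_{k\geq 0}\psi_k\psi_{k+|i-j|};
$$
assumption (A4) makes this finite and supplies the Lindeberg condition. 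Hence $n^{-1/2}\sum_t \mathbf{Z}_t \overset{D}{\to} \mathbf{N}(0, \tau^2\mathbf{G})$ with $\tau^2 := \int_{\R^2} C_\epsilon^2(u,v)\fp(u)\fp(v)\du\dv$. Since $\mathbf{A} = \sigma_0^2\mathbf{G}$ with $\sigma_0^2 = \int_\R C_\epsilon(u,u)\fp(u)\du$, Slutsky delivers the asymptotic covariance $\mathbf{A}^{-1}(\tau^2\mathbf{G})\mathbf{A}^{-1} = (\tau^2/\sigma_0^4)\mathbf{G}^{-1} = \sigma_\epsilon^2\mathbf{G}^{-1} = \bm{\Sigma}$, matching the claimed form.

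The main obstacle is the remainder analysis underlying Step (ii): unlike classical scalar Yule-Walker theory, where the mean is known to be zero, here $\hQp$ enters the autocovariance estimators nonlinearly and one must show its $O_p(n^{-1/2})$ fluctuations do not contaminate the leading $n^{-1/2}$-scale martingale term. The structural feature that saves the day is that $\hQp$ is itself the sample mean $n^{-1}\sum_t Q_t$, so its deviations can be telescoped through the Yule-Walker system; meanwhile (A1') together with $\sum_i |\psi_i| < \infty$ permits uniform truncation of the MA expansions. The bookkeeping requires care to track both the plug-in cancellations and the truncation-boundary edge effects, but ultimately it demands no machinery beyond ergodicity, the moment bounds from (A4), and $\ell^1$-summability of $\{\psi_i\}$.
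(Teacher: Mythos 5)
Your proposal is correct, and it reaches the theorem by a genuinely different route at the central limit step. The paper's proof has the same skeleton as yours: it introduces an oracle least-squares estimator $\bm{\beta}^{*}$ built from the unobservable $X_t(s)=Q_t(s)-\Qp(s)$, proves $n^{1/2}(\bm{\beta}^{*}-\bm{\beta})\overset{D}{\rightarrow}\mathbf{N}(0,\bm{\Sigma})$, and then shows $n^{1/2}(\widehat{\bm{\beta}}-\bm{\beta}^{*})=o_P(1)$ using exactly the cancellation you identify, namely that $\hQp-\Qp=\bar X=n^{-1}\sum_t X_t$ enters the autocovariance estimators only through $\int_0^1 \bar X^2(s)\,\ds=O_P(n^{-1})$ plus boundary terms (Lemma~\ref{l:HS_op1}). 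The divergence is in how the score $n^{-1/2}\sum_{t=1}^n\int_0^1(X_{t-1}(s),\dots,X_{t-p}(s))^\intercal\varepsilon_t(s)\,\ds$ is shown to be asymptotically normal: the paper truncates the MA$(\infty)$ expansion to obtain an $(m+p)$-dependent process, applies the CLT for $m$-dependent sequences, and lets $m\to\infty$ via Proposition~6.3.9 of Brockwell and Davis together with the Cram\'er--Wold device (Lemma~\ref{l:asymSB}); you instead note that this score is a stationary, ergodic, square-integrable martingale difference sequence with respect to $\sigma(\epsilon_s:s\le t)$ and invoke the martingale CLT directly. Your route is shorter, needs causality only for the martingale property and for the second-moment identity (which coincides with the paper's computation in \eqref{eq:EU}), and your appeal to Birkhoff's theorem for the denominator is more economical than the paper's variance bound; the paper's $m$-dependent route is more classical and reuses the fourth-moment computations of Lemma~\ref{l:CvgCMat} that it needs anyway for Theorem~\ref{t:asymAcf}. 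One point worth flagging: your limiting covariance is $\sigma^2_\epsilon\mathbf{G}^{-1}$ with $\mathbf{G}$ the matrix whose $(i,j)$ entry is $\sum_k\psi_k\psi_{k+|i-j|}$, i.e.\ a genuine matrix inverse; this is what $\mathbf{A}^{-1}(\tau^2\mathbf{G})\mathbf{A}^{-1}$ actually equals and is the correct form, whereas the displayed $\Sigma_{ij}=\sigma^2_\epsilon\left(\sum_k\psi_k\psi_{k+|i-j|}\right)^{-1}$ in the theorem reads as an entrywise reciprocal, which agrees with the matrix inverse only when $p=1$.
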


Indeed the above asymptotic covariance matrix is a generalization of the asymptotic variance in Theorem~\ref{t:asym}.  The forecasting procedure is exactly the same as described in \eqref{eq:fcst} with steps (4)--(5) of Algorithm~\ref{alg:fcst} replaced by the above steps for estimating $\bm{\beta}$ and step (6) becoming
\begin{equation}
    \widehat{V}_{n+1} = \sum_{i = 1}^p \hat{\beta}_i (T_{n-i+1} - \id).
\end{equation}

In addition to the autoregressive parameters, the autocorrelation functions are an important object in the study of time series.  In our case, recall the lag-$h$ Wasserstein autocorrelation functions are defined in \eqref{eq:acf}.  Denote $\bm{\varrho}_h = \left(\rho_1, \rho_2, \dots, \rho_h \right)^\intercal$ and $\hat{\bm{\varrho}}_h =  \left(\hat{\rho}_1, \hat{\rho}_2, \dots, \hat{\rho}_h \right)^\intercal$, where $\hat{\rho}_i = \int_\R \hat{\eta}_i(u) \hfp(u)\du \big/ \int_\R \hat{\eta}_0(u) \hfp(u) \du$, $i = 1,\dots, h.$

\begin{thm}
\label{t:asymAcf}
Suppose (A1'), (A2), (A3), and (A4) hold. Then
\[
n^{1/2}(\hat{\bm{\varrho}}_h - \bm{\varrho}_h) \overset{D}{\rightarrow} \mathbf{N}(0, \mathbf{DV}\mathbf{D}^\intercal),
\]
where
\[
\mathbf{D} = \frac{1}{\int_\R \eta_0(u)\fp(u) \du}\begin{bmatrix}
- \rho_1 & 1 & 0 & 0 & \dots & 0 \\
- \rho_2 & 0 & 1 & 0 & \dots & 0 \\
\vdots & \vdots & & & & \vdots  \\
- \rho_h & 0 & 0 & 0 & \dots & 1 \\
\end{bmatrix},
\]
and the entries $v_{jk}$, $j,k = 1,\ldots,n-1,$ of $\mathbf{V}$ are defined in \eqref{eq:cMat1} and \eqref{eq:cMat2} in Lemma~\ref{l:CvgCMat} in the \hyperref[supplement]{Supplementary Material}.
\end{thm}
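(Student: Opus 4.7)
The plan is to express $\hat{\boldsymbol{\varrho}}_h$ as a smooth function of the vector of integrated sample autocovariances and then appeal to the delta method, after invoking the joint central limit theorem for that vector supplied by Lemma~\ref{l:CvgCMat} in the Supplementary Material. Specifically, define the scalars
\[
\alpha_h = \int_{\R} \eta_h(u) \fp(u)\,\du, \qquad
\hat{\alpha}_h = \int_{\R} \hat{\eta}_h(u) \hfp(u)\,\du, \quad h = 0, 1, \dots, n-1,
\]
so that $\rho_h = \alpha_h/\alpha_0$ and $\hat{\rho}_h = \hat{\alpha}_h/\hat{\alpha}_0$. Writing $\boldsymbol{\alpha}_h = (\alpha_0, \alpha_1, \dots, \alpha_h)^\intercal$ and $\hat{\boldsymbol{\alpha}}_h = (\hat{\alpha}_0, \hat{\alpha}_1, \dots, \hat{\alpha}_h)^\intercal$, I would cite Lemma~\ref{l:CvgCMat} (established in the Supplementary Material and already used in the proof of Theorem~\ref{t:asymWARp}) to obtain
\[
n^{1/2}\bigl(\hat{\boldsymbol{\alpha}}_h - \boldsymbol{\alpha}_h\bigr) \overset{D}{\rightarrow} \mathbf{N}(0, \mathbf{V}),
\]
where the entries $v_{jk}$ of $\mathbf{V}$ are exactly those identified in \eqref{eq:cMat1}--\eqref{eq:cMat2}. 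This joint CLT is the substantive analytical ingredient; under (A1'), (A2) and (A4) the process $V_t = T_t - \id$ inherits a causal MA($\infty$) representation with absolutely summable coefficients, the integrand of $\hat{\alpha}_h$ is a quadratic functional of $V_t$, and the Wasserstein mean estimation error $\hfp - \fp$ can be shown (by arguments already assembled for Theorem~\ref{t:asymWARp}) to be asymptotically negligible at rate $n^{1/2}$.

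Next, define the map $g: \R \times \R^h \to \R^h$ by
\[
g(x_0, x_1, \dots, x_h) = \bigl( x_1/x_0, \, x_2/x_0, \, \dots, \, x_h/x_0 \bigr)^\intercal,
\]
so that $\hat{\boldsymbol{\varrho}}_h = g(\hat{\boldsymbol{\alpha}}_h)$ and $\boldsymbol{\varrho}_h = g(\boldsymbol{\alpha}_h)$. Since $\alpha_0 = \int_\R \eta_0(u)\fp(u)\du > 0$ under (A1'), $g$ is continuously differentiable in a neighborhood of $\boldsymbol{\alpha}_h$, and its Jacobian at $\boldsymbol{\alpha}_h$ has entries $\partial(x_i/x_0)/\partial x_0 = -\alpha_i/\alpha_0^2 = -\rho_i/\alpha_0$ and $\partial(x_i/x_0)/\partial x_j = \delta_{ij}/\alpha_0$ for $j \ge 1$. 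Assembling these into a matrix yields exactly $\mathbf{D}$ as displayed in the statement.

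Applying the multivariate delta method then gives
\[
n^{1/2}(\hat{\boldsymbol{\varrho}}_h - \boldsymbol{\varrho}_h) = n^{1/2}\bigl(g(\hat{\boldsymbol{\alpha}}_h) - g(\boldsymbol{\alpha}_h)\bigr) \overset{D}{\rightarrow} \mathbf{N}(0, \mathbf{D V D}^\intercal),
\]
which is the desired conclusion.

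The only genuinely nontrivial step is the joint CLT for $\hat{\boldsymbol{\alpha}}_h$, which is already packaged in Lemma~\ref{l:CvgCMat}; everything after that reduces to a routine Jacobian computation and an invocation of the delta method. The main technical obstacle in proving the lemma itself (and hence the theorem) is controlling the contribution of the nuisance quantity $\hfp$ inside the integrand of $\hat{\alpha}_h$: one must show that replacing $\hfp$ by $\fp$ and $\hat{T}_t = Q_t \circ \hfp^{-1}$ by $T_t = Q_t \circ F_\oplus$ introduces only $o_p(n^{-1/2})$ error, so that the asymptotic distribution is driven entirely by the linear statistic obtained from the MA($\infty$) representation of $V_t$. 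Once that plug-in is justified, a standard CLT for linear functionals of causal linear processes in a Hilbert space yields the limiting normal law with covariance $\mathbf{V}$.
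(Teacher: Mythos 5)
Your proposal is correct and follows essentially the same route as the paper: a joint CLT for the vector of integrated sample autocovariances at lags $0,\dots,h$, followed by the multivariate delta method applied to the ratio map, whose Jacobian is exactly the matrix $\mathbf{D}$ in the statement. The only small imprecision is that Lemma~\ref{l:CvgCMat} supplies only the limiting covariance matrix $\mathbf{V}$; the asymptotic normality itself is obtained, as you indicate, via the $m$-dependent truncation argument of Lemma~\ref{l:asymSB} together with the $o_P(n^{-1/2})$ plug-in step of Lemma~\ref{l:HS_op1}, which is precisely what the paper's proof invokes.
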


\section{Finite Sample Properties of Autoregressive Parameter Estimators}\label{s:sim}
We now proceed with a simulation of the WAR($p$) model to show that we can accurately estimate the autoregressive coefficients $\beta_j$ and explore the normality of the estimators in finite samples.

Notice that once we specify the quantile function of the Wasserstein mean density $\Qp$ and generate the sequence of optimal transports $\{T_t(u)\}$ from model \eqref{eq:WAR1}, we can calculate the corresponding sequence of quantile functions $\{Q_t(s)\}$ by composing $ Q_t(s) = T_t \circ \Qp(s),$ which follows from the fact that $T_t = Q_t \circ F_\oplus.$  In our experiments, we set the Wasserstein mean density to be uniform on $[0,1]$, i.e.
\[
\Qp(s) = s, \quad s \in [0,1].
\]
We first generate $T_1$ over [0,1], evenly divided into 100 subintervals with a burn-in period of 1000 time units.  The innovations we simulate are
\[
\epsilon_t(u) = \eta_t + \sin{(\delta_t u)} \text{ with }\eta_t \overset{iid}{\sim}\mathrm{N}(0,1),\, \delta_t \overset{iid}{\sim}\mathrm{Uniform}[-0.2,0.2],\, \eta_t  \perp \delta_t.
\]

We generate the sequence of optimal transports $\{ T_t \}_{t=1}^{n}$ according to
\begin{equation}
  \label{eq:WARpSim}
T_t(u) - u = \beta_1 (T_{t-1}(u) - u)+ \beta_2(T_{t-2}(u) - u) + \beta_3(T_{t-3}(u) - u) + \epsilon_t(u),
\end{equation}
where $\beta_1 = 0.825$, $\beta_2 = -0.1875$, $\beta_3 = 0.0125$.   Then we calculate the sequence of quantile functions $\{ Q_t \}_{t=1}^{n}$ and the following quantities on the same grid as $T_1.$
\begin{itemize}
\item $\hQp(s) = \frac{1}{n} \sum_{t=1}^{n} Q_t(s)$,
\item $\hat{\lambda}_h(s) = \frac{1}{n} \sum_{t = 1}^{n-h} \left\{Q_t(s) - \hQp(s)\right\}\left\{Q_{t+h}(s) - \hQp(s)\right\}$.
\end{itemize}
Lastly, the estimates of autoregressive parameters can be numerically evaluated according to \eqref{eq:BetaH}.  We repeat this experiment 1000 times to get empirical distributions of the estimated autoregressive parameters.  We consider sample sizes $n = 50, 100, 500, 1000, 2000$.

The bias, standard deviation and RMSE are summarized in Table~\ref{tab:RMSE}, from which we can observe that they all trail off as sample size increases.  For the purpose of demonstration, we only display histograms and
 QQ-plots for $n = 50, 100$ and $1000$.  The graphical evidence of the asymptotic marginal normality of  the estimators $\hat{\beta}_i$, $i = 1,2,3$, is presented in Figures \ref{fig:beta1}--\ref{fig:beta3}.

\begin{table}[t]
\caption{Bias, standard deviation and RMSE of $\hat{\beta}_i$, $i = 1,2,3$.}
\label{tab:RMSE}
\centering
\resizebox{0.85\columnwidth}{!}{%
\begin{tabular}{cccccccccc}
\textbf{Sample Size} & \multicolumn{3}{c}{\textbf{Bias}} & \multicolumn{3}{c}{\textbf{SD}} & \multicolumn{3}{c}{\textbf{RMSE}} \\ \hline \hline
 & \multicolumn{1}{c}{$\hat{\beta}_1$} & \multicolumn{1}{c}{$\hat{\beta}_2$} & \multicolumn{1}{c||}{$\hat{\beta}_3$} & \multicolumn{1}{c}{$\hat{\beta}_1$} & \multicolumn{1}{c}{$\hat{\beta}_2$} & \multicolumn{1}{c||}{$\hat{\beta}_3$} & \multicolumn{1}{c}{$\hat{\beta}_1$} & \multicolumn{1}{c}{$\hat{\beta}_2$} & $\hat{\beta}_3$ \\ \hline
50 & \multicolumn{1}{c|}{-0.0686} & \multicolumn{1}{c|}{0.0028} & \multicolumn{1}{c||}{-0.0297} & \multicolumn{1}{c|}{0.1432} & \multicolumn{1}{c|}{0.1605} & \multicolumn{1}{c||}{0.1313} & \multicolumn{1}{c|}{0.1588} & \multicolumn{1}{c|}{0.1606} & 0.1347 \\ \hline
100 & \multicolumn{1}{c|}{-0.0319} & \multicolumn{1}{c|}{0.0062} & \multicolumn{1}{c||}{-0.0186} & \multicolumn{1}{c|}{0.0996} & \multicolumn{1}{c|}{0.1171} & \multicolumn{1}{c||}{0.0948} & \multicolumn{1}{c|}{0.1045} & \multicolumn{1}{c|}{0.1172} & 0.0967 \\ \hline
500 & \multicolumn{1}{c|}{-0.0073} & \multicolumn{1}{c|}{0.0022} & \multicolumn{1}{c||}{-0.0028} & \multicolumn{1}{c|}{0.0458} & \multicolumn{1}{c|}{0.0566} & \multicolumn{1}{c||}{0.0453} & \multicolumn{1}{c|}{0.0464} & \multicolumn{1}{c|}{0.0567} & 0.0454 \\ \hline
1000 & \multicolumn{1}{c|}{-0.0043} & \multicolumn{1}{c|}{0.0017} & \multicolumn{1}{c||}{-0.0012} & \multicolumn{1}{c|}{0.0317} & \multicolumn{1}{c|}{0.0406} & \multicolumn{1}{c||}{0.0319} & \multicolumn{1}{c|}{0.0320} & \multicolumn{1}{c|}{0.0406} & 0.0320 \\ \hline
2000 & \multicolumn{1}{c|}{-0.0011} & \multicolumn{1}{c|}{0.0003} & \multicolumn{1}{c||}{-0.0004} & \multicolumn{1}{c|}{0.0227} & \multicolumn{1}{c|}{0.0285} & \multicolumn{1}{c||}{0.0225} & \multicolumn{1}{c|}{0.0228} & \multicolumn{1}{c|}{0.0285} & 0.0225 \\ \hline
\end{tabular}
}
\end{table}

\begin{figure}[h]
  \centering
  \begin{subfigure}[b]{0.28\textwidth}
      \centering
      \includegraphics[width=\textwidth]{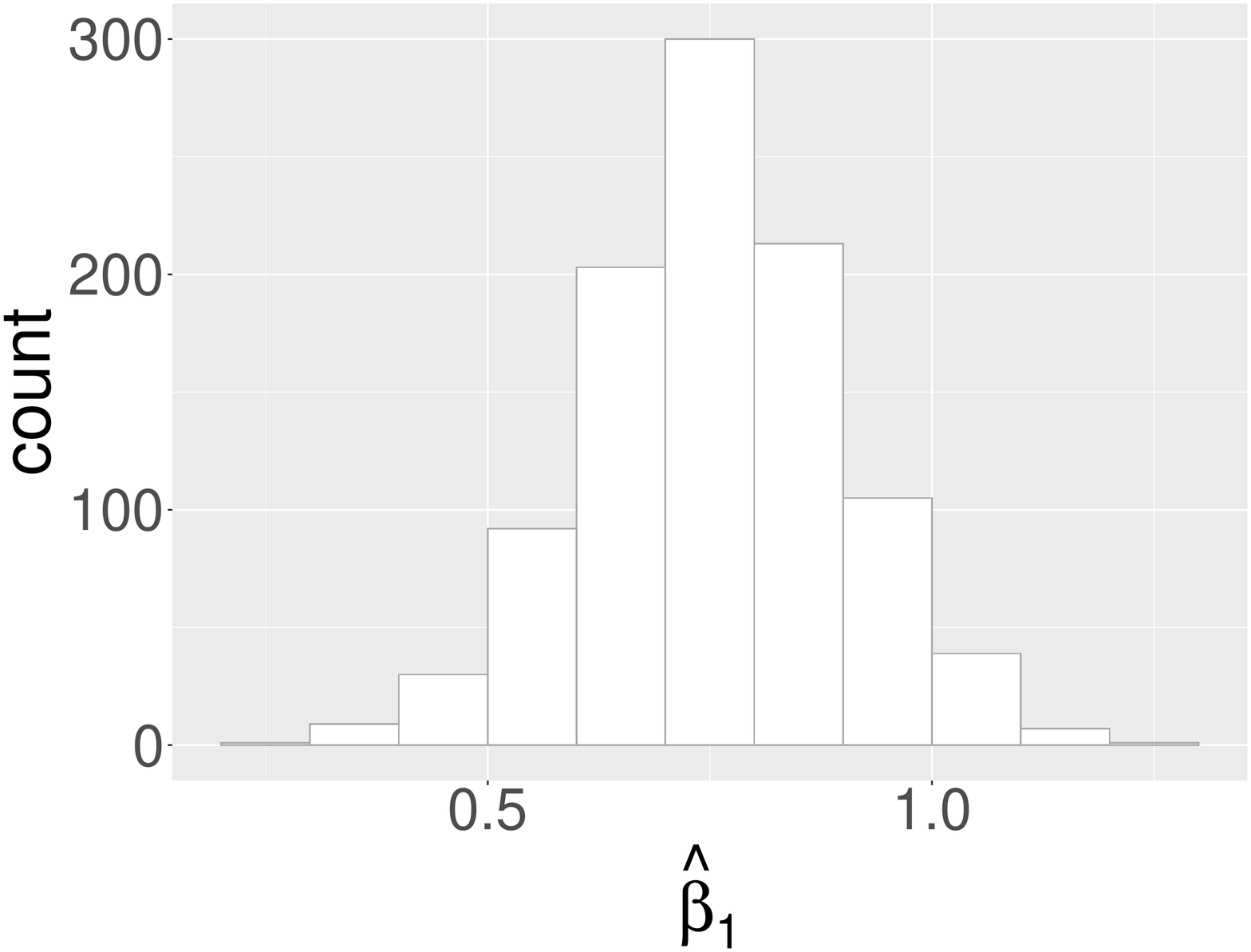}
  \end{subfigure}
  \begin{subfigure}[b]{0.28\textwidth}
      \centering
      \includegraphics[width=\textwidth]{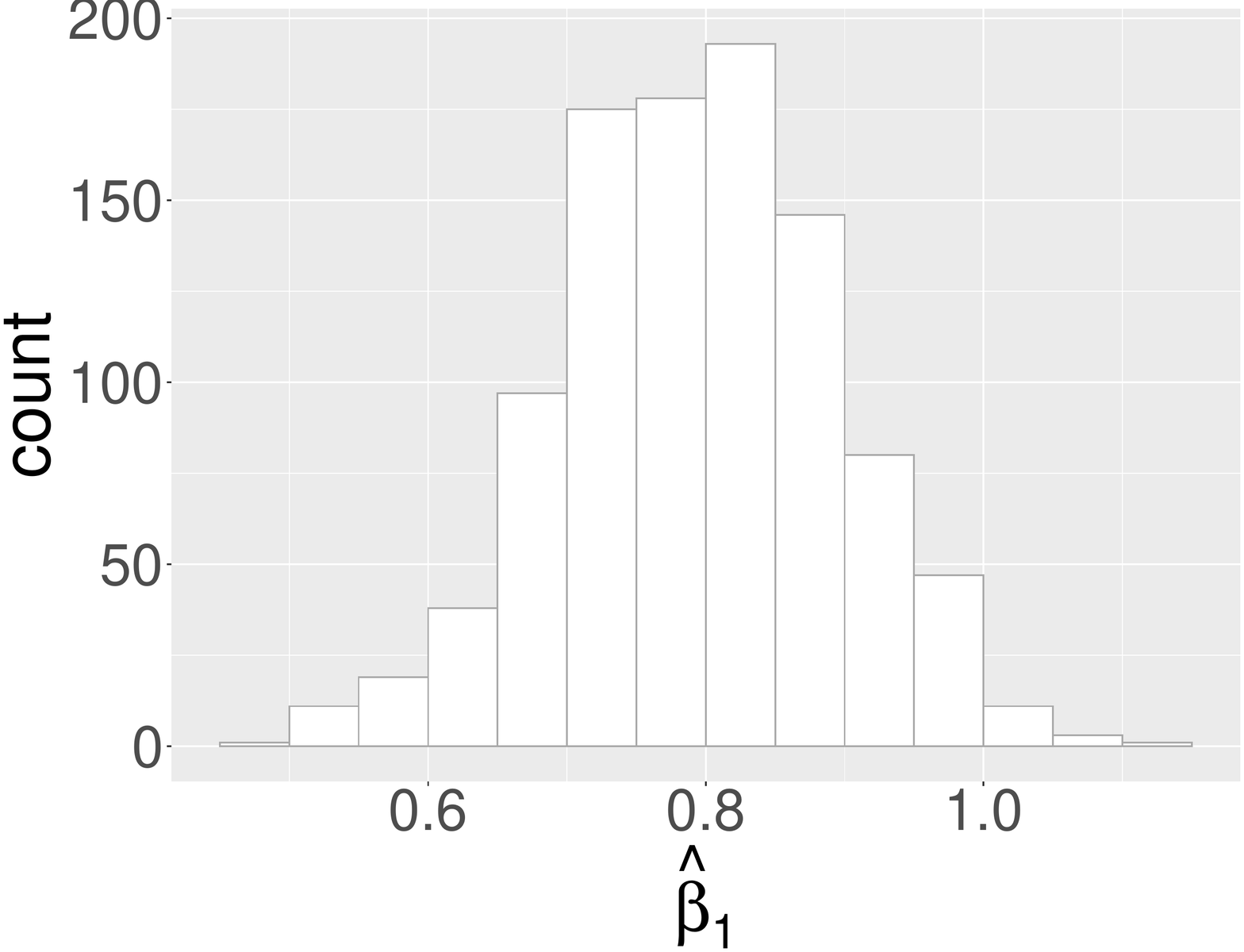}
  \end{subfigure}
  \begin{subfigure}[b]{0.28\textwidth}
      \centering
      \includegraphics[width=\textwidth]{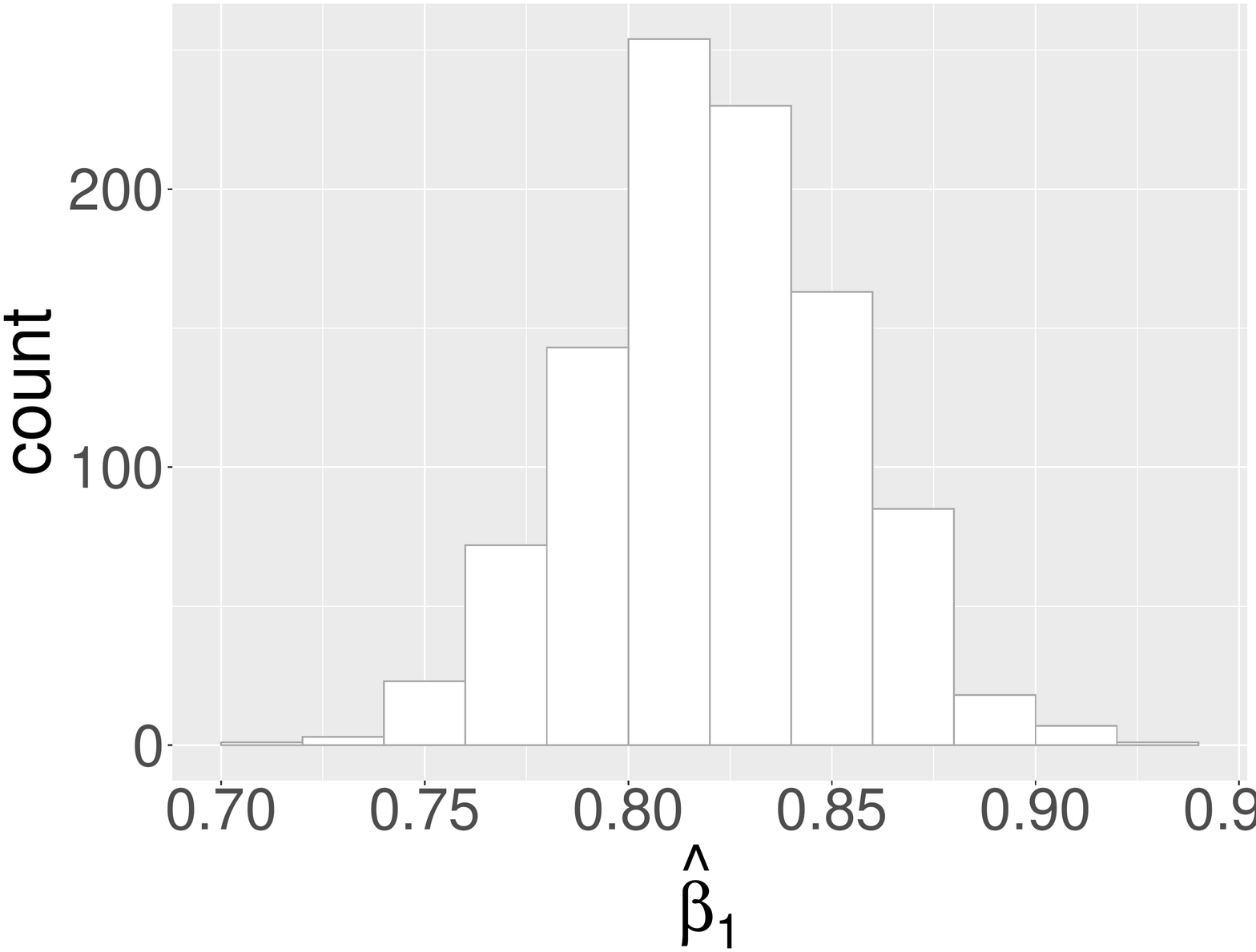}
  \end{subfigure}
  \\
  \begin{subfigure}[b]{0.28\textwidth}
      \centering
      \includegraphics[width=\textwidth]{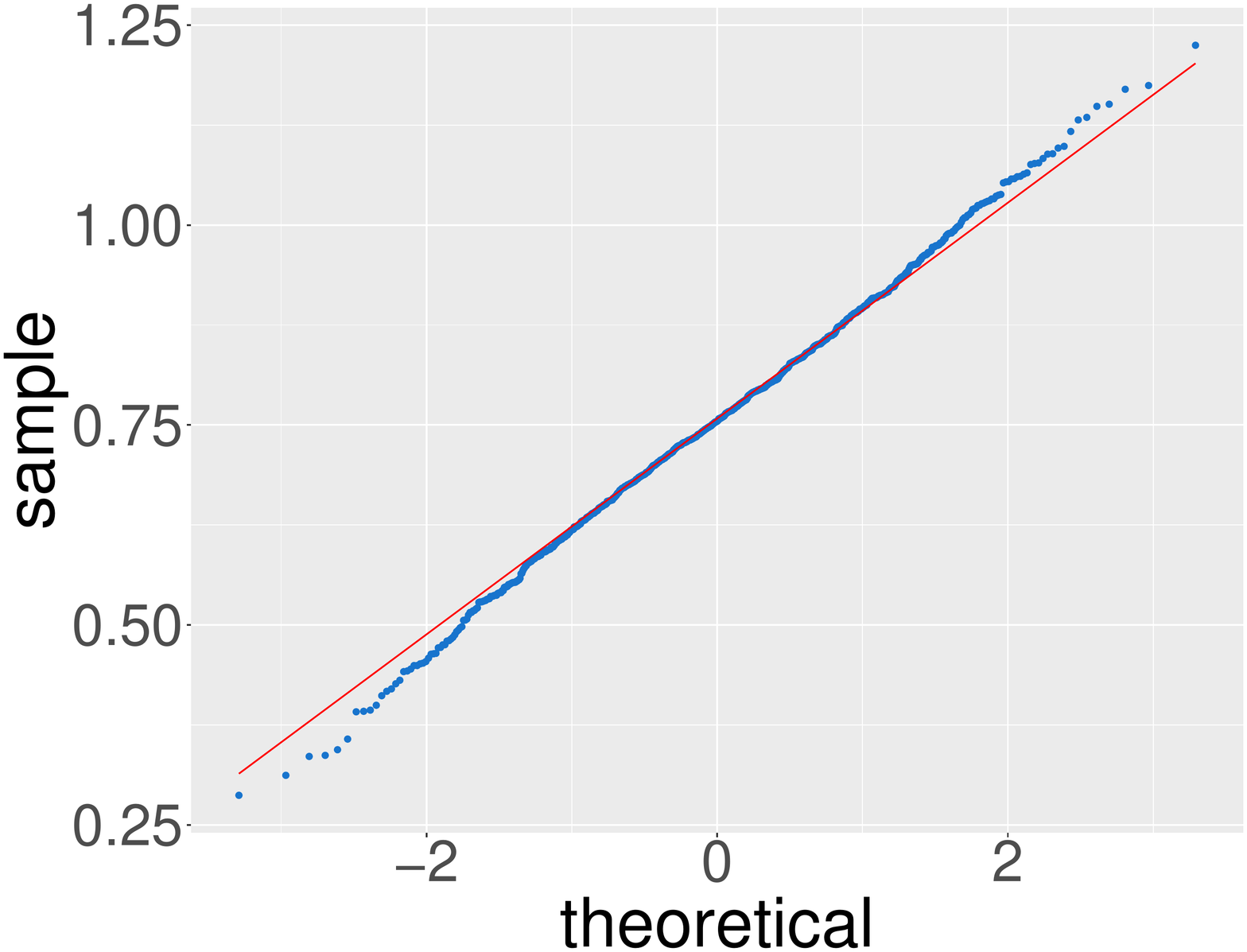}
      \caption{$n=50$}
  \end{subfigure}
  \begin{subfigure}[b]{0.28\textwidth}
      \centering
      \includegraphics[width=\textwidth]{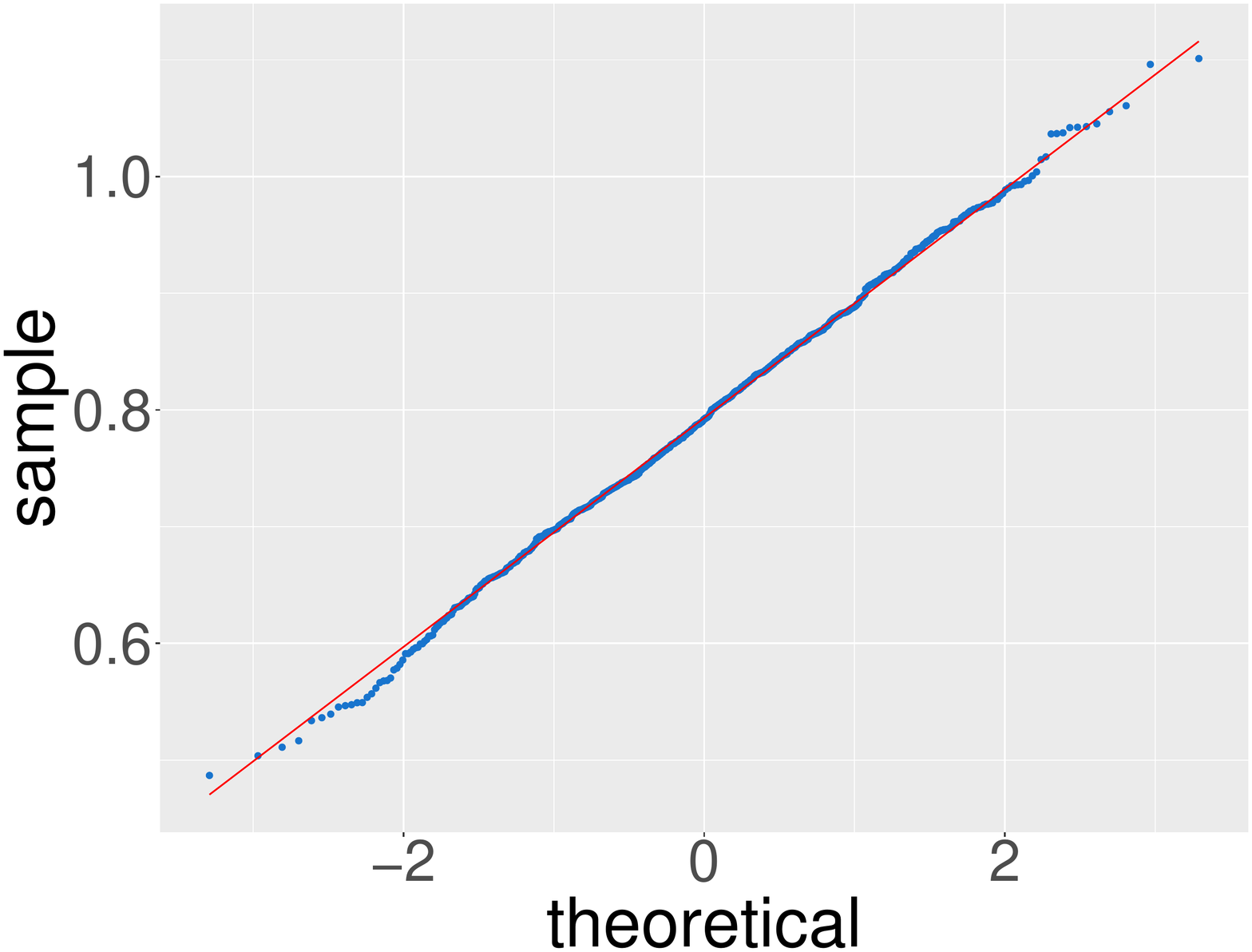}
      \caption{$n=100$}
  \end{subfigure}
  \begin{subfigure}[b]{0.28\textwidth}
      \centering
      \includegraphics[width=\textwidth]{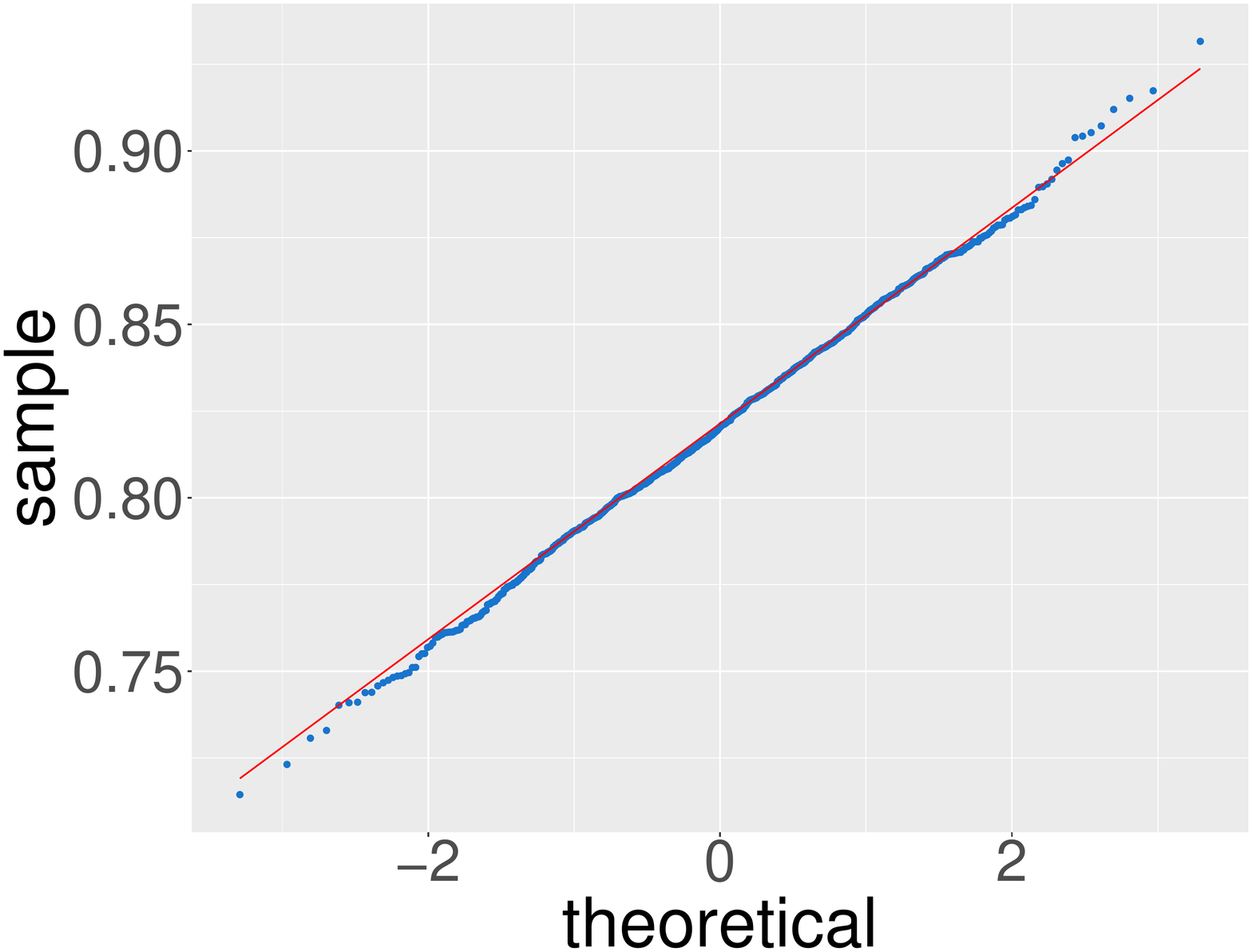}
      \caption{$n=1000$}
  \end{subfigure}
     \caption{QQ plots and histograms of $\hat{\beta}_1$}
     \label{fig:beta1}
\end{figure}

\begin{figure}[H]
  \centering
  \begin{subfigure}[b]{0.28\textwidth}
      \centering
      \includegraphics[width=\textwidth]{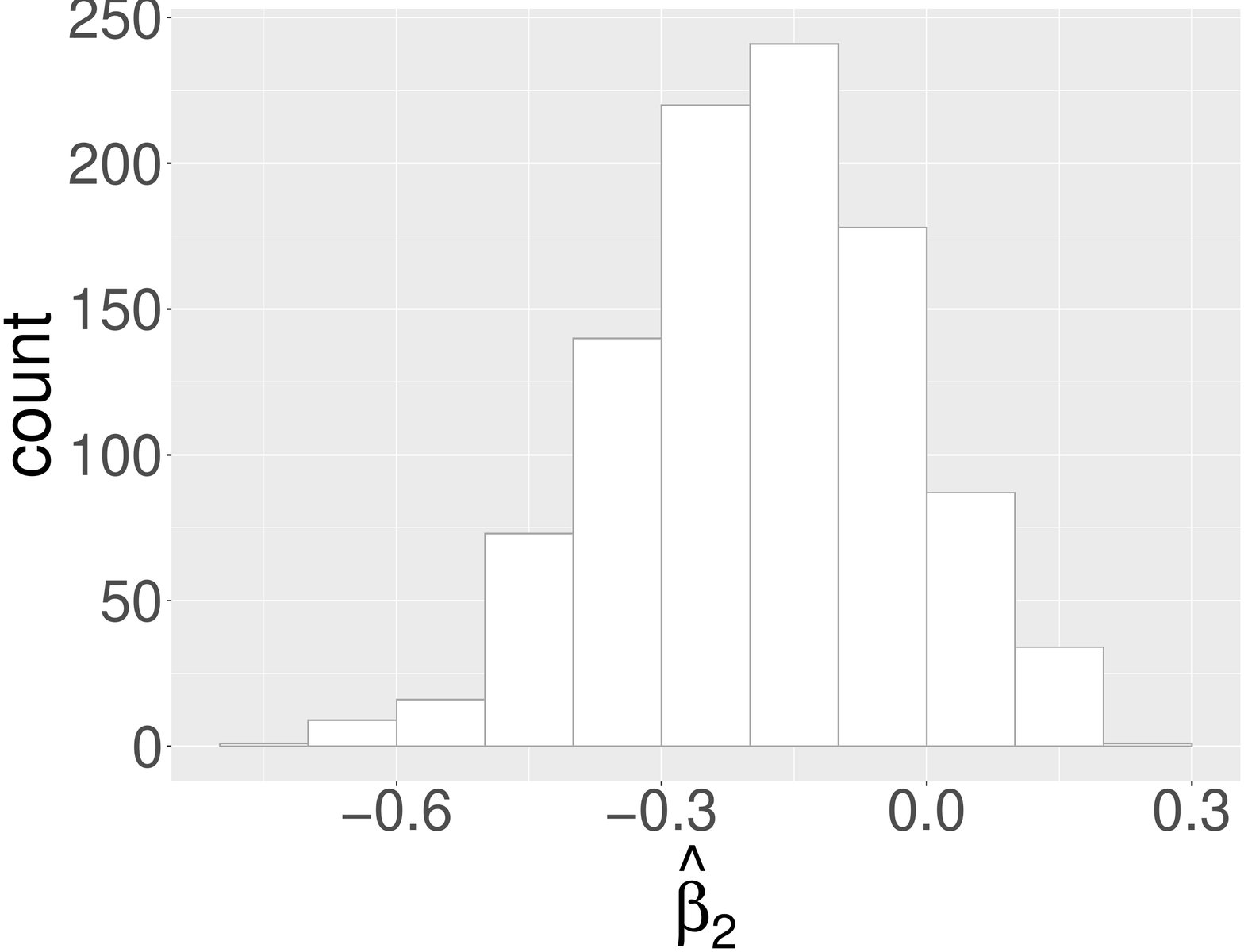}
  \end{subfigure}
  \begin{subfigure}[b]{0.28\textwidth}
      \centering
      \includegraphics[width=\textwidth]{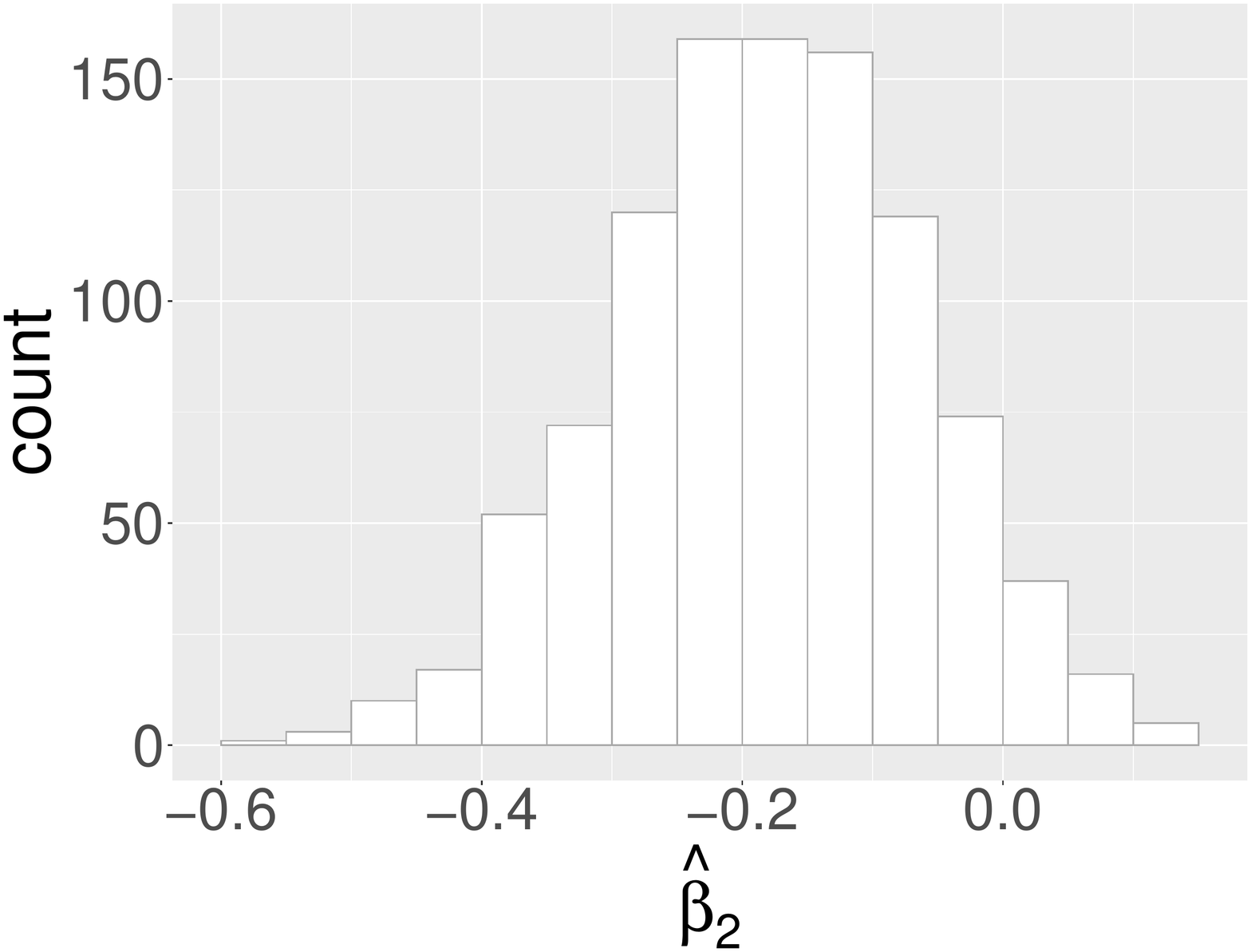}
  \end{subfigure}
  \begin{subfigure}[b]{0.28\textwidth}
      \centering
      \includegraphics[width=\textwidth]{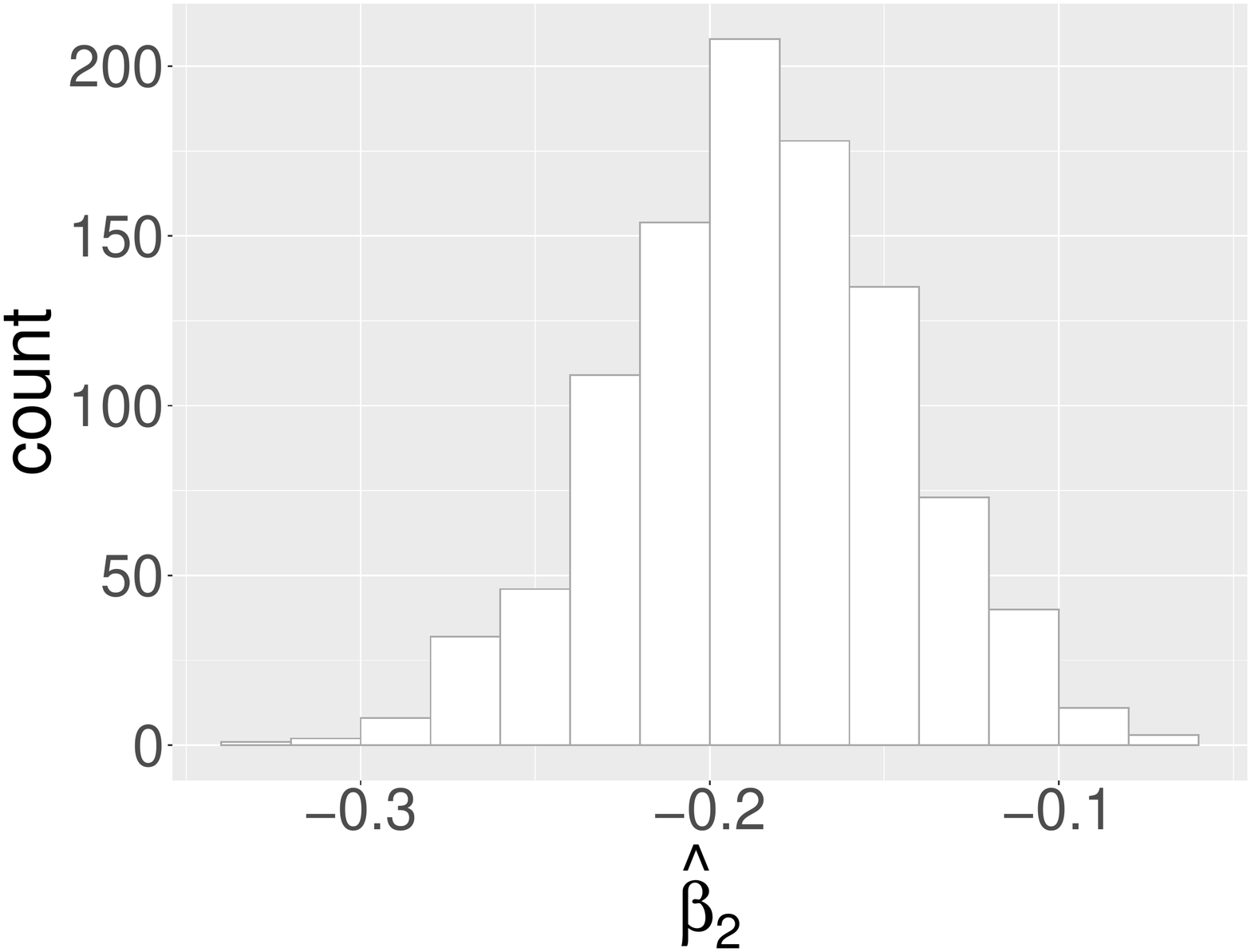}
  \end{subfigure}
  \\
  \begin{subfigure}[b]{0.28\textwidth}
      \centering
      \includegraphics[width=\textwidth]{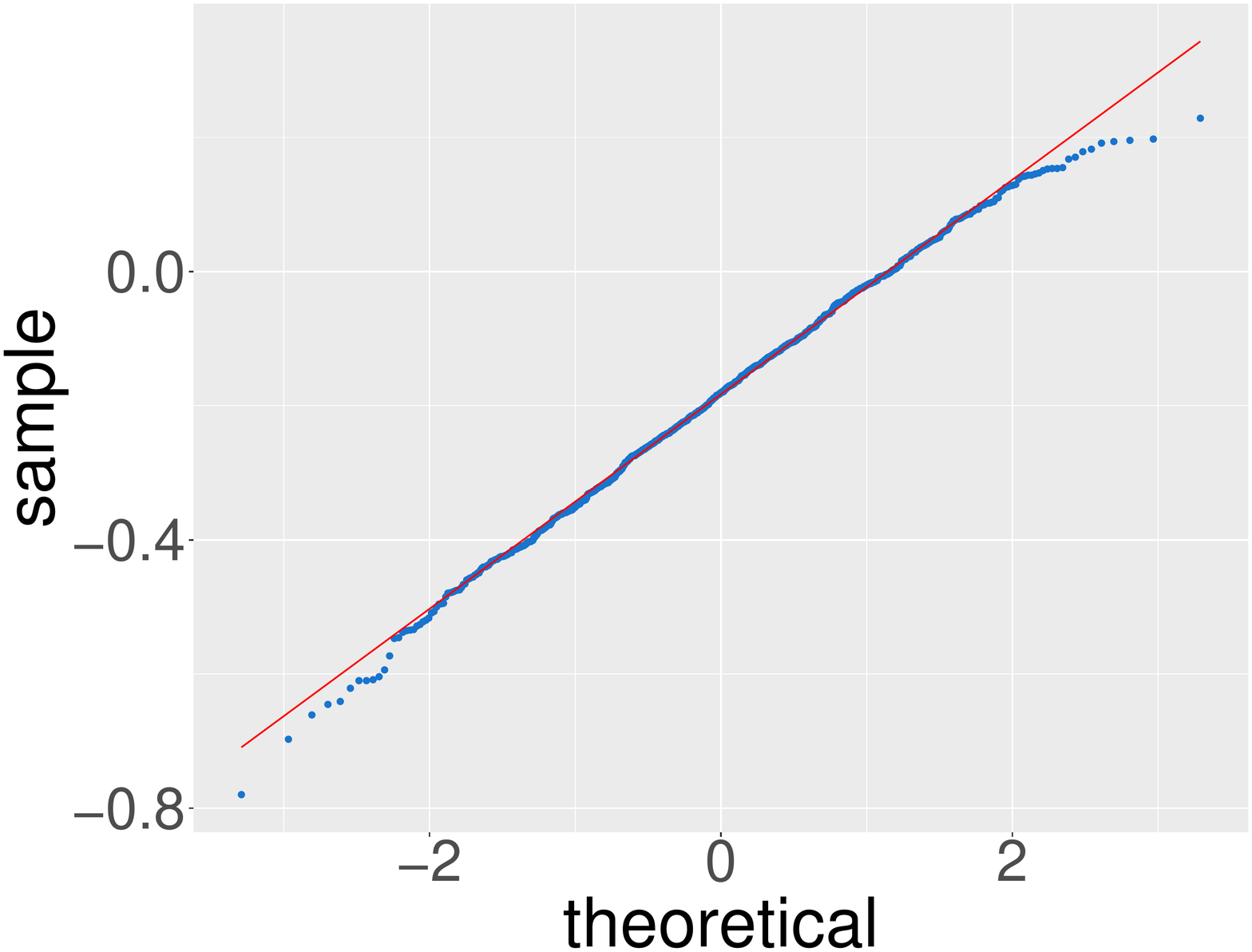}
      \caption{$n=50$}
  \end{subfigure}
  \begin{subfigure}[b]{0.28\textwidth}
      \centering
      \includegraphics[width=\textwidth]{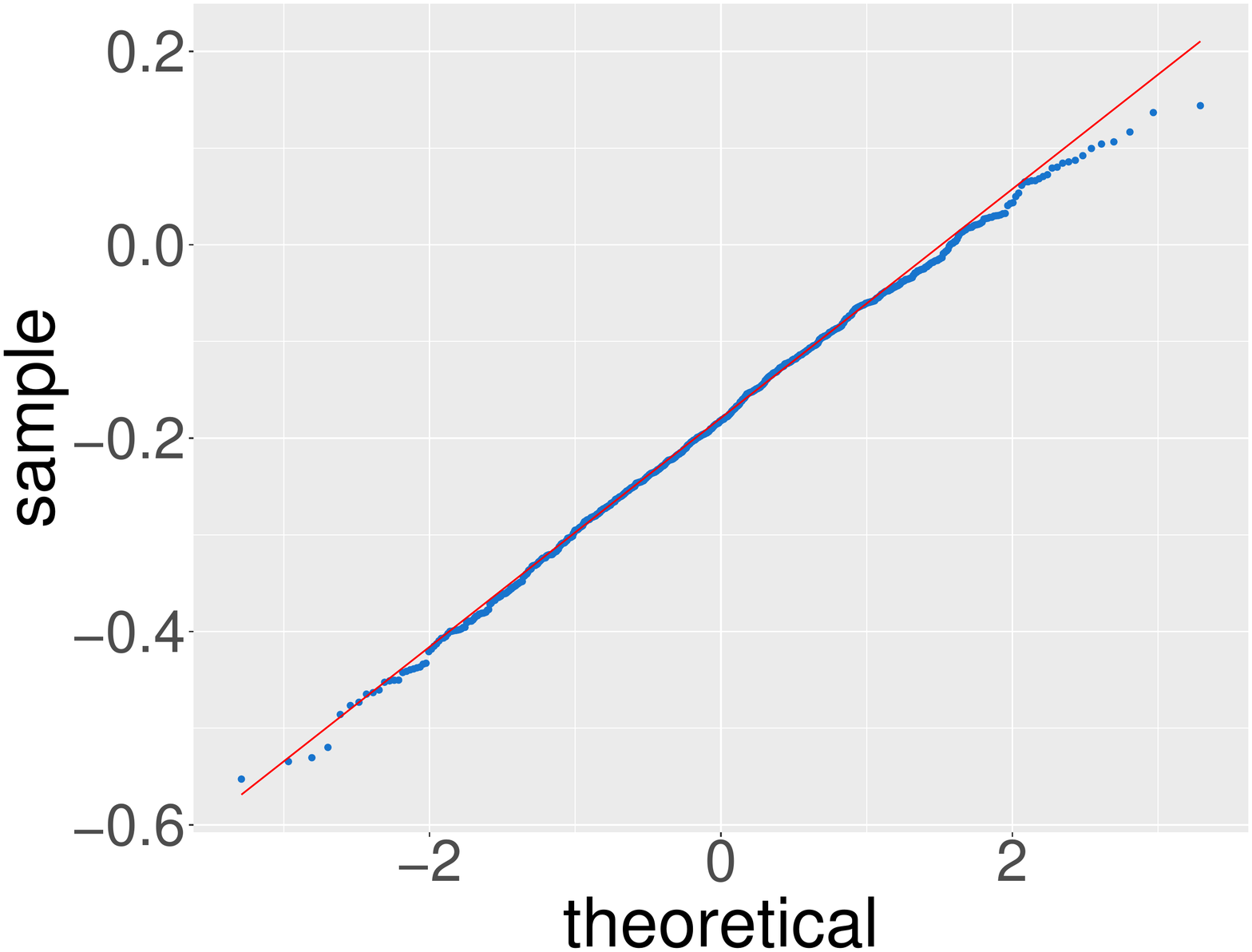}
      \caption{$n=100$}
  \end{subfigure}
  \begin{subfigure}[b]{0.28\textwidth}
      \centering
      \includegraphics[width=\textwidth]{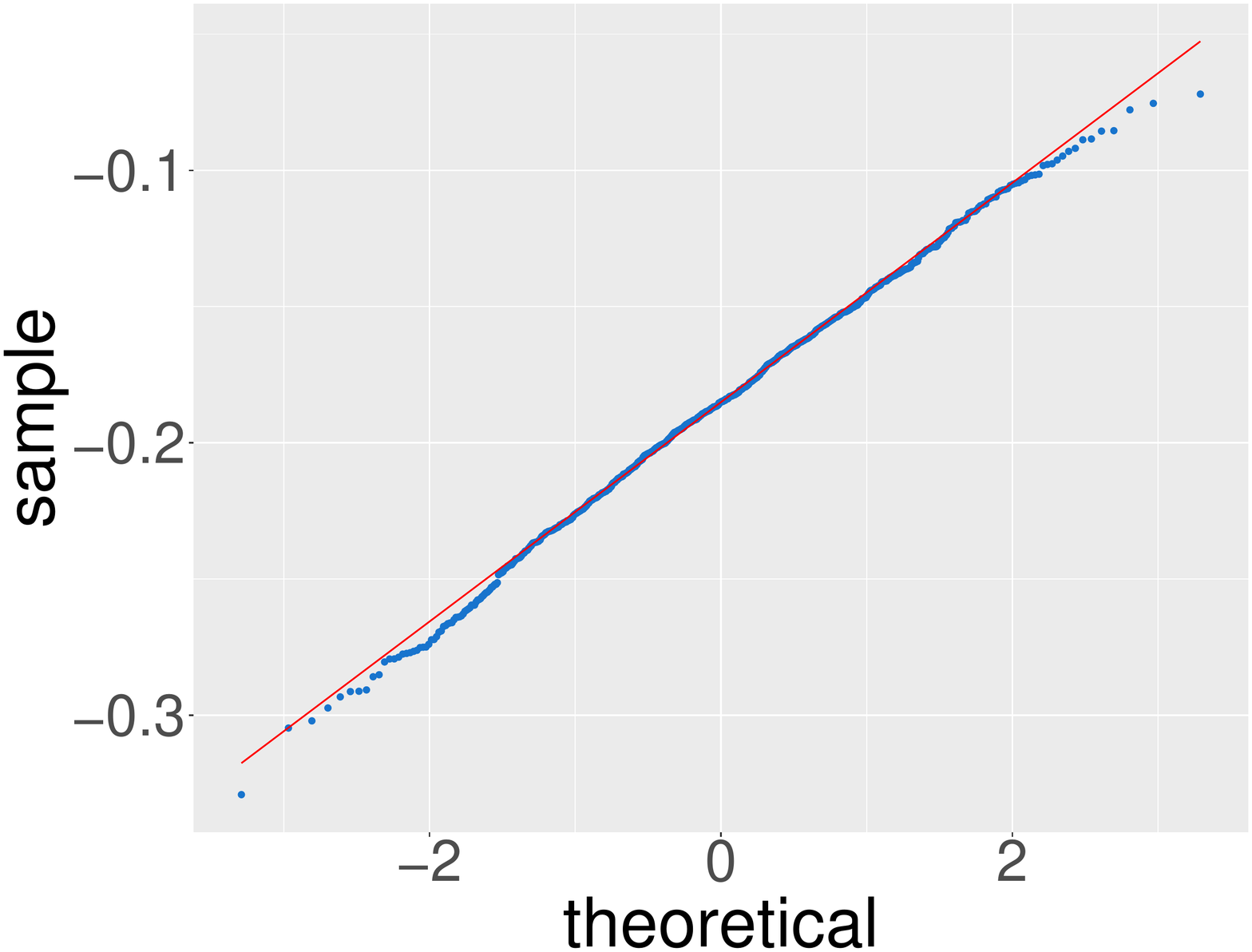}
      \caption{$n=1000$}
  \end{subfigure}
     \caption{QQ plots and histogram of $\hat{\beta}_2$}
     \label{fig:beta2}
\end{figure}

\begin{figure}[H]
  \centering
  \begin{subfigure}[b]{0.28\textwidth}
      \centering
      \includegraphics[width=\textwidth]{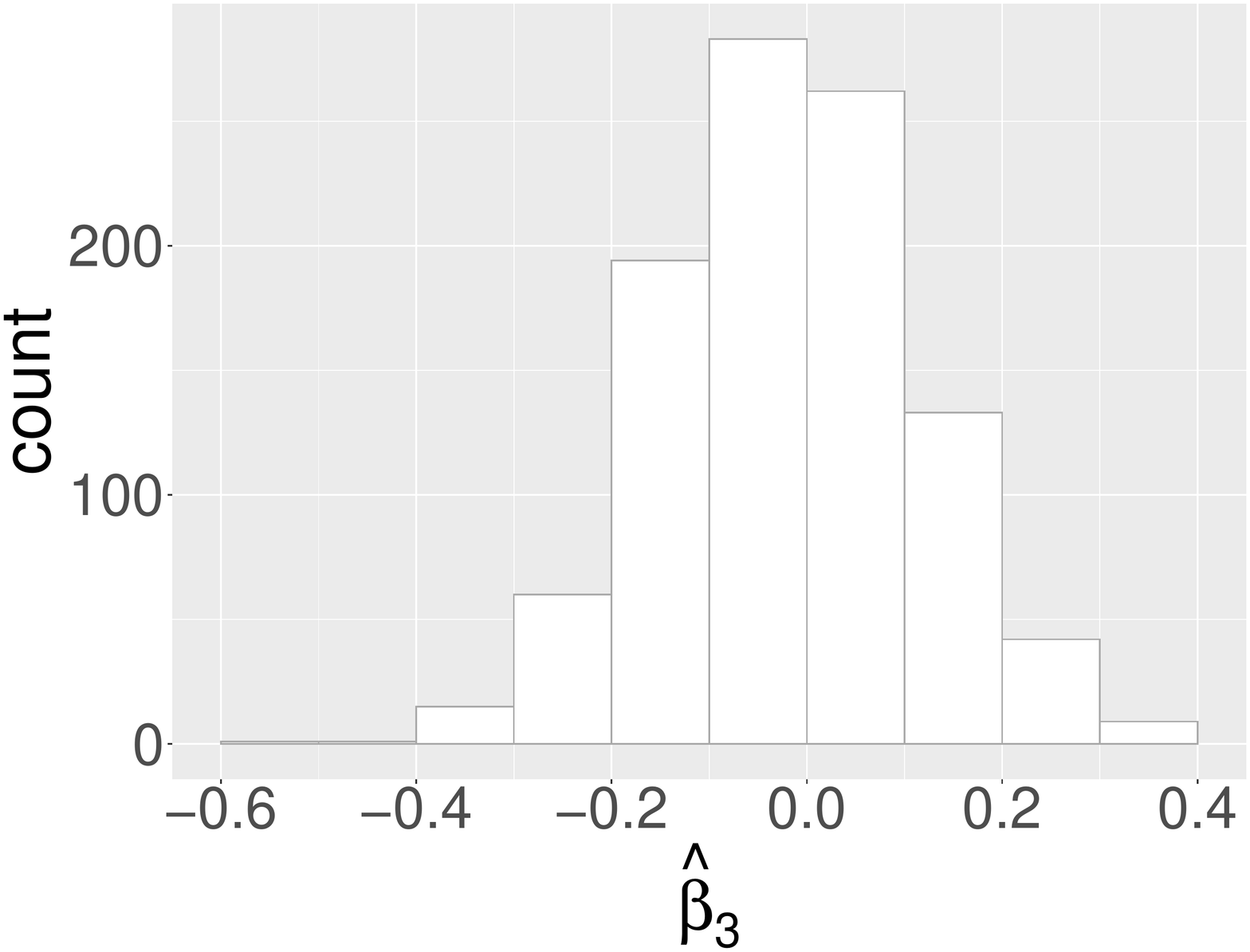}
  \end{subfigure}
  \begin{subfigure}[b]{0.28\textwidth}
      \centering
      \includegraphics[width=\textwidth]{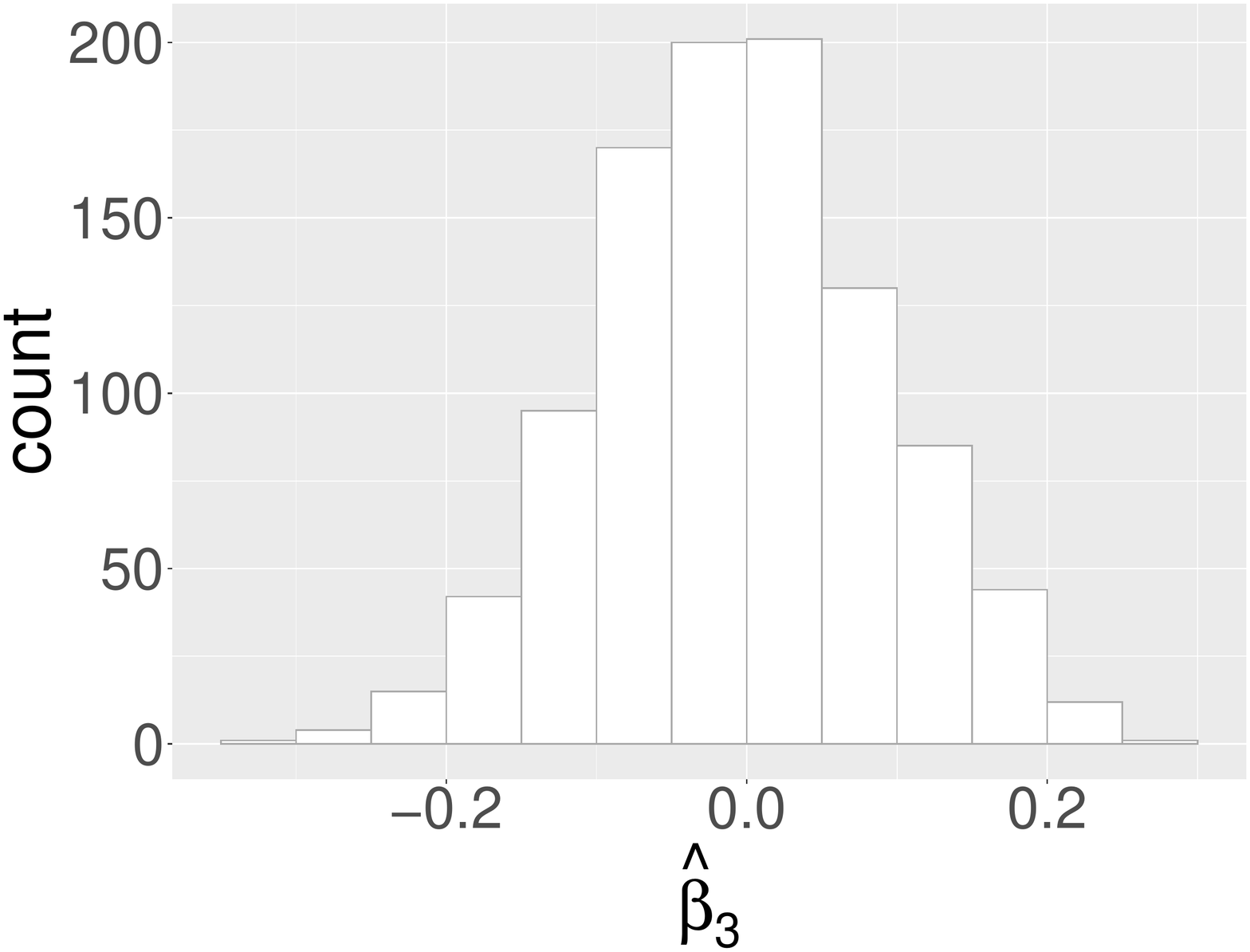}
  \end{subfigure}
  \begin{subfigure}[b]{0.28\textwidth}
      \centering
      \includegraphics[width=\textwidth]{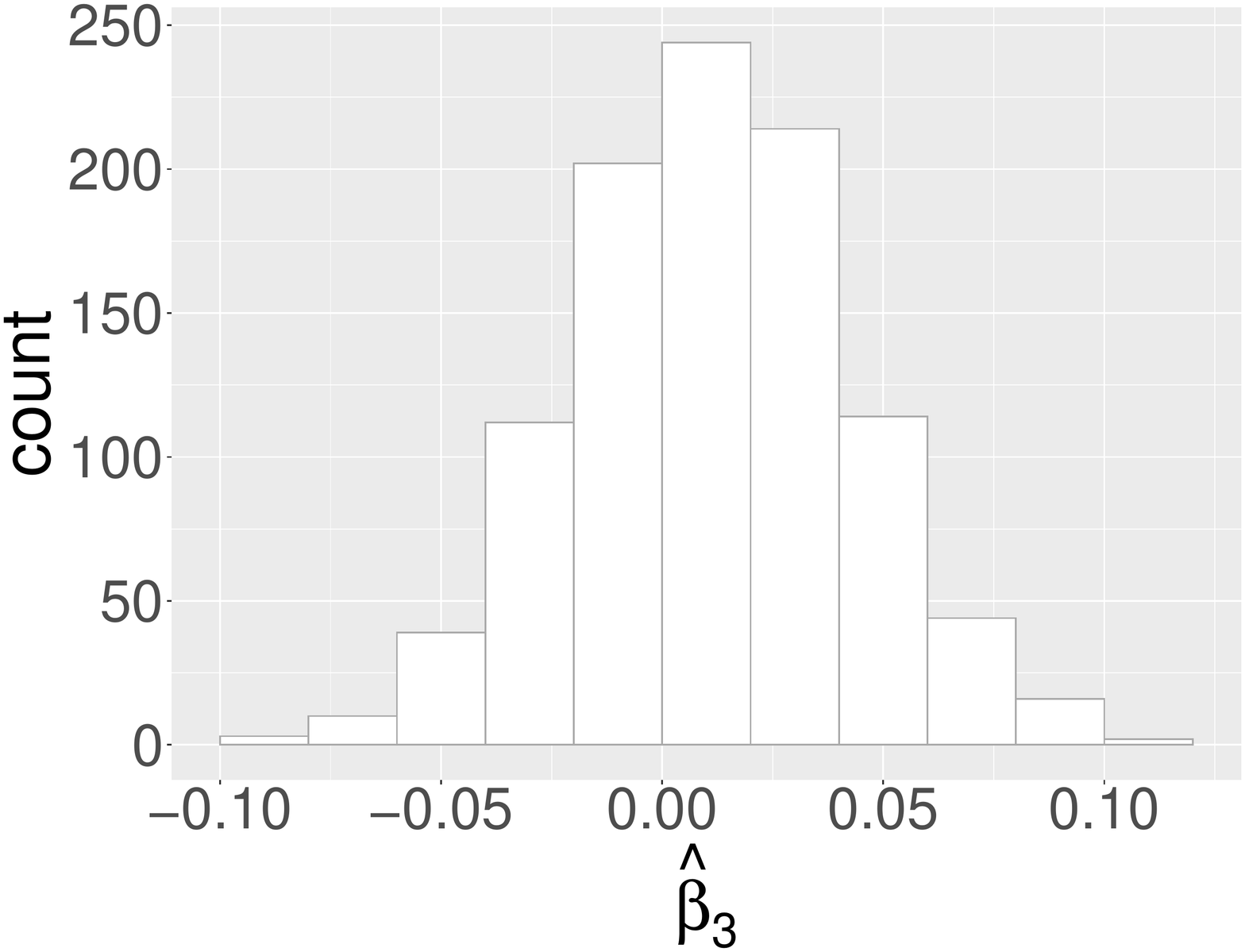}
  \end{subfigure}
  \\
  \begin{subfigure}[b]{0.28\textwidth}
      \centering
      \includegraphics[width=\textwidth]{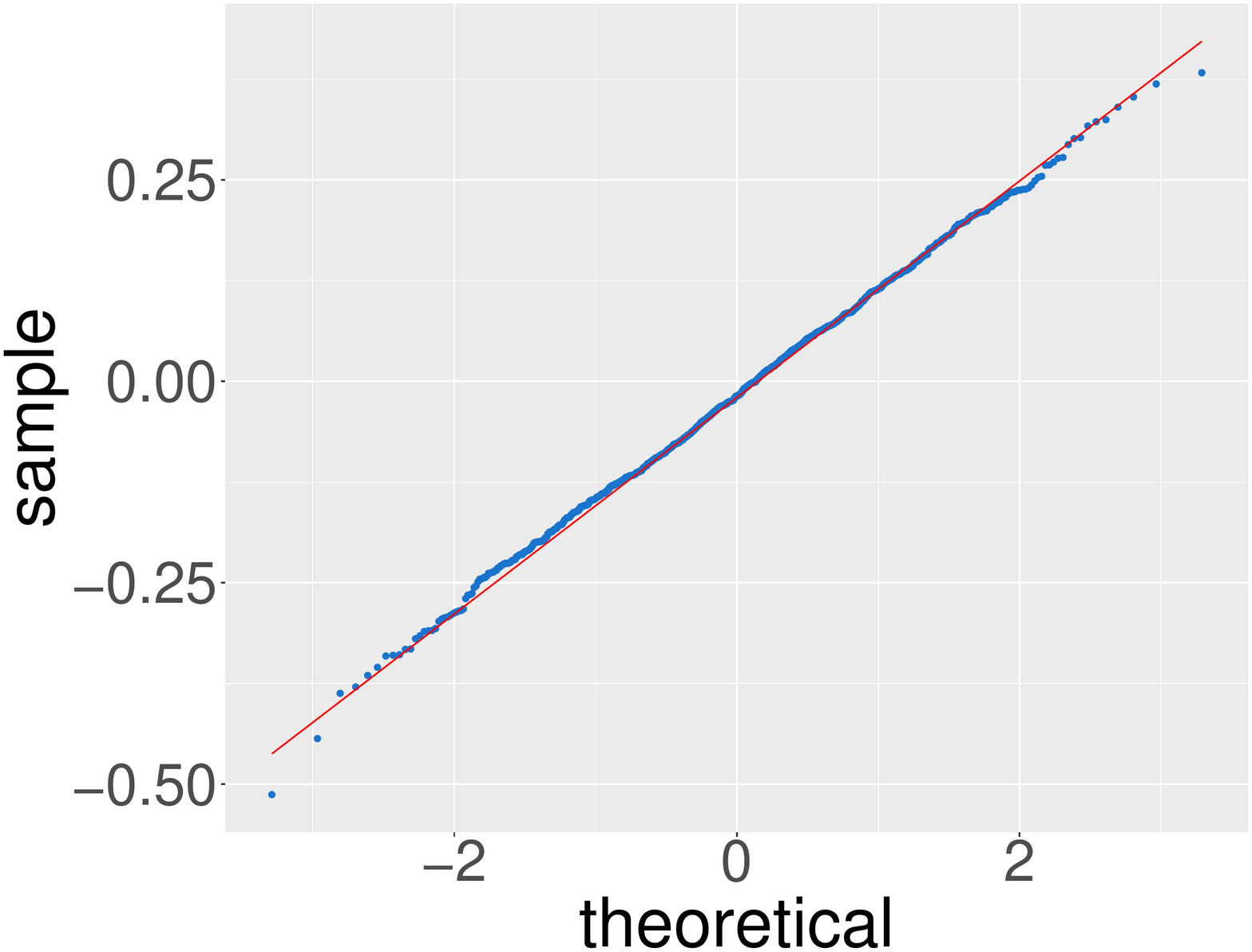}
      \caption{$n=50$}
  \end{subfigure}
  \begin{subfigure}[b]{0.28\textwidth}
      \centering
      \includegraphics[width=\textwidth]{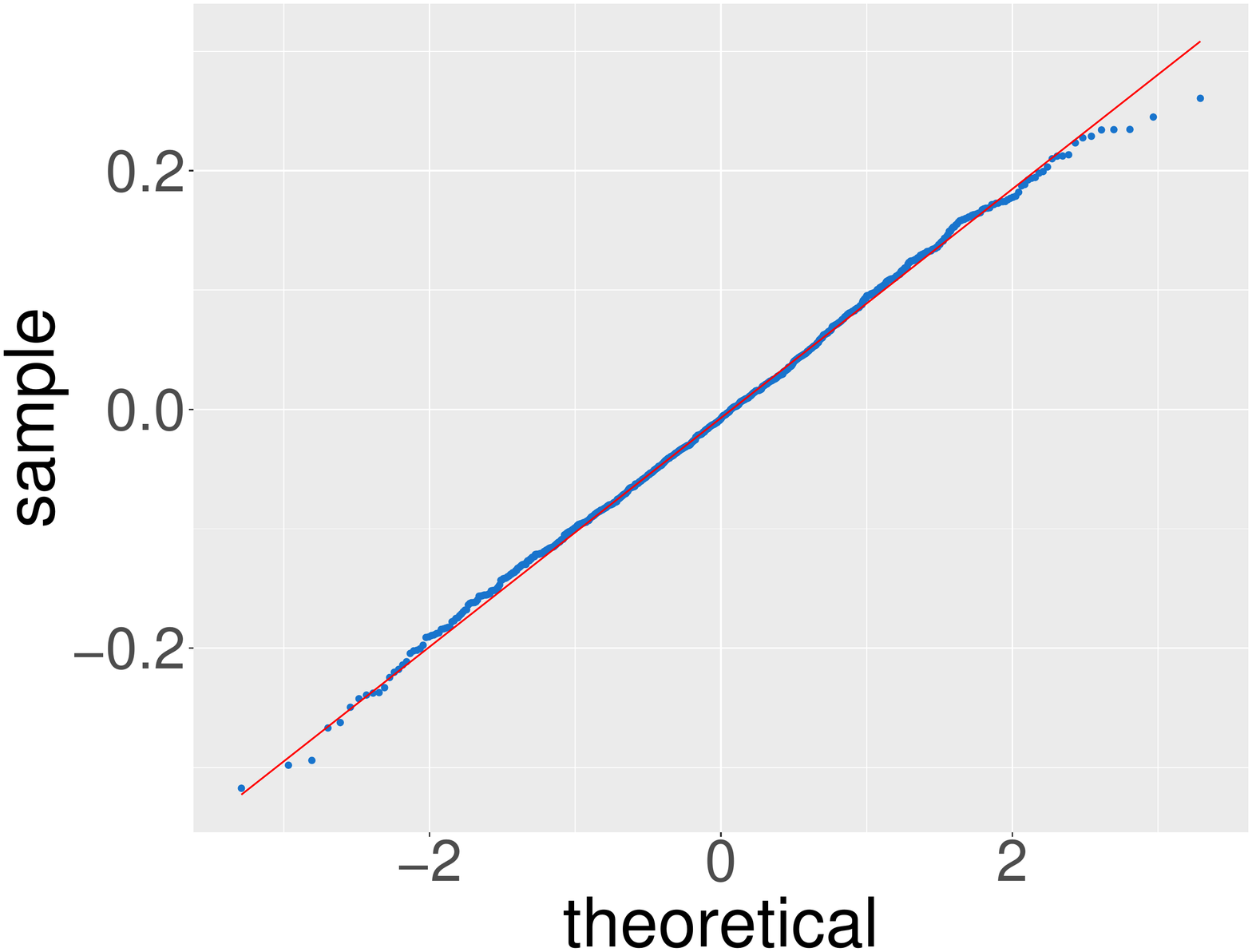}
      \caption{$n=100$}
  \end{subfigure}
  \begin{subfigure}[b]{0.28\textwidth}
      \centering
      \includegraphics[width=\textwidth]{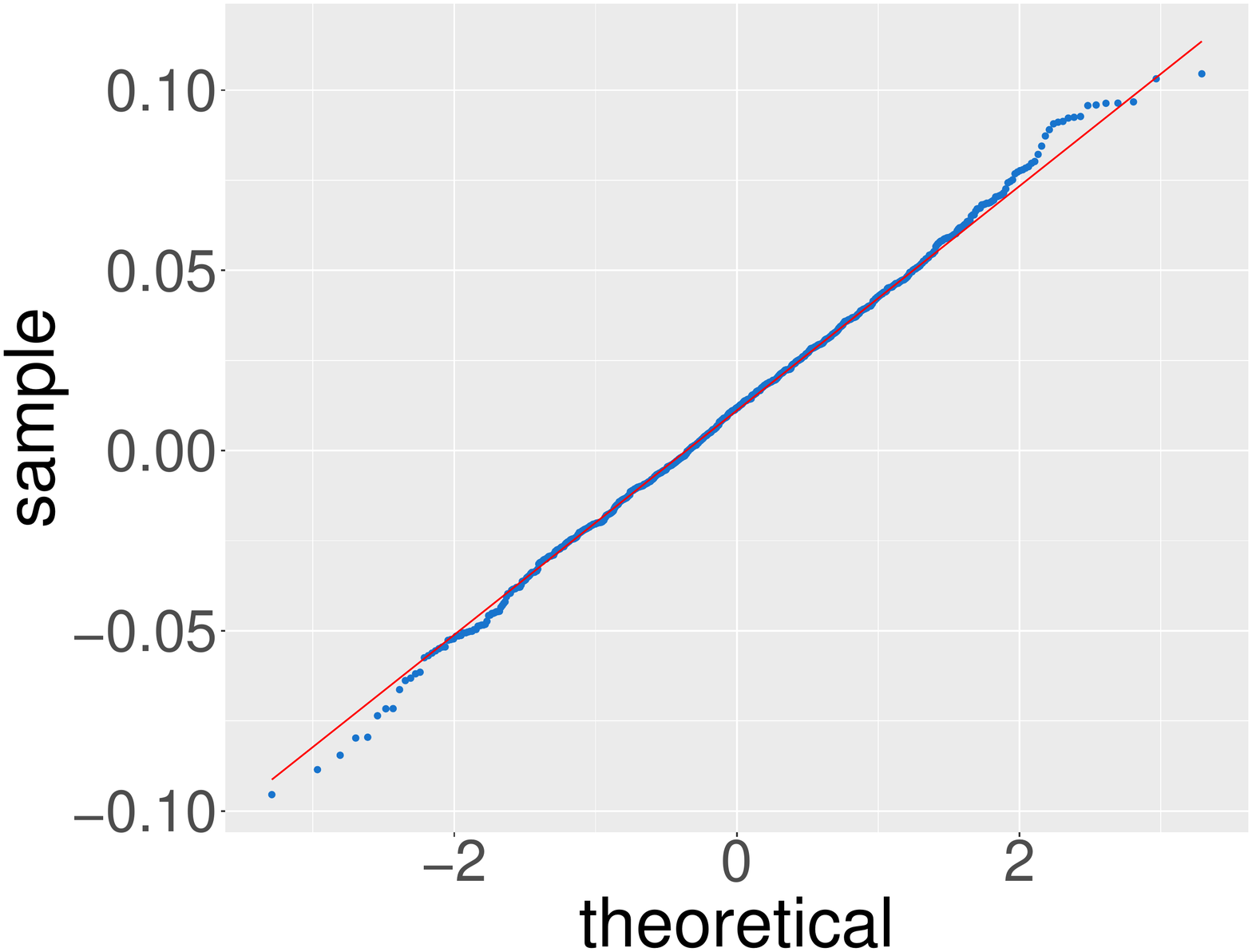}
      \caption{$n=1000$}
  \end{subfigure}
     \caption{QQ plots and histograms of $\hat{\beta}_3$}
     \label{fig:beta3}
\end{figure}

To investigate the joint normality, denote $\widehat{\bm{\beta}}_j = [\hat{\beta}_{1j}, \hat{\beta}_{2j},\hat{\beta}_{3j}]^\intercal$, where $j = 1,2,\dots, 1000$ denotes the number of replicates.  We randomly generate three pairs of $3 \times 1$, linearly independent unit vectors $(v_1,v_2)$, $(v_3,v_4)$ and $(v_5,v_6)$.  Calculate $X_{ij} = v_{ij}^\intercal \widehat{\bm{\beta}}_j$, $i = 1,2,\dots,6$, $j = 1,2, \dots, 1000$.  Scatter plots of $X_{ij}$ v.s. $X_{(i+1)j}$, $i = 1,3,5$, are shown in Figure~\ref{fig:bv_N}.  As before, we only display the cases where $n = 50, 100$ and $1000$ for demonstration.  The elliptical patterns in Figure~\ref{fig:bv_N} suggest bivariate Gaussian distribution, which is what we expect.  Moreover, for each $n$, we calculate $\widehat{\bm{\Sigma}}$, the sample covariance matrix of $\{ \widehat{\bm{\beta}}_j, j = 1,2,\dots, 1000 \}$, which is an estimator of the theoretical covariance matrix $\bm{\Sigma}$ in \eqref{eq:asymWARp}.  Let $\lVert \cdot \rVert_F$ be the Frobenius norm, we use the relative Frobenius norm, $\lVert \widehat{\bm{\Sigma}} - \bm{\Sigma} \rVert_F /\lVert \bm{\Sigma} \rVert_F$ to measure the differences between the sample covariance matrices and the theoretical asymptotic covariance matrices based on equation \eqref{eq:asymWARp}.  Figure~\ref{fig:FNorm} shows that the relative difference approaches zero as sample size increases.  All the aforementioned evidence supports the result of Theorem~\ref{t:asymWARp}.

\begin{figure}[h]
  \centering
  \begin{subfigure}[b]{0.25\textwidth}
      \centering
      \includegraphics[width=\textwidth]{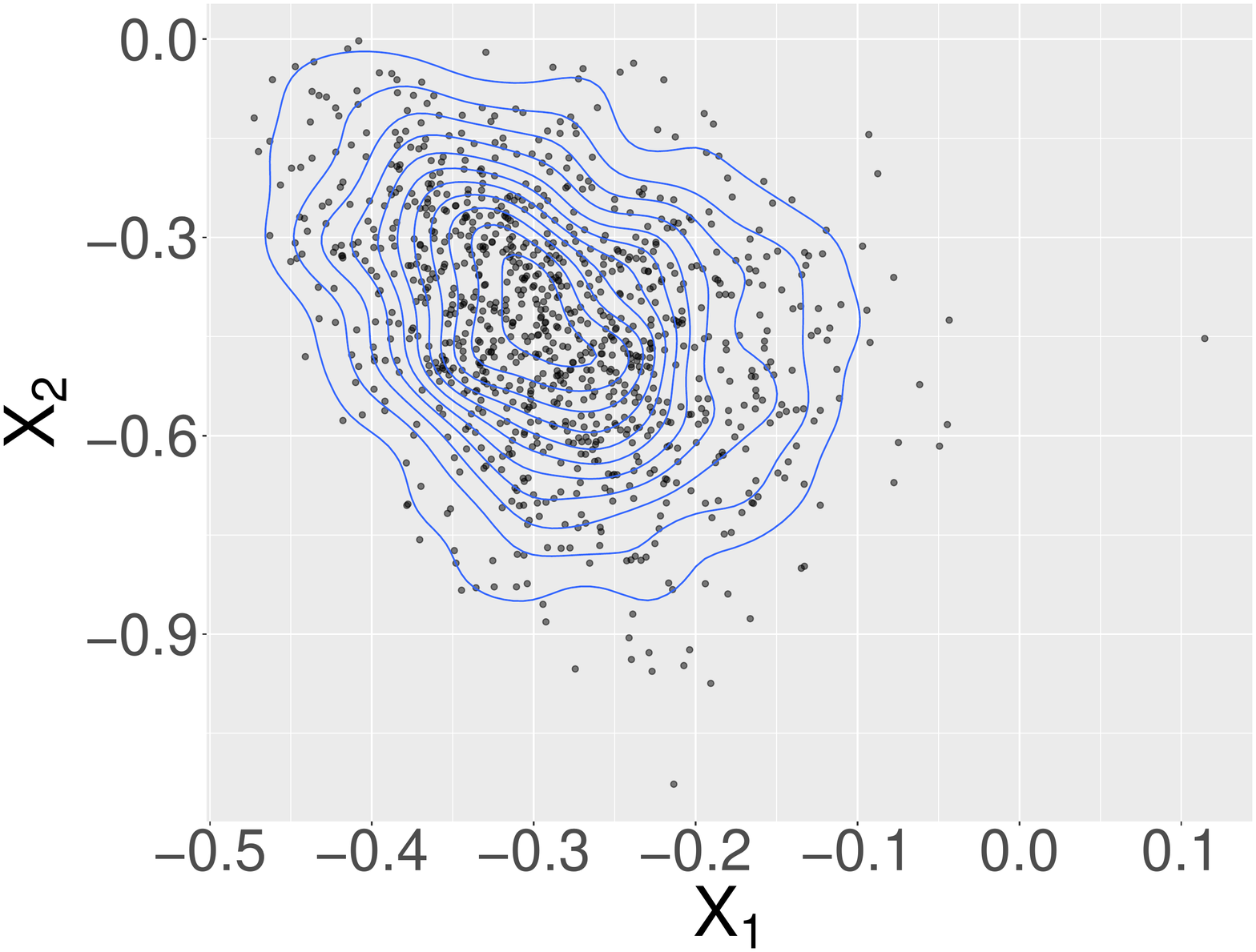}
  \end{subfigure}
  \begin{subfigure}[b]{0.25\textwidth}
      \centering
      \includegraphics[width=\textwidth]{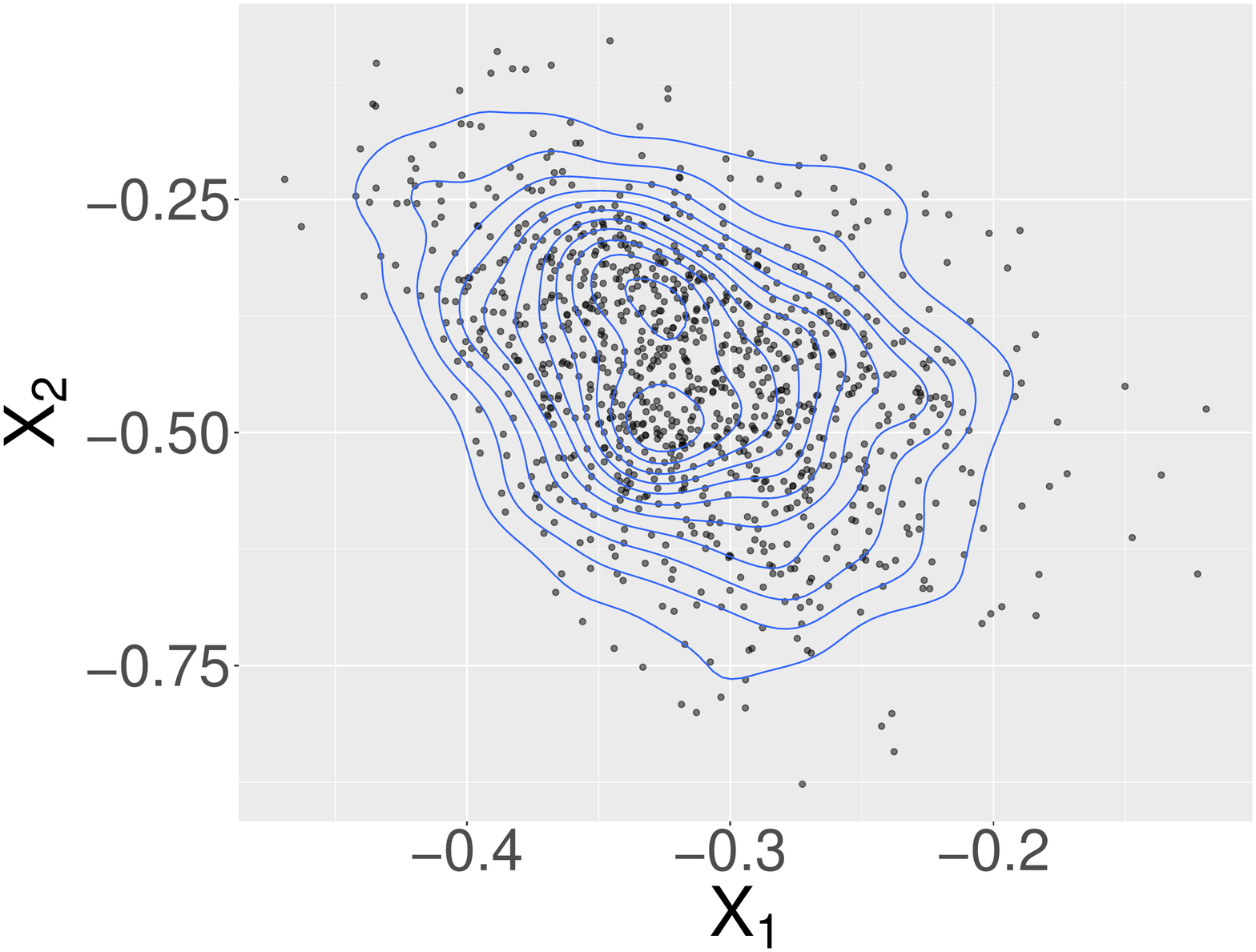}
  \end{subfigure}
  \begin{subfigure}[b]{0.25\textwidth}
      \centering
      \includegraphics[width=\textwidth]{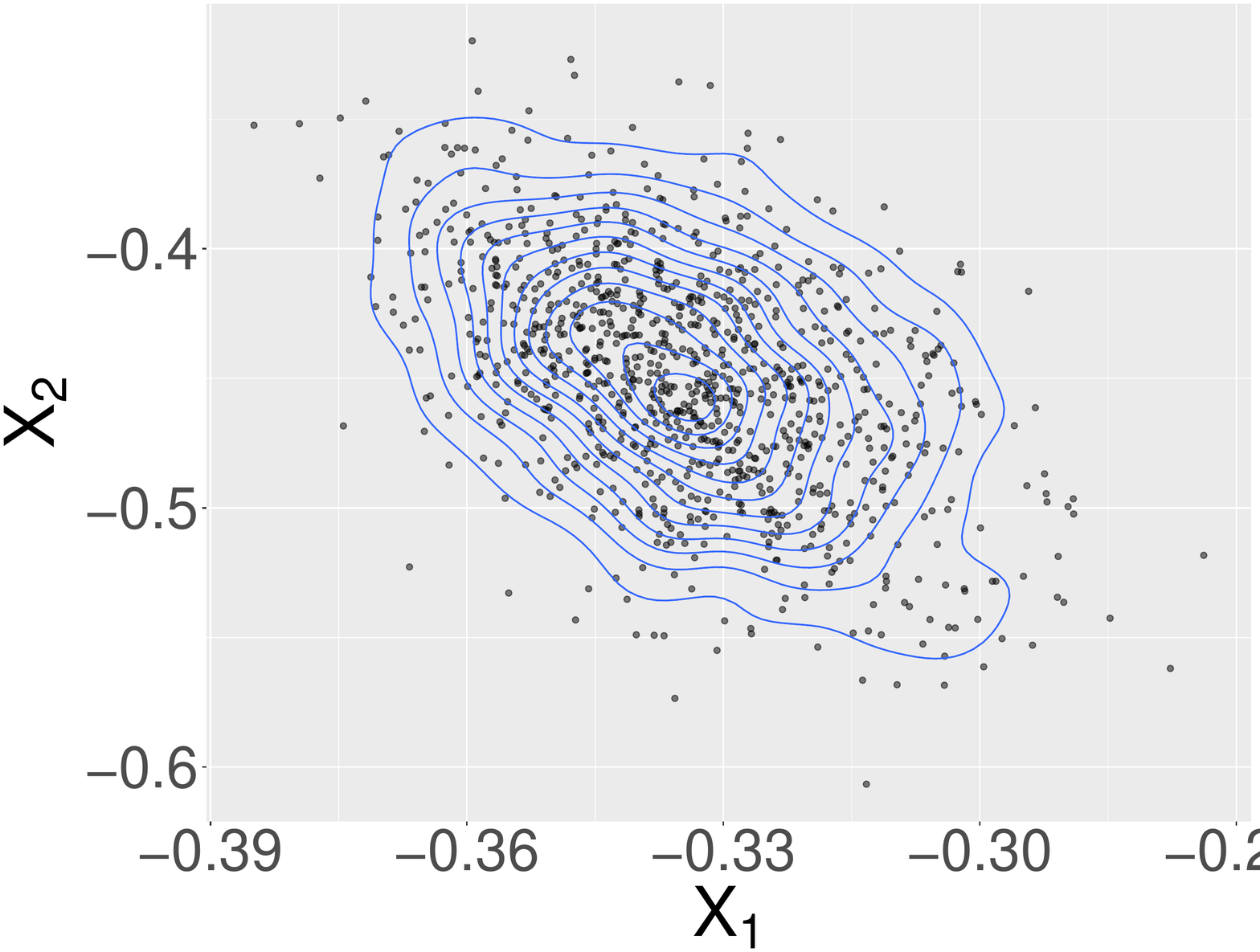}
  \end{subfigure}
\\
\begin{subfigure}[b]{0.25\textwidth}
  \centering
  \includegraphics[width=\textwidth]{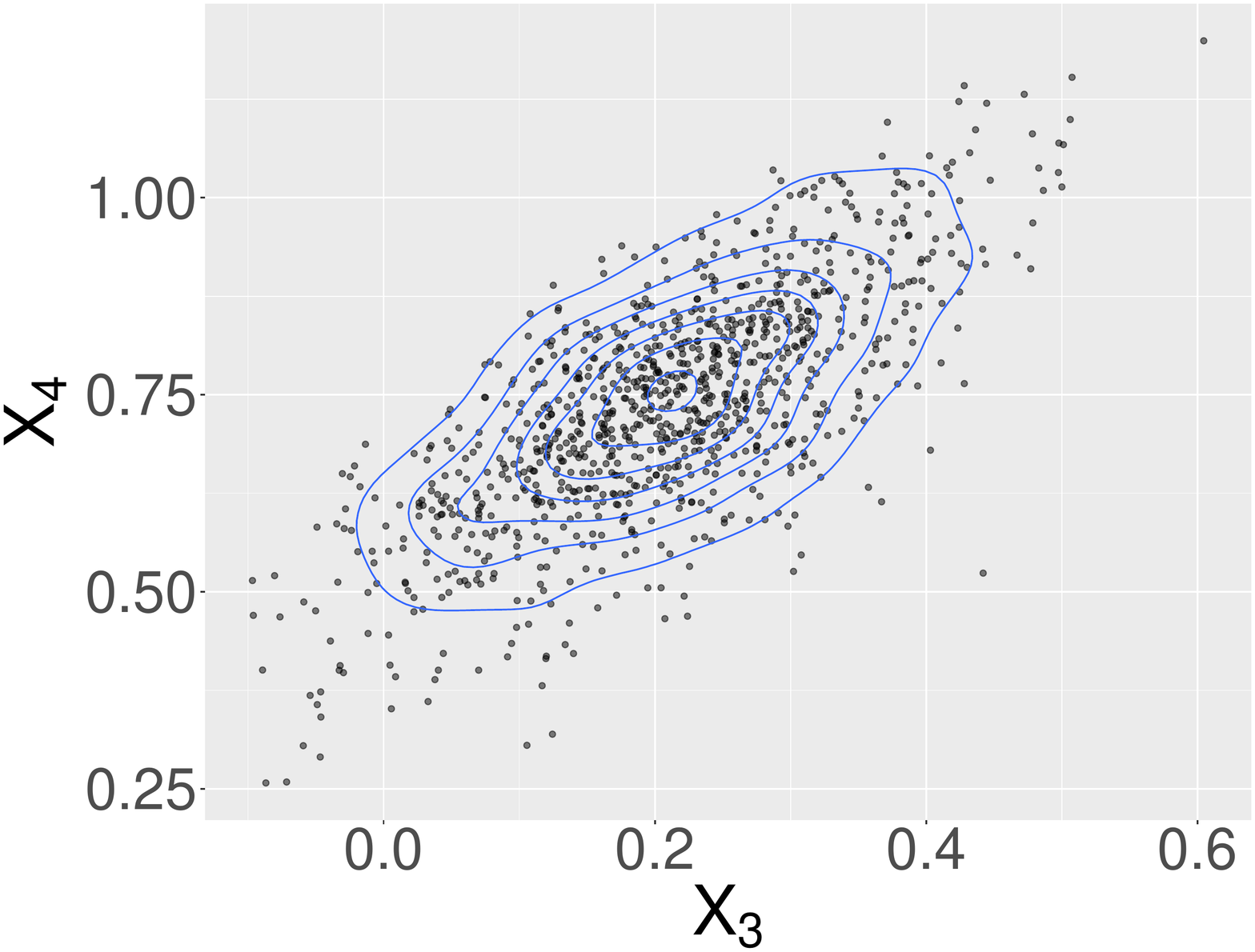}
\end{subfigure}
\begin{subfigure}[b]{0.25\textwidth}
  \centering
  \includegraphics[width=\textwidth]{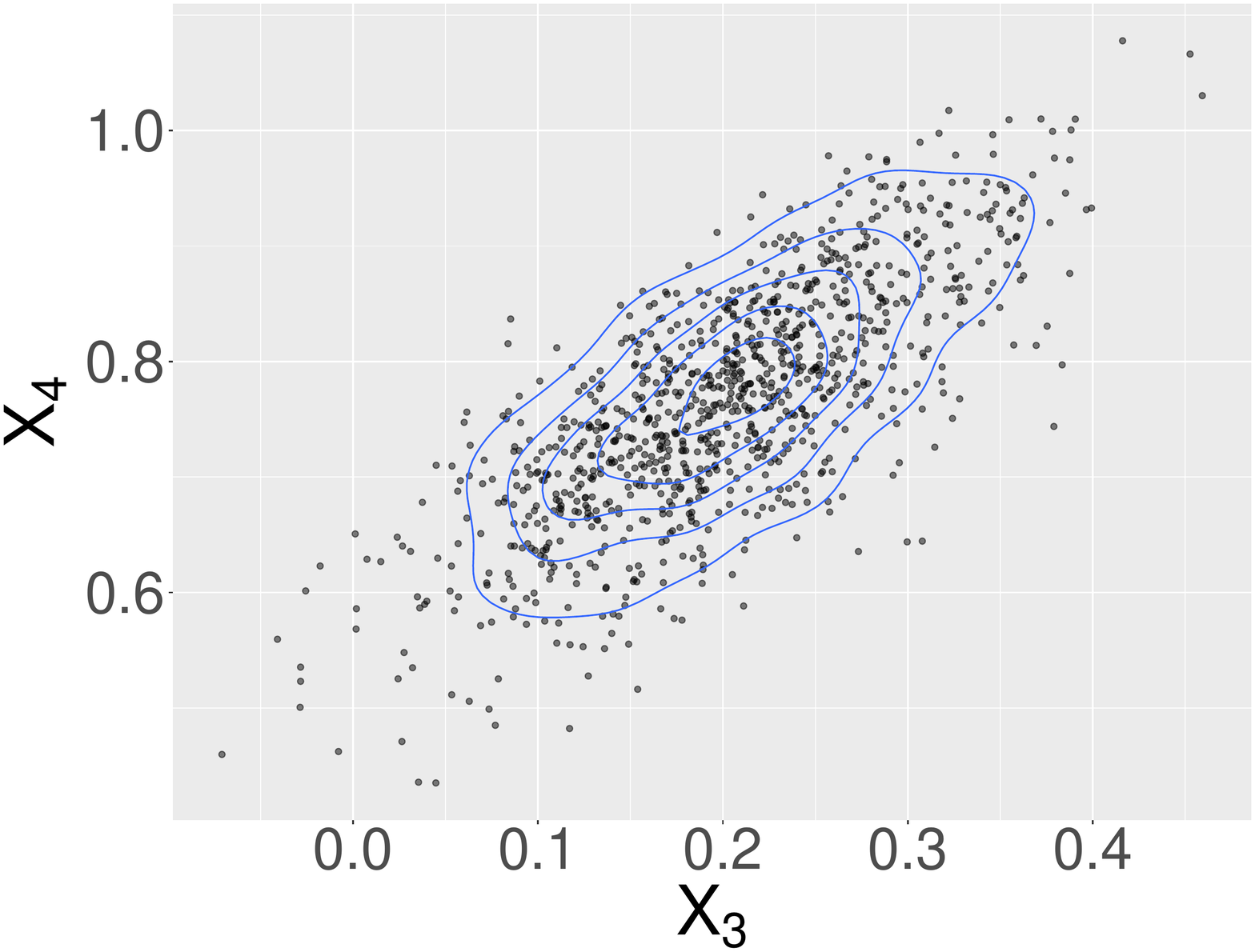}
\end{subfigure}
\begin{subfigure}[b]{0.25\textwidth}
  \centering
  \includegraphics[width=\textwidth]{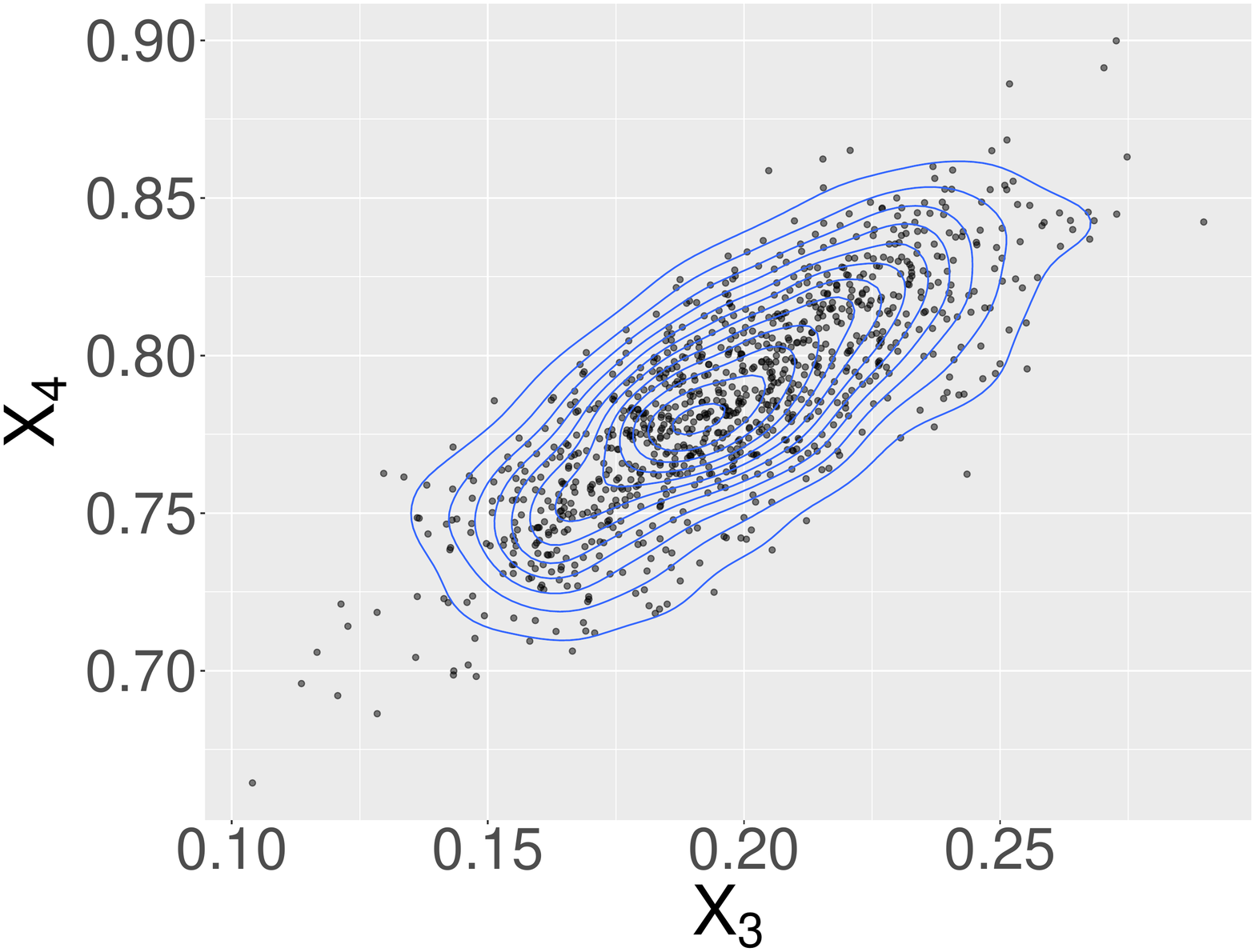}
\end{subfigure}
  \\
  \begin{subfigure}[b]{0.25\textwidth}
      \centering
      \includegraphics[width=\textwidth]{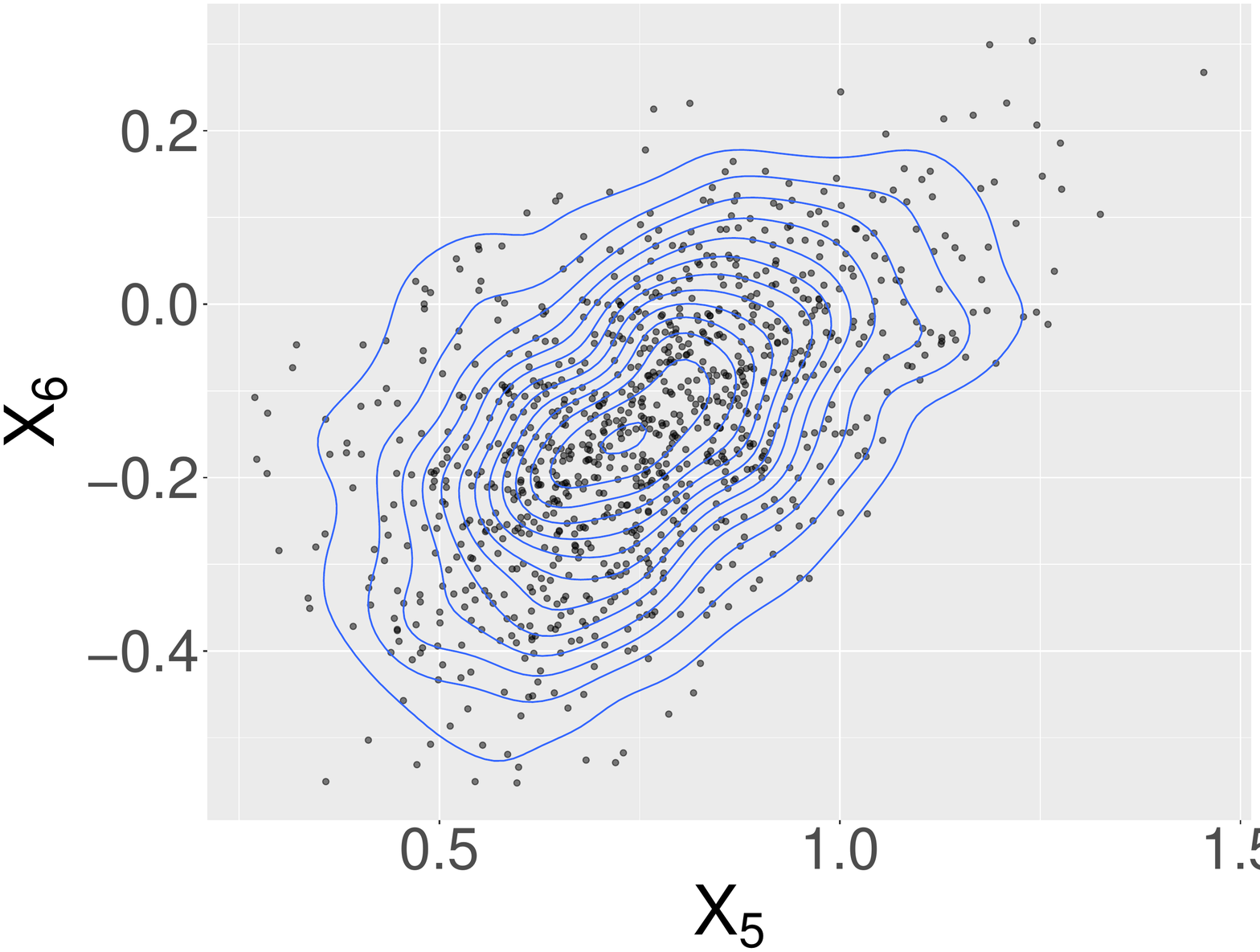}
      \caption{$n=50$}
  \end{subfigure}
  \begin{subfigure}[b]{0.25\textwidth}
      \centering
      \includegraphics[width=\textwidth]{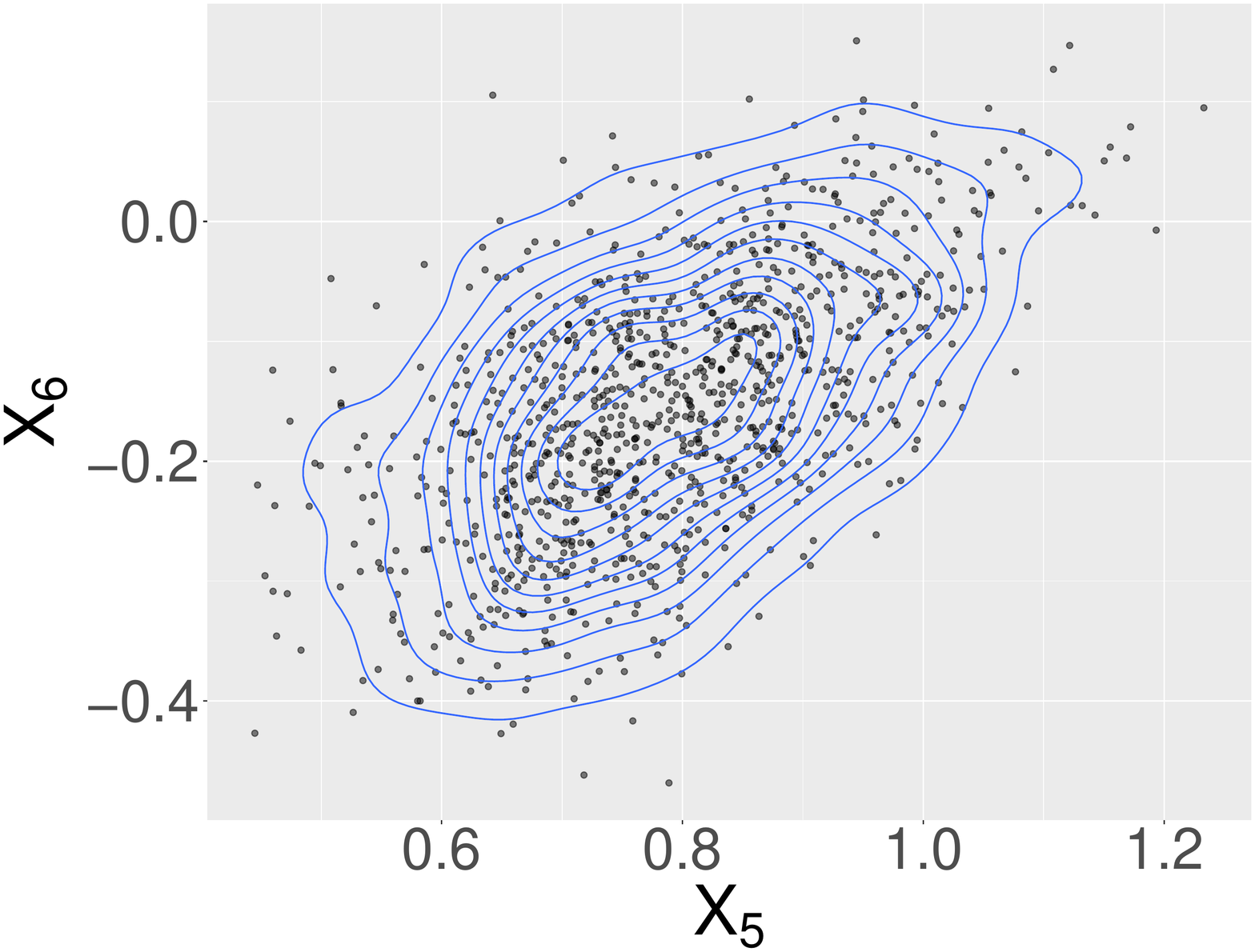}
      \caption{$n=100$}
  \end{subfigure}
  \begin{subfigure}[b]{0.25\textwidth}
      \centering
      \includegraphics[width=\textwidth]{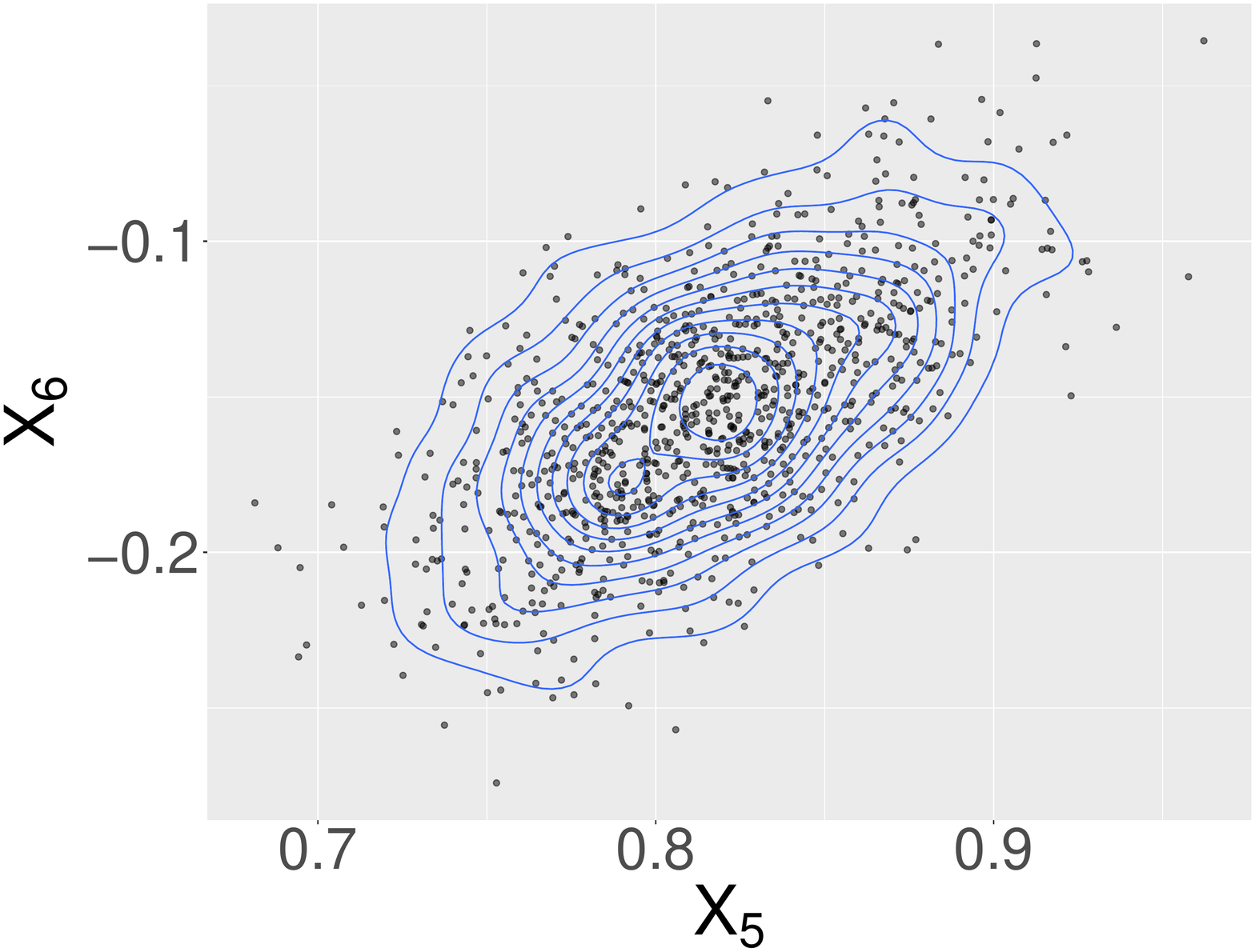}
      \caption{$n=1000$}
  \end{subfigure}
     \caption{Scatter Plots of $X_i$ v.s. $X_{i+1}$, $i = 1,3,5$.}
     \label{fig:bv_N}
\end{figure}

\begin{figure}[h]
    \centering
        \includegraphics[width=0.5\textwidth]{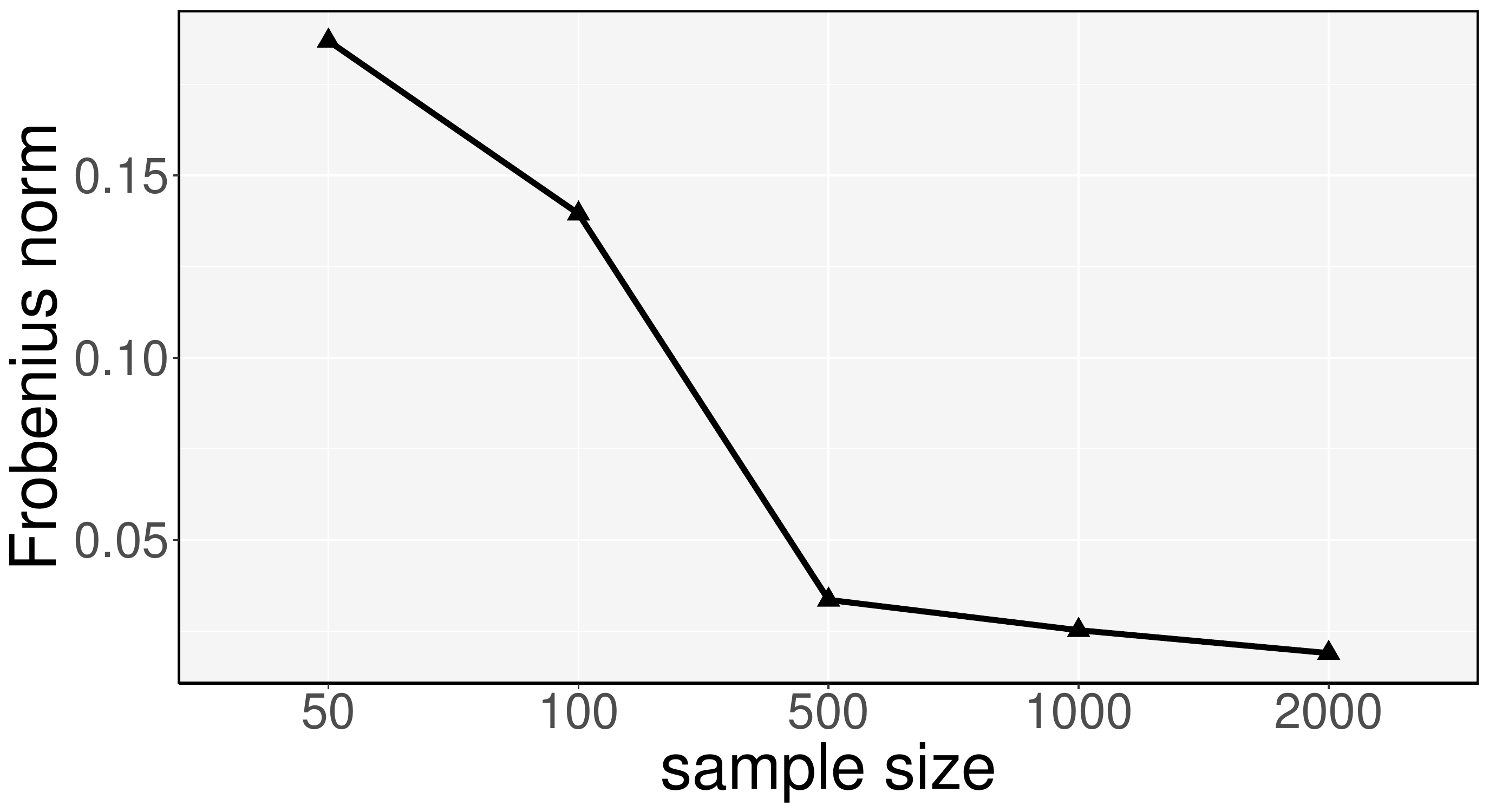}
        \caption{Difference between sample and theoretical covariance matrices}
        \label{fig:FNorm}
\end{figure}

\section{Comparison with Other Forecasting Methods}
\label{sec:comp}
We proceed to applying our WAR(1) model to real data sets and comparing its forecasting performance with that of four other density time series forecasting approaches, studied in \cite{kokoszka2019forecasting}, where they are introduced in great detail.

\subsection{Benchmark Methods} We consider the following
existing methods.

\noindent
\textit{Compositional Data Analysis.}
The general methodology of Compositional Data Analysis
 has been used in various context for about four
decades, see \cite{pawlowski:2015}  for a comprehensive account.
Inspired by the similarity between density observations and compositional data, \cite{kokoszka2019forecasting} proposed to remove the constrains on $f_t$ by applying a centered log-ratio transformation. The forecast is produced by first applying FPCA to the output of these transformations, then fitting a time series model to the coefficient vectors.

\noindent
\textit{Log Quantile Density Transformation.}
This approach is based on the work of \cite{petersen2016functional} and modified by \cite{kokoszka2019forecasting}. It transforms the density $f_t$ to a Hilbert space where multiple FDA tools can be applied to forecast the transformed density, then apply the inverse transformation to get the forecast density back. Specifically, a modified log quantile density(LQD) transformation was applied to get the density forecasts.

\noindent
\textit{Dynamic Functional Principal Component Regression.}
This method was implemented exactly the same way as in \cite{horta2018dynamics}.  Essentially it applies FPCA with a specific kernel, then forecasts the scores with a vector autoregressive(VAR) model. Predictions are produced by reconstructing densities with predicted scores. Negative predictions are replaced by zero and the reconstructed densities are standardized.

\noindent
\textit{Skewed t Distribution.}
Proposed by \cite{wang2012state}, this method fits a skewed $t$ density  to data at each time point.  Predictions are made by fitting a VAR model to the MLEs of the coefficients of the $t$ distribution.

\subsection{Data sets and Performence Metrics}
The data sets we use are monthly Dow Jones cross-sectional returns from April 2004 to December 2017, monthly S\&P 500 cross-sectional returns from April 2004 to December 2017, Bovespa 5-minute intraday returns that cover 305 trading days from September 1, 2009, to November 6, 2010, and XLK, the Technology Select Sector SPDR Fund returns  sampled at the same time intervals as the Bovespa data.

To measure the accuracy of forecast results, we consider the following metrics
\begin{enumerate}
    \item The discrete version of Kullback-Leibler divergence (KLD; see \cite{kullback1951information})
    \item The square root of the Jensen-Shannon divergence (JSD; see \cite{shannon1948mathematical})
    \item $L_p$-norms with $p = 1,2,\infty$.
\end{enumerate}

Again, we refer to \cite{kokoszka2019forecasting} for more details on the data sets and these metrics as we carry out the comparison exactly the same way as in their paper to keep the comparison consistent.

\subsection{WAR($p$) Models}
\label{ssec:DA}
We implement a data-driven procedure to select the order $p$ and the size of training window $K$.  Denote by $n$ the present time. We use $K$ samples in the time interval $[n-K+1, n]$ to predict $f_{n+1}$.  For each $t \in[n-K+1, n]$ we compute the prediction $\hat{f}_{t, p}$ based on the WAR($p$) model and samples in the interval $[t-K, t-1].$  Let $\rho$ be a performance metric, $I_p$ and $I_K$ be some sets for possible choices of $p, K$, respectively.  We evaluate
\[
R_{p}(n,K)=\sum_{t \in[n-K+1, n]} \rho\left(\hat{f}_{t, p}, f_{t}\right), \quad p \in I_p \text{ and } K \in I_K.
\]

Denote by $\hat{p}(n)$ and $\widehat{K}(n)$, the value of $p$ and $K$ which minimizes $R_{p}(n,K),$ we use $\mathrm{WAR}(\hat{p}(n))$ and the training window $[n-\widehat{K}(n)+1, n]$ to predict $f_{n+1}.$ To simplify the procedure, we first use the $\mathrm{WAR}(1)$ model to determine $K$.  After choosing training windows for each day, we then determine the order $p$.

\subsection{Fully Functional WAR($p$) Models} Similar to the idea of the WAR($p$) model, one can build a fully functional model in the tangent space to forecast and use the exponential map to recover the forecast density.   As mentioned in the introduction, the case $p = 1$ was investigated in the recent preprint \cite{chenWassReg}. We specify the general order $p$ model as follows. Let $\phi_j(u,v)$ be bivariate kernels, $j = 1,2,\dots, p$, the fully functional WAR($p$) model is defined by
\begin{equation}
\label{eq:ffWARp}
    T_t(u) - u = \sum_{j=1}^p \int_\R \phi_j(u,v) (T_{t-j}(v) - v) \fp(v)\dv + \epsilon_t(u).
\end{equation}
The estimation procedure follows by fitting the usual functional AR($p$) 
model defined in \citep{bosq:2000}  to the observed quantile functions $Q_t$, yielding estimates $\hat{\varphi}_j$ of the kernels $\varphi(s,s') = \phi_j(\Qp(s),\Qp(s')).$  In the case $p = 1,$ this matches the estimation of \cite{chenWassReg}.  Similarly to the WAR($p$) model, forecasts are then constructed in the tangent space using the plug-in estimates $\hat{\phi}_j(u,v) = \hat{\varphi}_j(\hat{F}_\oplus(u), \hat{F}_\oplus(v)),$ followed by application of the exponential map \eqref{eq:exp}.

In particular, we implement the same data-adaptable procedure as described in Section~\ref{ssec:DA} with one additional component.  The method used to fit the functional AR($p$) model to the quantile functions performs functional principal component analysis as a first, which requires one to specify the number of components to retain.  We thus introduce an additional tuning parameter $R$ that represents proportion of variance required by the FPCA.  Specifically, in the forecasting procedure, we reconstruct $\widehat{T}_t(u) -\id$ with the smallest number of PCs that explain $R$ percent of variance; see, for example, Section 3.3 of \cite{HKbook}.
We incorporate $R$ into the data-driven procedure to determine its value for forecasting.  Specifically, we compute
\[
R_{p}(n,K,R)=\sum_{t \in[n-K+1, n]} \rho\left(\hat{f}_{t, p}, f_{t}\right),
\]
where $p \in I_p, R \in I_R \text{ and } K \in I_K.$  For each $n$, we use the optimal $\hat{p}(n)$, $\widehat{K}(n)$ and $\widehat{R}(n)$ to predict $\hat{f}_{n+1}$.
Within the fully functional WAR($p$) model, some initial results show that the case $p = 1$ outperforms higher order cases across all different settings of $K$ and $R$, hence to simplify the procedure, we fix $p = 1$ and implement the procedure to choose $R$ and $K$.

\subsection{Results}
The WAR($p$) model was tuned with both Kullback-Leibler divergence and Wasserstein distance under the data-adaptable procedure with $I_p = \{1,2,\dots, 10\}$, while the fully functional WAR($p$) model was only tuned with the former one for demonstration purpose with $I_R = \{0.4,0.5,\dots,0.8\}$.  For both approaches, we use $I_K = \{20, 62\}$ for the intra-day data sets and $I_K = \{12, 24, 48\}$ for the monthly cross-sectional data sets.

From Tables~\ref{tab:XLK}--\ref{tab:SP}, we can see both WAR($p$) and fully functional WAR(p) models produce excellent predictions in the XLK and DJI data sets. (In 19 out of 20 cases the
WAR($p$) performs better than the fully functional WAR($p$).)
Indeed, the WAR($p$) model is the top performer in these two data sets. In the XLK data set, the WAR($p$) model tuned by KL divergence topped under three performance metrics, and ranked second under the rest two metrics with small margins to the top performer LQDT.  In the DJI data set, the WAR(p) model topped under two metrics, and again, with narrow margins to the top performers under the rest of the metrics.  Specifically, we can see in DJI data set, the average rank of forecasting performance of WAR(p) model (tuned by KL divergence) is $1.6$, while the two contenders LQDT and CoDa (no standardization) scored $2.8$ and $1.6$, respectively, which put the WAR(p) model in tie with the CoDa method as the top performers.

The performance of WAR($p$) model in the Bovespa and S\&P500 data sets is not as competitive.  Since our models rely on stationarity, we informally investigate the stationarity condition for each data set.  In Figure~\ref{fig:EDA}, we plot the Wasserstein distance from all densities used in forecasting to their sample Wasserstein mean. These distances are larger in the Bovespa and S\&P500 data sets, compared to those in XLK and DJI data sets. Indeed, the average Wassertein distance from these plots in Figure~\ref{fig:EDA} are XLK: $4.045$, Bovespa: $4.255$, DJI: $421.25$ and S\&P500: $571.63$.  Hence stationarity could be a potential cause for a weaker performance of the WAR($p$) model in the Bovespa and S\&P500 data sets. Generally, no prediction method can be expected to be uniformly superior
 across all data sets and all time periods and according to all metrics. In our
 empirical study,
 The WAR($p$) methods performs best  for some data sets, and  the
 LQDT and CoDa methods  perform better for others.

\begin{figure}[H]
  \centering
  \begin{subfigure}[b]{0.46\textwidth}
      \centering
      \includegraphics[width=\textwidth]{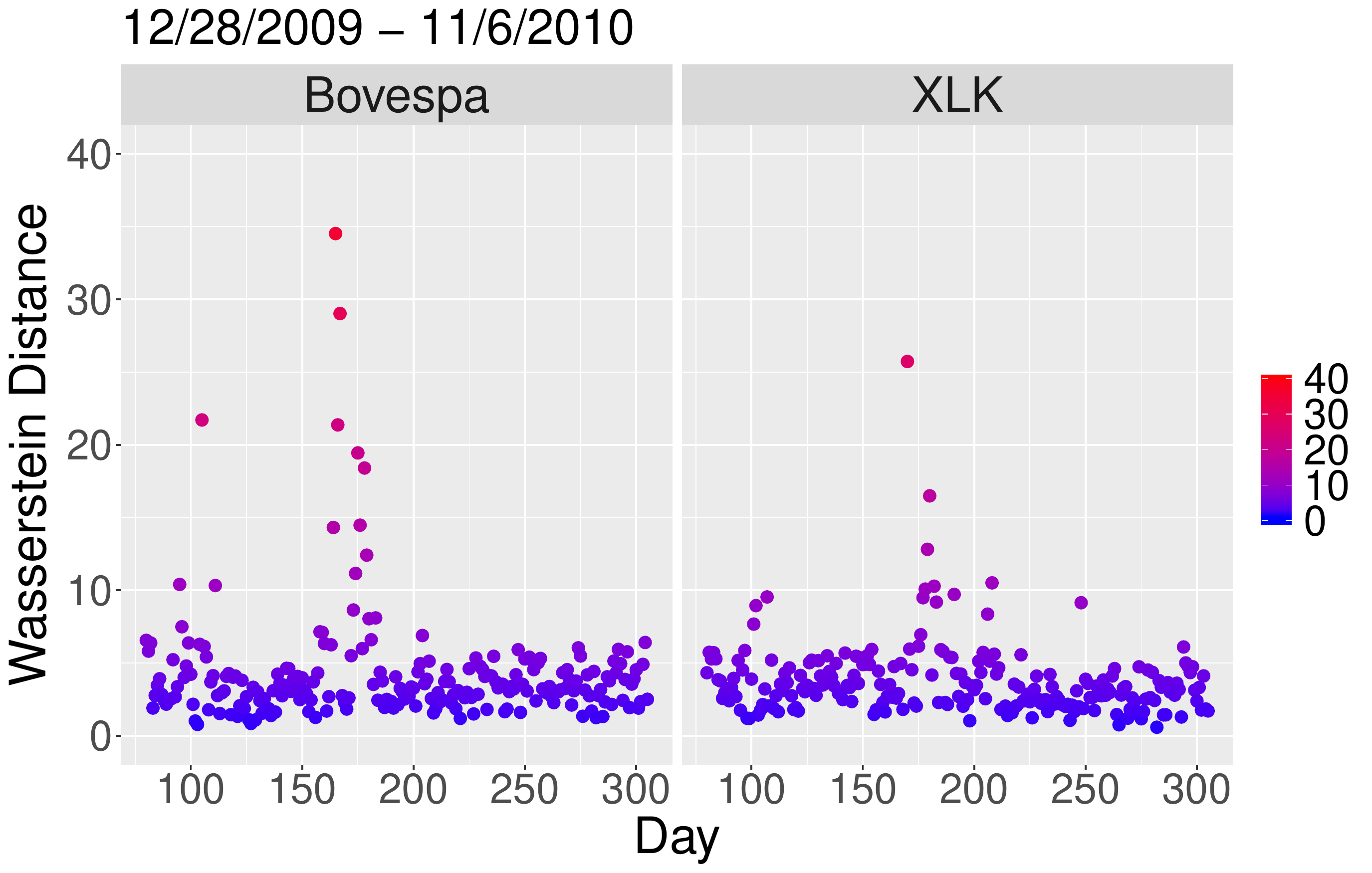}
      \caption{Intraday Returns}
  \end{subfigure}
  \begin{subfigure}[b]{0.46\textwidth}
    \centering
    \includegraphics[width=\textwidth]{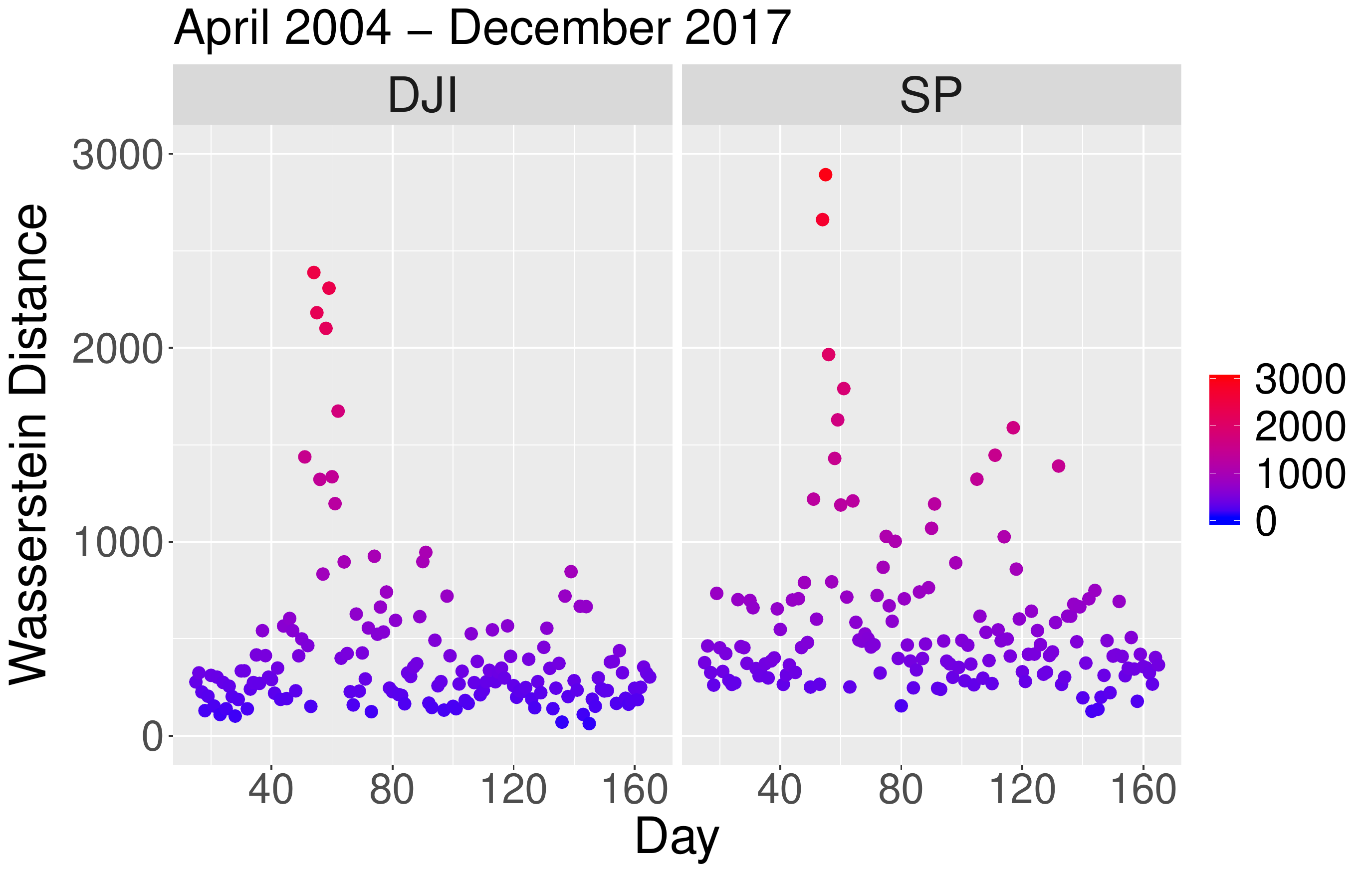}
    \caption{Monthly Returns}
\end{subfigure}
     \caption{Wasserstein Distance Between Sample Points To Their Wasserstein Mean}
     \label{fig:EDA}
\end{figure}

\begin{table}[H]
\caption{Forecast accuracies of five methods, XLK intraday returns}\label{tab:XLK}
\centering
\resizebox{0.85\columnwidth}{!}{%
\begin{tabular}{llllll}
Method                           & KLdiv            & JSdiv              & JSdiv.geo        & L1                  & Wasserstein                    \\ \hline\hline
Horta-Zieglman                   & 0.2831           & 1.5095             & 4.2909           & 11257.47            & 3.97 $\times 10^{-4}$          \\
LQDT                             & 0.3831           & \textbf{1.3411}    & 5.2559           & \textbf{10891.16}   & 3.97 $\times 10^{-4}$          \\
CoDa(standardization)            & 0.3231           & 2.6076             & 4.9518           & 14689.67            & 4.04$\times 10^{-4}$           \\
CoDa(no standardization)         & 0.3579           & 2.8919             & 5.2173           & 15053.57            & 4.11$\times 10^{-4}$           \\
Skewed-$t$                       & 0.2666           & 1.7418             & 3.8736           & 13701.89            & 4.16$\times 10^{-4}$           \\
WAR($p$) (KL)                    &\textbf{0.1761}   & 1.4408             &\textbf{2.7569}   & 11214.40            & $\mathbf{3.32\times 10^{-4}}$  \\
WAR($p$) (WD)                    & 0.1827           & 1.4713             & 2.8730           & 11418.83            & 3.38$\times 10^{-4}$           \\
Fully Functional WAR($p$) (KL)   & 0.1837           & 1.4753             & 2.8821           & 11576.42            & 3.36$\times 10^{-4}$           \\\hline
\end{tabular}
}
\end{table}

\begin{table}[H]
\caption{Forecast accuracies of five methods, Bovespa intraday returns}\label{tab:Bvsp}
\centering
\resizebox{0.85\columnwidth}{!}{%
\begin{tabular}{llllll}
Method 					        & KLdiv  		   & JSdiv  		  & JSdiv.geo        & L1        		   & Wasserstein                     \\ \hline\hline
Horta-Ziegelman                 & 0.4009           & 1.9098           & 6.1713           & 16993.19            & 4.47$\times 10^{-4}$            \\
LQDT                            & 0.4258           & \textbf{1.6634}  & 6.0687           & \textbf{16313.87}   & 3.09$\times 10^{-4}$            \\
CoDa(standardization)           & \textbf{0.2271}  & 1.7360           & \textbf{3.7000}  & 16351.17            & $\mathbf{3.08\times 10^{-4}}$   \\
CoDa(no standardization)        & 0.2278           & 1.7448           & 3.7038           & 16391.76            & 3.10$\times 10^{-4}$            \\
Skewed-$t$                      & 0.2750           & 1.9909           & 3.9774           & 19261.90            & 4.13$\times 10^{-4}$            \\
WAR($p$) (KL)                   & 0.2534           & 1.8769           & 4.1364           & 17153.26            & 3.92$\times 10^{-4}$            \\
WAR($p$) (WD)                   & 0.2383           & 1.8065           & 3.8622           & 16878.16            & 3.86$\times 10^{-4}$            \\
Fully Functional WAR($p$) (KL)  & 0.2550           & 1.8963           & 4.1478           & 17226.79            & 3.79$\times 10^{-4}$            \\\hline
\end{tabular}
}
\end{table}

\begin{table}[H]
  \caption{Forecast accuracies of five methods, Dow-Jones cross-sectional returns}\label{tab:DJI}
\centering
\resizebox{0.85\columnwidth}{!}{%
\begin{tabular}{llllll}
Method   				        & KLdiv  			 & JSdiv			   & JSdiv.geo          & L1                 & Wasserstein 	                     \\ \hline \hline
Horta-Ziegelman                 & 1.3070             & 3.5986              & 9.4038             & 1039.36            & 3.99$\times 10^{-2}$              \\
LQDT                            & 1.0421             & \textbf{3.0129}     & 6.9443             & 948.77             & 2.61$\times 10^{-2}$              \\
CoDa(standardization)           & 0.6658             & 3.2359              & 5.1780             & 953.42             & 2.63$\times 10^{-2}$              \\
CoDa(no standardization)        & 0.6510             & 3.1785              & \textbf{5.0572}    & \textbf{943.62}    & $\mathbf{2.59\times 10^{-2}}$     \\
Skewed-$t$                      & 1.3590    	     & 5.2532       	   & 10.4784            & 1324.97            & 3.82$\times 10^{-2}$              \\
WAR($p$) (KL)                   & \textbf{0.6448}    & 3.0407              & 5.0965             & 947.0983           & $\mathbf{2.59\times 10^{-2}}$     \\
WAR($p$) (WD)                   & 0.6616             & 3.1838              & 5.1538             & 975.3546           & 2.63$\times 10^{-2}$              \\
Fully Functional WAR($p$) (KL)  & 0.6480             & 3.0821              & 5.0993             & 952.4613           & 2.61$\times 10^{-2}$              \\\hline
\end{tabular}
}
\end{table}

\begin{table}[H]
\caption{Forecast accuracies of five methods, S\&P 500 cross-sectional returns}\label{tab:SP}
\centering
\resizebox{0.85\columnwidth}{!}{%
\begin{tabular}{llllll}
Method                          & KLdiv              & JSdiv             & JSdiv.geo          & L1                 & Wasserstein                     \\ \hline \hline
Horta-Ziegelman                 & 0.5315             & 1.9986            & 3.1032             & 222.62             & 6.94$\times 10^{-2}$            \\
LQDT                            & 0.4252             & 1.8165            & 2.5232             & 213.10             & $\mathbf{4.78\times 10^{-2}}$   \\
CoDa(standardization)           & \textbf{0.3156}    & \textbf{1.7994}   & \textbf{2.3023}    & \textbf{208.71}    & 6.45$\times 10^{-2}$            \\
CoDa(no standardization)        & 0.3233             & 1.8465            & 2.3550             & 211.29             & 6.50$\times 10^{-2}$            \\
Skewed-$t$                      & 0.5560             & 3.0961            & 3.6383             & 286.04             & 6.67$\times 10^{-2}$            \\
WAR($p$) (KL)                   & 0.4454             & 1.9578            & 2.7626             & 213.2848           & 7.37$\times 10^{-2}$            \\
WAR($p$) (WD)                   & 0.4349             & 1.9166            & 2.7163             & 216.4794           & 7.23$\times 10^{-2}$            \\
Fully Functional WAR($p$) (KL)  & 0.4762             & 2.1384            & 2.8143             & 223.7424           & 7.91$\times 10^{-2}$            \\ \hline
\end{tabular}
}
\end{table}

\section{Discussion}\label{s:dis}
The WAR($p$) model provides an interpretable approach to model density time series by representing each density through its optimal transport map from the Wasserstein mean.  Under this representation, stationarity of a density time series, whose elements reside in a nonlinear space, is defined according to the usual stationarity of the random transport maps in the tangent space, which is a separable Hilbert space.  This paper demonstrates how autoregressive models, built on the tangent space corresponding to the Wasserstein mean, possess stationary solutions that, in turn, define a stationary density time series.  This link is not automatic, however, due to the fact that the logarithmic map lifting the densities to the tangent space is not surjective, and constraints are necessary to ensure the viability of the model.  

In our empirical analysis, the proposed WAR($p$) model emerged as a competitive forecasting method for financial return densities when compared to various existing methods and using several different metrics for forecasting accuracy.  The option of selecting the  order $p$ 
to suit a specific purpose is a useful future of the model. We proposed 
a data-driven procedure that targets optimal prediction in terms of 
a specific metric, but other objectives, including a model fit in terms
of information criteria could be used as well. Order selection, 
which is central to most time series methods, is not available in 
the fully functional model explored in \citep{chenWassReg}.  

There are many potential research directions that emerge from our 
work. It can be expected  that the theory for more general ARMA($p, q$) 
processes can be developed extending the arguments we used, keeping 
in mind that theoretical complications, even for scalar data, are not trivial.  
In the case of scalar, but not necessarily vector, observations, 
ARMA processes provide more parsimonious models, but their predictive
performance is not necessarily better that that of AR($p$) models. ARMA 
predictors are constructed through the Durbin-Levinson or innovations
algorithms, but truncated predictors, effectively equivalent to order selected AR($p$) models, generally perform better, see e.g. Section 3.5 of \citep{shumway:stoffer:2018}. Nevertheless such an extension may 
be motivated by other applications and might be useful.  
We have seen that, as for any time series models, assumptions of stationarity are key to establishing theoretical properties, such as the asymptotic normality of the WAR parameters and Wasserstein autocorrelations, and to good forecasting performance.
Research on testing stationarity and detecting possible
change points may be facilitated by our work. 
Research of this type has been done for linear functional time series, 
see e.g. \citep{berkes:gabrys:horvath:kokoszka:2009} 
\citep{horvath:kokoszka:rice:2014}, \citep{zhang:shao:2015}, 
 but not for density times series. In general, it is hoped that this 
 paper not only provides a set of theoretical and practical tools, but 
also lays out a framework within which questions of inference for 
density time series can be addressed.


\references

\newpage
\beginsupplement
\label{supplement}

\subsection{Proofs of Theorem~\ref{t:sol} and Theorem~\ref{t:WARpConv}}

Theorem~\ref{t:sol} is a special case of Theorem~\ref{t:WARpConv} when $p = 1$.  Therefore, it suffices to prove Theorem~\ref{t:WARpConv}.
 We begin with a Lemma needed in the proof. It extends  Proposition
3.1.2 in \cite{brockwell1991time} and the discussion that follows that
Proposition to Hilbert space valued time series.

\begin{lem} \label{l:BD312}
Suppose $\{ X_t \}$ is a stationary, according to Definition~\ref{d:H-sta},  sequence in a  separable Hilbert space.

(i) If $\sum_{j=1}^\infty |\psi_j| < \infty$, then the sequence
$\psi(B) X_t := \sum_{j=0}^\infty \psi_j X_{t-j}$ is well defined
and and is stationary. (The convergence is in the space of square integrable
random elements.)

(ii) Consider three  filters $\alpha(B), \beta(B), \gamma(B)$ which
satisfy $\sum_{j=1}^\infty |\alpha_j| < \infty$,
$\sum_{j=1}^\infty |\beta_j| < \infty$ and
define the filter $\gamma(B)$ by setting $\gamma_k = \sum_{l=0}^k \alpha_l \beta_{k-l}$, $k \ge 0$.
Then, $\sum_{k=1}^\infty |\gamma_k| < \infty$ and
$\alpha(B) (\beta(B) X_t ) = \gamma(B) X_t$.
\end{lem}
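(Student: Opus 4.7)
The plan is to follow the template of the scalar proof in Proposition 3.1.2 of \cite{brockwell1991time}, with modifications required to operate in the Bochner space $L^2(\Omega; \mathcal{H})$ of square-integrable random elements in the separable Hilbert space $\mathcal{H}$. For part (i), the first step is convergence of the series $\sum_{j=0}^\infty \psi_j X_{t-j}$. Using the triangle inequality in $L^2(\Omega; \mathcal{H})$ together with stationarity, for any $m \le n$,
\[
\left\{\mathbb{E}\left\lVert \sum_{j=m}^n \psi_j X_{t-j}\right\rVert^2\right\}^{1/2} \le \sum_{j=m}^n |\psi_j|\,\{\mathbb{E}\lVert X_0\rVert^2\}^{1/2},
\]
so $\sum |\psi_j| < \infty$ forces the partial sums to be Cauchy in the complete space $L^2(\Omega;\mathcal{H})$, delivering the limit $\psi(B)X_t$. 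I would then verify the three conditions of Definition~\ref{d:H-sta}: finite second moment is immediate from the same bound; $\mathbb{E}[\psi(B)X_t] = (\sum_j \psi_j)\mu$ by continuity of the Bochner expectation, hence independent of $t$; and the autocovariance of $Y_t := \psi(B)X_t$ expands as
\[
\mathcal{G}^{(Y)}_{t,t+h}(x) = \sum_{i,j=0}^\infty \psi_i \psi_j \,\mathcal{G}^{(X)}_{t-i,t+h-j}(x) = \sum_{i,j=0}^\infty \psi_i \psi_j \,\mathcal{G}^{(X)}_{0,h+i-j}(x),
\]
depending only on $h$. The double sum converges absolutely because $\lVert \mathcal{G}^{(X)}_{0,k}(x)\rVert \le \mathbb{E}\lVert X_0\rVert^2 \lVert x\rVert$, so interchange of sum and expectation is justified by a straightforward dominated convergence argument.

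For part (ii), the absolute summability of $\{\gamma_k\}$ follows from the scalar Tonelli computation
\[
\sum_{k=0}^\infty |\gamma_k| \le \sum_{k=0}^\infty \sum_{l=0}^k |\alpha_l||\beta_{k-l}| = \left(\sum_{l=0}^\infty |\alpha_l|\right)\left(\sum_{m=0}^\infty |\beta_m|\right) < \infty,
\]
so by (i) both $\beta(B)X_t$ and $\gamma(B)X_t$ are well defined and stationary, and then $\alpha(B)(\beta(B)X_t)$ is well defined by a further application of (i) to the stationary sequence $Y_t := \beta(B)X_t$. The substantive step is to identify $\alpha(B)(\beta(B)X_t)$ with $\gamma(B)X_t$. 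I would do this by showing that both equal the unordered double sum $\sum_{(i,j)} \alpha_i \beta_j X_{t-i-j}$ in $L^2(\Omega;\mathcal{H})$. The total summability
\[
\sum_{i,j} |\alpha_i||\beta_j|\,\{\mathbb{E}\lVert X_{t-i-j}\rVert^2\}^{1/2} = \left(\sum_i |\alpha_i|\right)\left(\sum_j |\beta_j|\right)\{\mathbb{E}\lVert X_0\rVert^2\}^{1/2} < \infty
\]
yields unconditional convergence of this double series in $L^2(\Omega;\mathcal{H})$, so rearrangement is permitted. Summing first over $j$ with $i$ fixed recovers $\sum_i \alpha_i Y_{t-i} = \alpha(B)Y_t$; summing along the diagonals $k = i+j$ recovers $\sum_k \gamma_k X_{t-k} = \gamma(B)X_t$, giving the claimed identity.

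The main obstacle I anticipate is the rigorous interchange of summations in (ii): in the scalar setting dominated convergence is routine, but here one must justify that the double series converges unconditionally in the Bochner sense and that both iterated sums agree with it. Once this is framed as absolute convergence in $L^2(\Omega;\mathcal{H})$ (using the stationarity bound on $\{\lVert X_{t-i-j}\rVert_{L^2}\}$ and absolute summability of $\{\alpha_i\}$ and $\{\beta_j\}$), the rearrangement reduces to an application of Fubini's theorem for series in a Banach space, completing the proof.
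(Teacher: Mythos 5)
Your proof is correct and follows essentially the same route as the paper's: a Cauchy-sequence argument in the space of square-integrable Hilbert-space-valued random elements for part (i), verification of stationarity via the expanded autocovariance, and for part (ii) the product bound on $\sum_k|\gamma_k|$ followed by identification of $\alpha(B)(\beta(B)X_t)$ with $\gamma(B)X_t$ through the double sum $\sum_{i,j}\alpha_i\beta_j X_{t-i-j}$. The only difference is cosmetic: where the paper justifies the regrouping by an explicit estimate showing the tail of the triangular partial sums outside a rectangle vanishes, you invoke unconditional convergence of an absolutely convergent double series in the Banach space $L^2(\Omega;\mathcal{H})$, which is a clean and equally valid way to license the rearrangement.
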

\begin{proof} We may assume that the mean $\mu = E X_0 $
is zero because
it adds constant terms like $\mu \sum_{j=0}^\infty \psi_j$ or
$\mu \sum_{j=1}^\infty \alpha_j$ to all arguments.

The proof of claim (i) starts with the verification that
$\sum_{j=0}^n \psi_j X_{t-j}$ is a Cauchy sequence. This holds
because
\[
E \left\| \sum_{j=m}^n \psi_j X_{t-j}\right \|^2
= E \left |\sum_{i, j=m}^n \psi_i \psi_j
\langle X_{t-j}, X_{t-j} \rangle \right |
\le \left ( \sum_{j=m}^n |\psi_j| \right )^2 E \| X_0 \|^2.
\]
Thus, the limit $\sum_{j=0}^\infty \psi_j X_{t-j}$ exists, and by
the continuity of the norm $X\mapsto (E \| X\|^2)^{1/2}$, \\
$ E \| \sum_{j=0}^\infty \psi_j X_{t-j} \|^2 \le  ( \sum_{j=0}^\infty |\psi_j| )^2 E \| X_0 \|^2$. With the convergence established, it is immediate that
\[
E \left [
\left \langle \sum_{j=0}^\infty \psi_j X_{t-j}, x  \right \rangle
\sum_{i=0}^\infty \psi_i X_{t+h -i}
\right ] = \sum_{i, j=0}^\infty \psi_i \psi_j C_{0, j+h -i} (x)
\]
does not depend on $t$.

To prove claim (ii), observe first that
$\sum_{k=1}^\infty |\gamma_k| \le
(\sum_{j=1}^\infty |\alpha_j|)
(\sum_{j=1}^\infty |\beta_j|) < \infty$. Thus, by part (i),
\[
\gamma(B) X_t = \lim_{K\to\infty} \sum_{k=0}^K \gamma_k X_{t-k}
= \lim_{K\to\infty}\sum_{k=0}^K \left ( \sum_{l=0}^k \alpha_l \beta_{k-l}\right ) X_{t-k}.
\]
It is useful to visualize the double sum
$\sum_{k=0}^K \sum_{l=0}^k \cdots$ as a sum over the indexes in
the $(i, j)$ grid. The summation then extends over a triangle bounded
by the diagonal $j=K-i$. We can write
\[
\sum_{k=0}^K \left ( \sum_{l=0}^k \alpha_l \beta_{k-l}\right ) X_{t-k}
= \sum_{i=0}^K \alpha_i \sum_{0 \le j \le i} \beta_j X_{t-i-j}.
\]
As $K\to \infty$ and $J\to \infty$, the sum
$\sum_{i=0}^K \alpha_i \sum_{0 \le j \le J} \beta_j X_{t-i-j}$ converges
$\alpha(B)(\beta(B) X_t)$. It is easy to check that the difference
$\sum_{i=0}^K \alpha_i \sum_{i <  j \le J} \beta_j X_{t-i-j}$ tends
to zero (of the Hilbert space)  because the indices $i$ and $j$ are contained in the complement of the rectangle defined by $0 \le i <  K/2 $ and $0 \le j  <  K/2 $. Such details
are not provided in \cite{brockwell1991time}, but an argument like this
would be needed even in the scalar case.
\end{proof}

\begin{proof}[Proof of Theorem~\ref{t:WARpConv}]
Recall that we work in the separable Hilbert space $\mc{T}_{\fp} \subset L^2(\R, \fp(u)\du)$ with the inner product $\langle h,g \rangle = \int_\R h(u) g(u) f_\oplus(u)\du$ and norm $\lVert g \rVert = \langle g,g \rangle^{1/2}.$   To prove claim (i), we first show that the series
 $\sum_{i=1}^\infty \psi_i \epsilon_{t-i}$ converges
 absolutely almost surely.  Mean
 square convergence follows from part (i) of Lemma~\ref{l:BD312}.  Assumption (A2) implies that there exists some finite $L \in \R$ such that
\[
\mathbb{E} \int_\R \epsilon^2_t(u) \fp(u) \du = L < \infty \quad \forall t \in \Z.
\]

To show the solution converges almost surely, let $S_n = \sum_{i=0}^n \abs{\psi_i} \lVert \epsilon_{t-i} \rVert$, $S = \sum_{i=0}^\infty \abs{\psi_i} \lVert \epsilon_{t-i} \rVert$, then $0 \leq S_n \leq S_{n+1}$ and $\lim_{n \rightarrow \infty} S_n = S$.  Observe that by Monotone Convergence
\[
\begin{aligned}
  \E{S} &= \lim_{n\rightarrow \infty} \E{S_n}
        = \lim_{n\rightarrow \infty} \sum_{i=0}^n \abs{\psi_i} \mathbb{E} \lVert \epsilon_{t-i} \rVert
        \leq \lim_{n\rightarrow \infty} \sum_{i=0}^n \abs{\psi_i} \left\{ \mathbb{E} \lVert \epsilon_{t-i} \rVert^2 \right\}^{1/2}\\
        & = L^{1/2} \sum_{i=0}^\infty \abs{\psi_i} < \infty.
\end{aligned}
\]
Thus, $S = \sum_{i=0}^\infty \abs{\psi_i} \lVert \epsilon_{t-i} \rVert$ is finite almost surely.  Since $S_n$ is monotone and bounded almost surely, $S_n$ converges almost surely.  Therefore
\[
\left\lVert \sum_{i = m}^n \psi_i \epsilon_{t-i} \right\rVert \leq \sum_{i = m}^n \abs{\psi_i} \lVert \epsilon_{t-i} \rVert \rightarrow 0 \text{ as } m,n \rightarrow \infty,
\]
so that the sequence of partial sums $\sum_{i=0}^n \psi_i \epsilon_{t-i}$ is Cauchy and converges almost surely.

Set $V_t = \sum_{i=0}^\infty \psi_i \epsilon_{t-i}$. Due to the mean square
convergence and the completeness of ${\mathcal T}_{f_\oplus}$, each $V_t$ is an element of ${\mathcal T}_{f_\oplus}$ because,
by assumption,  $\epsilon_t\in {\mathcal T}_{f_\oplus}$. We must show that
\[
V_t = \sum_{j=1}^p \beta_j V_{t-j} + \epsilon_t.
\]
With the absolute a.s. convergence of the series defining $V_t$ established,
the verification
of the above claim proceeds  as in the scalar case; all countable manipulations are
done for a fixed outcome in  an event of probability 1. Changing the order of summation, we obtain
\[
\sum_{j=1}^p \beta_j V_{t-j} = \sum_{k=1}^\infty a_k \epsilon_{t-k},
\]
with  the coefficients $a_k$ defined by
\[
\sum_{k=1}^\infty a_k z^k = \left ( \sum_{j=1}^p \beta_j z^j \right )
\left ( \sum_{i=0}^\infty \psi_i z^i  \right ), \ \ \ |z| \le 1.
\]
Since
$
\left (1- \sum_{j=1}^p \beta_j z^j  \right )
\left ( \sum_{i=0}^\infty \psi_i z^i  \right ) = 1,
$
$\psi_0 =1$ and $\psi_k = a_k$, $k \ge 1$. Consequently,
\[
\sum_{j=1}^p \beta_j V_{t-j} = \sum_{k=1}^\infty \psi_k \epsilon_{t-k}
= V_t - \epsilon_t.
\]

We now turn to the verification of claim (ii). Suppose $\{ V_t^\star \}$
is a stationary sequence in the Hilbert space ${\mathcal T}_{f_\oplus}$  satisfying
\[
V_t^\star - \sum_{j=0}^p \beta_j V^\star_{t-j} =
\phi(B) V^\star_t = \epsilon_t.
\]
Using Lemma~\ref{l:BD312} and $\phi(z) \psi(z) = 1$, we obtain
\[
V_t^\star = \psi(B)(\phi(B) V^\star_t) = \psi(B) \epsilon_ t = V_t,
\]
proving the uniqueness.

Lastly, we verify claim (iii).  By (A3), it is immediate that $(V_t(u) + u)' > 0$  implying $V_t+\id$ is strictly increasing almost surely.  Thus, by the structure of $\mc{T}_{\fp}$, $V_t+\id$ is effectively an optimal transport map from $\mu_{\fp}$ to some $\mu_{f_t} \in \mc{W}_2$.  Denote $T_t(u) = V_t(u) + u$.  For $\forall a \in \R$, consider
\[
\begin{aligned}
F_t(a) & = \Exp_{\fp}(V_t)\left( (-\infty, a]\right)\\
& = \mu_{\fp}\left( (V_t + \id)\inv (-\infty, a] \right)\\
& = \Fp\left( T_t\inv(a) \right),
\end{aligned}
\]
thus $f_t = F'_t = \fp\left( T_t\inv \right) \left(T_t\inv\right)'$.  Consequently, $V_t = \Log_{\fp}(f_t) $ almost surely. Stationarity follows since $\E{T_t(u)} = u$ implies that $\fp$ is the Wasserstein mean of $f_t$.
\end{proof}

\subsection{Proofs of Theorem~\ref{t:asym} and Theorem~\ref{t:asymWARp}}
\label{sec:asymPf}
Theorem~\ref{t:asym} is a special case of Theorem~\ref{t:asymWARp} when $p=1$, hence it suffices to prove Theorem~\ref{t:asymWARp}.

\begin{proof}[Proof of Theorem~\ref{t:asymWARp}]
The proof relies on  a number of technical lemmas
whose formulation requires the notation introduced in its course.
For this reason, these lemmas are stated and proven after
the main body of the proof.

Many manipulations become easier if one  works with the two-sided
moving average
\begin{equation}
\label{eq:twoSdd}
T_t - \id = \sum_{i = -\infty}^\infty \psi_i \epsilon_{t-i}
\end{equation}
because one does not have to keep track
of indexes corresponding to non-zero coefficients;  one must set  $\psi_i = 0$ for $i < 0$. Causality is however needed for our proof to go through,
see the proof of Lemma~\ref{l:asymSB}.

 Recall that $Q_t$ is the quantile function corresponding to $f_t$ and that we assume that $\wE{f_t} = \fp$ exists and is unique with $\Qp$ and $\Fp$ being its quantile function and cdf, respectively.  We denote $X_t(s) = Q_t(s) - \Qp(s)$ and $\varepsilon_t(s) = \epsilon_t\left( \Qp(s) \right)$ throughout the proof.
Note that, by the change of variable $s = \Fp(u)$, the WAR($p$) model in \eqref{eq:WARp} can be written as
\begin{equation}
\begin{aligned}
\label{eq:WARp_s}
Q_t(s) - \Qp(s) = \sum_{j=1}^p \beta_j (Q_{t-j}(s) - \Qp(s)) + \epsilon_t\left(\Qp(s)\right),
\end{aligned}
\end{equation}

Thus, in order to study the properties of $\widehat{\bm{\beta}}$, we consider the following formulation of the WAR($p$) model.
\begin{equation}
\label{eq:WARpMat}
\underbrace{\begin{bmatrix}
X_0(s) & X_{-1}(s) & \dots & X_{1-p}(s) \\
X_1(s) & X_0(s) & \dots & X_{2-p}(s) \\
\vdots \\
X_{n-1}(s) & X_{n-2}(s) & \dots & X_{n-p}(s) \\
\end{bmatrix}}_{\mathbf{X}(s)}
\underbrace{\begin{bmatrix}
\beta_1 \\
\beta_2 \\
\vdots\\
\beta_p
\end{bmatrix}}_{\bm{\beta}}
+ \underbrace{\begin{bmatrix}
\varepsilon_1(s)\\
\varepsilon_2(s)\\
\vdots\\
\varepsilon_n(s)\\
\end{bmatrix}}_{\bm{\varepsilon}(s)}
=
\underbrace{\begin{bmatrix}
X_1(s)\\
X_2(s)\\
\vdots\\
X_n(s)
\end{bmatrix}}_{\mathbf{Y}(s)}.
\end{equation}
Some elements of $\mathbf{X}(s)$ are not observable, but are used in our asymptotic analysis.  We define the least squares estimator
\begin{align}
    \bm{\beta}^* & = \left\{ \int_0^1 \mathbf{X}^\intercal(s)\mathbf{X}(s) \ds \right\}\inv \left\{ \int_0^1 \mathbf{X}^\intercal(s)\mathbf{Y}(s) \ds \right\} \nonumber \\
    & = \left\{ \int_0^1 \mathbf{X}^\intercal(s)\mathbf{X}(s) \ds \right\}\inv \left\{ \int_0^1 \mathbf{X}^\intercal(s) \left[ \mathbf{X}(s)\bm{\beta} + \bm{\varepsilon}(s) \right] \ds \right\}\nonumber \\
    & = \left\{ \int_0^1 \mathbf{X}^\intercal(s)\mathbf{X}(s) \ds \right\}\inv \left\{ \int_0^1 \mathbf{X}^\intercal(s) \mathbf{X}(s)\ds \bm{\beta}+ \int_0^1 \mathbf{X}^\intercal(s)\bm{\varepsilon}(s) \ds \right\}\nonumber \\
    & = \bm{\beta} + \left\{ \int_0^1 \mathbf{X}^\intercal(s)\mathbf{X}(s) \ds \right\}\inv \left\{\int_0^1 \mathbf{X}^\intercal(s)\bm{\varepsilon}(s) \ds \right\}. \label{eq:BStar}
\end{align}

Under Assumptions (A1'), (A2), (A3) and (A4), by Lemma~\ref{l:asymSB},
\[
    n^{1/2} \left(\bm{\beta}^* - \bm{\beta}\right) \overset{D}{\rightarrow} \mathbf{N} \left(0,\bm{\Sigma} \right),
\]
where $\bm{\Sigma}$ is as defined in the statement of Theorem~\ref{t:asymWARp}.  By Lemma~\ref{l:HS_op1},
\[
n^{1/2}(\widehat{\bm{\beta}} - \bm{\beta}^*) = o_p(1),
\]
so that
\[
    n^{1/2} (\widehat{\bm{\beta}} - \bm{\beta}) \overset{D}{\rightarrow} \mathbf{N} \left(0,\bm{\Sigma} \right).
\]
\end{proof}

\subsection{Proofs of Lemmas}

To simplify notation in the proofs, for population quantities in the tangent space, we define alternative versions by applying the change of variable $s = \Fp(u).$  For instance, we use $X_t(s) = Q_t(s) - \Qp(s)$ instead of $T_t(u) - u$, and define $\varepsilon_t(s) = \epsilon_t\left( \Qp(s) \right).$  The quantities $\mathbf{X}(s)$ and $\mathbf{Y}(s)$ are defined in \eqref{eq:WARpMat}.  Additionally, the key parameters $\gamma_h$ in \eqref{eq:wAcvf_u} and $\eta_h$ in \eqref{eq:acf} are replaced by
\begin{equation}
    \begin{aligned}
    \label{eq:wAcvf_s}
    \widetilde{\gamma}_h(s,s') :&=  \operatorname{Cov}\left( Q_t(s), Q_{t+h}(s')  \right) =  \Cov{T_t\circ \Qp(s)}{T_{t+h}\circ \Qp(s')} \\
    &= \gamma_h\left(\Qp(s),\Qp(s')\right)
\end{aligned}
\end{equation}
and
\begin{equation}
    \label{eq:lambda}
    \lambda_h(s) = \eta_h(\Qp(s)) = \widetilde{\gamma}_h(s,s) = \Cov{Q_t(s)}{Q_{t+h}(s)},
\end{equation}
respectively.  Similarly, we define the sample version
\begin{equation}
    \label{eq:lambdaEst}
\hat{\lambda}_s = \hat{\eta}_h\circ \hQp(s) = \frac{1}{n} \sum_{t = 1}^{n-h} [Q_t(s) - \hQp(s)][Q_{t+h}(s) - \hQp(s)].
\end{equation}
Finally, we also define
\begin{equation}
    \label{eq:Gamma}
    \bm{\lambda}_p(s) = (\lambda_1(s),\ldots,\lambda_p(s)), \quad \bm{\Gamma}_p(s) = \mathbf{H}_p(\Qp(s)).
\end{equation}
with plug-in estimates $\hat{\bm{\lambda}}_p(s)$ and $\hat{\bm{\Gamma}}_p(s).$

\begin{lem}
\label{l:CvgCMat}
Assume (A1'), (A2), (A3), and (A4) hold.  Consider  the following approximation to the sample autocovariance function:
\[
\lambda^*_h(s) = \frac{1}{n} \sum_{t=1}^{n} \left[  Q_{t}(s) - Q_\oplus(s) \right] \left[ Q_{t+h}(s)  - Q_\oplus(s) \right], \, h \in \Z.
\]
For $i,j = 1, \dots, n-1$, the following limit exists:
\begin{equation}
\label{eq:cMat1}
\begin{aligned}
v_{ij} \vcentcolon = & \lim_{n \rightarrow \infty}n \operatorname{Cov}\left[\int_0^1 \lambda^*_i(s) \ds,  \int_0^1 \lambda^*_j(s) \ds\right] \\
 = & \sum_{r = -\infty}^\infty \left( S_1(r) K_2 + S_2(r)K_1 + S_3(r)K_1\right),
\end{aligned}
\end{equation}
where
\begin{equation}
\label{eq:cMat2}
\begin{aligned}S_1(r) = \sum_{k=-\infty}^\infty \psi_{k}\psi_{k+i}\psi_{k+r} \psi_{k+r+j},\, S_2(r) = \sum_{k=-\infty}^\infty \psi_k \psi_{k+r} \sum_{l=-\infty}^\infty \psi_{l} \psi_{l+r+j-i}, \\
S_3(r) = \sum_{k=-\infty}^\infty \psi_k \psi_{k+r+j} \sum_{l=-\infty}^\infty \psi_{l} \psi_{l+r-i},\, K_1 = \int_{\R^2} C_\epsilon^2(u,v)\fp(u)\fp(v)\du \dv,\\
\text{and }K_2 = \int_{\R^2} \left\{\E{\epsilon_t^2(u)\epsilon_t^2(v)} - 2C_\epsilon^2(u,v) - C_\epsilon(u,u)C_\epsilon(v,v)\right\}\fp(u)\fp(v) \du \dv,
\end{aligned}
\end{equation}all of which are well-defined.

\end{lem}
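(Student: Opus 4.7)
The plan is to exploit the two-sided MA representation $X_t(s) = \sum_{k\in\Z} \psi_k \varepsilon_{t-k}(s)$ obtained from Theorem~\ref{t:WARpConv} by the change of variable $u=\Qp(s)$ (with $\psi_k=0$ for $k<0$), where $\varepsilon_t(s)=\epsilon_t(\Qp(s))$ are iid in $t$ with mean zero. Setting $W_{t,h} = \int_0^1 X_t(s)X_{t+h}(s)\ds$ so that $\int_0^1 \lambda^*_h(s)\ds = n^{-1}\sum_{t=1}^n W_{t,h}$, strict stationarity of $\{X_t\}$ implies $g(r) := \Cov{W_{0,i}}{W_{r,j}}$ depends only on the lag $r$, whence
\[
n\,\Cov{\int_0^1 \lambda^*_i(s)\ds}{\int_0^1 \lambda^*_j(s)\ds} = \sum_{|r|<n}\left(1-\tfrac{|r|}{n}\right) g(r).
\]
Once absolute summability $\sum_r |g(r)|<\infty$ is established, dominated convergence gives $v_{ij} = \sum_{r\in\Z} g(r)$, reducing the lemma to identification of $g(r)$.

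To identify $g(r)$ I would expand the integrand through the centered fourth-moment identity
\[
\Cov{ab}{cd} = \E{ac}\E{bd}+\E{ad}\E{bc}+\kappa_4(a,b,c,d),
\]
applied to $(a,b,c,d)=(\varepsilon_{-k}(s),\varepsilon_{i-l}(s),\varepsilon_{r-m}(s'),\varepsilon_{r+j-n}(s'))$ inside the quadruple sum $\sum_{k,l,m,n}\psi_k\psi_l\psi_m\psi_n$. Since $\{\varepsilon_t\}$ is iid with mean zero, each pair expectation $\E{\varepsilon_{t_1}(s)\varepsilon_{t_2}(s')}$ vanishes unless $t_1=t_2$, in which case it equals $C_\epsilon(\Qp(s),\Qp(s'))$, and $\kappa_4$ vanishes unless all four time indices coincide. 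These delta-function constraints collapse the fourfold series into three single sums in $k$: the $\E{ac}\E{bd}$ pattern forces $m=k+r$, $n=l+r+j-i$ and produces $S_2(r)\,C_\epsilon(u,v)^2$; the $\E{ad}\E{bc}$ pattern forces $n=k+r+j$, $m=l+r-i$ and produces $S_3(r)\,C_\epsilon(u,v)^2$; the cumulant pattern forces $l=k+i$, $m=k+r$, $n=k+r+j$ and produces $S_1(r)\left\{\E{\varepsilon_0^2(s)\varepsilon_0^2(s')}-2C_\epsilon(u,v)^2-C_\epsilon(u,u)C_\epsilon(v,v)\right\}$, with $u=\Qp(s), v=\Qp(s')$. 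Integrating over $(s,s')\in[0,1]^2$ and changing variables to $u,v$ (so $\ds=\fp(u)\du$) yields exactly $g(r) = S_1(r)K_2 + S_2(r)K_1 + S_3(r)K_1$, and summing over $r$ gives the claimed expression for $v_{ij}$.

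The remaining work is to verify absolute convergence, which I expect to be the main technical obstacle. Finiteness of $K_1$ and $K_2$ follows from (A4) combined with Cauchy--Schwarz on $\fp$-integrals, since $\E{\varepsilon_0^2(s)\varepsilon_0^2(s')}\le \{\E{\varepsilon_0^4(s)}\E{\varepsilon_0^4(s')}\}^{1/2}$. Summability $\sum_r|S_\ell(r)|<\infty$ for $\ell=1,2,3$ follows from $\sum_k|\psi_k|<\infty$ (supplied by (A1'); see the discussion following (A1')) via bounds such as
\[
\sum_r |S_2(r)| \le \sum_r \Bigl(\sum_k|\psi_k\psi_{k+r}|\Bigr)\Bigl(\sum_l|\psi_l\psi_{l+r+j-i}|\Bigr) \le \Bigl(\sum_k|\psi_k|\Bigr)^4,
\]
with analogous Fubini-type bounds for $S_1$ and $S_3$. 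The most delicate point is justifying the interchange of the quadruple sum, the $(s,s')$-integration and the expectation that underlies the cumulant decomposition; this is handled by bounding
\[
\sum_{k,l,m,n}|\psi_k\psi_l\psi_m\psi_n|\,\E{|\varepsilon_{-k}(s)\varepsilon_{i-l}(s)\varepsilon_{r-m}(s')\varepsilon_{r+j-n}(s')|}
\]
through repeated Cauchy--Schwarz against the uniform fourth-moment bound $\sup_s\E{\varepsilon_0^4(s)}$ controlled by (A4), then applying Fubini.
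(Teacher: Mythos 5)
Your proposal is correct and follows essentially the same route as the paper: expand $X_t$ in the (two-sided, causal) moving-average form, reduce the scaled covariance to a Ces\`aro sum $\sum_{|r|<n}(1-|r|/n)g(r)$ of a lag-$r$ kernel, identify $g(r)=S_1(r)K_2+S_2(r)K_1+S_3(r)K_1$ by analyzing which time indices of the iid innovations coincide, and pass to the limit by dominated convergence after checking $\sum_r|g(r)|<\infty$ via $\sum_k|\psi_k|<\infty$ and (A4). Your fourth-cumulant identity $\Cov{ab}{cd}=\E{ac}\E{bd}+\E{ad}\E{bc}+\kappa_4$ is just a compact packaging of the paper's explicit four-case enumeration of coinciding indices, and yields the identical three terms.
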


\begin{proof}
First observe
\begin{equation}
\begin{aligned}
\label{eq:sCov}
& \operatorname{Cov}\left[\int_0^1 \lambda^*_i(s) \ds,  \int_0^1 \lambda^*_j(s) \ds\right]\\
= & \operatorname{Cov} \left[ \int_0^1 \frac{1}{n}\sum_{t=1}^{n} X_{t}(s)X_{t+i}(s) \ds, \int_0^1 \frac{1}{n}\sum_{t'=1}^{n} X_{t'}(s')X_{t'+j}(s') \ds' \right] \\
= & \frac{1}{n^2} \sum_{t=1}^{n} \sum_{t'=1}^{n} \int_0^1 \int_0^1 \operatorname{Cov}\left[  X_{t}(s)X_{t+i}(s),  X_{t'}(s')X_{t'+j}(s') \right] \ds \ds'.
\end{aligned}
\end{equation}

Denote $\sum_{i} = \sum_{i = -\infty}^\infty$ and recall that $\varepsilon_t(s) = \epsilon_t\left( \Qp(s) \right)$.  For any $r \in \mathbb{Z}$, define the covariance kernel
\begin{equation}
\begin{aligned}
\label{eq:sKrnl}
    & G_{ijr}(s,s')\\
    = & \Cov{X_{t}(s)X_{t+i}(s)}{X_{t+r}(s') X_{t+r+j}(s')}\\
    = & \E{X_{t}(s)X_{t+i}(s)X_{t+r}(s') X_{t+r+j}(s')} - \E{X_{t}(s)X_{t+i}(s)}\E{X_{t+r}(s')X_{t+r+j}(s')}.
\end{aligned}
\end{equation}

Set $t' = t +r$, then \eqref{eq:sCov} can be written as
\begin{equation*}
\begin{aligned}
    \operatorname{Cov}\left[\int_0^1 \lambda^*_i(s) \ds,  \int_0^1 \lambda^*_j(s) \ds\right] = \frac{1}{n^2} \sum_{\abs{r} = 0}^{n-1} \sum_{t'-t = r} \int_0^1\int_0^1 G_{ijr}(s,s') \ds \ds'.
\end{aligned}
\end{equation*}

Notice that
\begin{equation}
\label{eq:sKrnlD}
    \begin{aligned}
    & \E{X_{t}(s)X_{t+i}(s)X_{t+r}(s') X_{t+r+j}(s')} \\
    =& \E{\sum_{k}\psi_k \varepsilon_{t-k}(s) \sum_{k'}\psi_{k'} \varepsilon_{t+i-k'}(s) \sum_{l}\psi_l \varepsilon_{t+r-l}(s') \sum_{l'}\psi_{l'} \varepsilon_{t+r+j-l'}(s')}\\
    =&\sum_{k,k',l,l'} \psi_k \psi_{k'+i} \psi_{l+r}\psi_{l'+r+j}\E{\varepsilon_{t-k}(s)\varepsilon_{t-k'}(s)\varepsilon_{t-l}(s')\varepsilon_{t-l'}(s')}.
\end{aligned}
\end{equation}

To further analyze (\ref{eq:sKrnlD}), note that
\begin{align*}
   & \E{\varepsilon_{t_1}(s)\varepsilon_{t_2}(s)\varepsilon_{t_3}(s')\varepsilon_{t_4}(s')} \\
   = & \begin{cases}
   \E{\varepsilon_{t_1}(s)\varepsilon_{t_2}(s)}\E{\varepsilon_{t_3}(s')\varepsilon_{t_4}(s')}, & t_1 = t_2, t_3 = t_4 \text{ and } t_1 \neq t_3,\\
   \E{\varepsilon_{t_1}(s)\varepsilon_{t_3}(s')}\E{\varepsilon_{t_2}(s)\varepsilon_{t_4}(s')}, & t_1 = t_3, t_2 = t_4 \text{ and } t_1 \neq t_2,\\
   \E{\varepsilon_{t_1}(s)\varepsilon_{t_4}(s')}\E{\varepsilon_{t_2}(s)\varepsilon_{t_3}(s')}, & t_1 = t_4, t_2 = t_3 \text{ and } t_1 \neq t_2,\\
   \E{\varepsilon_{t_1}(s)\varepsilon_{t_2}(s)\varepsilon_{t_3}(s')\varepsilon_{t_4}(s')}, & t_1 = t_2 = t_3 = t_4,\\
   0, & {\rm otherwise}.
   \end{cases}
\end{align*}

Hence (\ref{eq:sKrnlD}) can be decomposed into the following cases:
\[
\begin{cases}
    k = k', l = l' \text{ and } k \neq l,\\
    k = l, k' = l' \text{ and } k \neq k',\\
    k = l', k' = l \text{ and } k \neq k',\\
    k = k' = l = l',\\
    o.w.
   \end{cases}
\]

Denote $C_\varepsilon\left(s,s'\right) = C_\epsilon\left(\Qp(u),\Qp(v)\right)$.  Also notice that
\[
\begin{aligned}
   \widetilde{\gamma}_{h}(s,s') & =  \E{X_{t}(s)X_{t+h}(s')} \\
    &=\E{\sum_{k}\psi_k \varepsilon_{t-k}(s) \sum_{l}\psi_l \varepsilon_{t+h-l}(s')} \\
    &=\sum_k \psi_k \psi_{k+h}\E{\varepsilon_{t-k}(s)\varepsilon_{t-k}(s')}\\
    &=\sum_k \psi_k \psi_{k+h}C_\varepsilon(s,s').
\end{aligned}
\]

Thus, when $k = k', l = l' \text{ and } k \neq l$,
\begin{equation}
\label{eq:sKrnlC1}
\begin{aligned}
    &\sum_{k,k',l,l'} \psi_k \psi_{k'+i} \psi_{l+r}\psi_{l'+r+j}\E{\varepsilon_{t-k}(s)\varepsilon_{t-k'}(s)\varepsilon_{t-l}(s')\varepsilon_{t-l'}(s')} \\
    =& \mathop{\sum \sum}_{ k \neq l} \psi_k \psi_{k+i} \psi_{l+r} \psi_{l+r+j} C_\varepsilon(s,s)C_\varepsilon(s',s')\\
    =& \left\{ \sum_k \sum_l \psi_k \psi_{k+i} \psi_{l+r} \psi_{l+r+j} - \sum_k \psi_{k}\psi_{k+i}\psi_{k+r} \psi_{k+r+j} \right\}C_\varepsilon(s,s)C_\varepsilon(s',s')\\
    =& \lambda_{i}(s) \lambda_{j}(s') - \sum_k \psi_{k}\psi_{k+i}\psi_{k+r} \psi_{k+r+j}C_\varepsilon(s,s)C_\varepsilon(s',s').
\end{aligned}
\end{equation}

Similarly, for $k = l, k' = l' \text{ and }  k \neq k'$,
\begin{equation}
\label{eq:sKrnlC2}
\begin{aligned}
& \sum_{k,k',l,l'} \psi_k \psi_{k'+i} \psi_{l+r}\psi_{l'+r+j}\E{\varepsilon_{t-k}(s)\varepsilon_{t-k'}(s)\varepsilon_{t-l}(s')\varepsilon_{t-l'}(s')} \\
= &\widetilde{\gamma}_{r}(s,s')\widetilde{\gamma}_{r+j-i}(s,s') - \sum_k \psi_{k}\psi_{k+i}\psi_{k+r}\psi_{k+r+j} C^2_\varepsilon(s,s');
\end{aligned}
\end{equation}

for $k = l', k' = l \text{ and } k \neq k'$,
\begin{equation}
\label{eq:sKrnlC3}
\begin{aligned}
& \sum_{k,k',l,l'} \psi_k \psi_{k'+i} \psi_{l+r}\psi_{l'+r+j}\E{\varepsilon_{t-k}(s)\varepsilon_{t-k'}(s)\varepsilon_{t-l}(s')\varepsilon_{t-l'}(s')} \\
= &\widetilde{\gamma}_{r+j}(s,s') \widetilde{\gamma}_{r-i}(s,s') - \sum_k \psi_{k}\psi_{k+i}\psi_{k+r}\psi_{k+r+j} C^2_\varepsilon(s,s');
\end{aligned}
\end{equation}

and for $k = k' = l = l'$,
\begin{equation}
\label{eq:sKrnlC4}
\begin{aligned}
& \sum_{k,k',l,l'} \psi_k \psi_{k'+i} \psi_{l+r}\psi_{l'+r+j}\E{\varepsilon_{t-k}(s)\varepsilon_{t-k'}(s)\varepsilon_{t-l}(s')\varepsilon_{t-l'}(s')} \\
= &\sum_k \psi_k \psi_{k+i} \psi_{k+r} \psi_{k+r+j} \E{\varepsilon^2_{t}(s)\varepsilon^2_{t}(s')}.
\end{aligned}
\end{equation}

Denote $\E{\varepsilon^2_{t}(s)\varepsilon^2_{t}(s')} -2C^2_\varepsilon(s,s') - C_\varepsilon(s,s)C_\varepsilon(s',s') = \mc{K}(s,s')$.  By \eqref{eq:sKrnlC1} - \eqref{eq:sKrnlC4}, we can rewrite the covariance kernel defined in \eqref{eq:sKrnl} as
\begin{equation}
\label{eq:sKrnlE}
\begin{aligned}
& G_{ijr}(s,s') \\
= & \widetilde{\gamma}_{r}(s,s')\widetilde{\gamma}_{r+j-i}(s,s') + \widetilde{\gamma}_{r+j}(s,s')\widetilde{\gamma}_{r-i}(s,s') + \mc{K}(s,s')\sum_k \psi_{k}\psi_{k+i}\psi_{k+r} \psi_{k+r+j}.
\end{aligned}
\end{equation}

By (A4), we have
\[
\int_0^1 \int_0^1 \E{\varepsilon^2_t(s)\varepsilon^2_t(s')} \ds \ds' \leq \int_0^1 \int_0^1 \E{\varepsilon^4_t(s)}^{1/2}\E{\varepsilon^4_t(s')}^{1/2} \ds \ds' <\infty,
\]
and
\[
\int_0^1 \int_0^1 C^2_\varepsilon(s,s') \ds \ds' \leq \int_0^1 \int_0^1 \E{\varepsilon^2_t(s)} \E{\varepsilon^2_t(s')}\ds \ds' < \infty.
\]

Since $\{\psi_k\}$ is absolutely summable, we have
\[
\sum_k \abs{\psi_{k}\psi_{k+i}\psi_{k+r} \psi_{k+r+j}} \leq \sum_k \abs{\psi_k} \sum_{k'}\abs{\psi_{k'}} \sum_l \abs{\psi_l} \sum_{l'} \abs{\psi_{l'}} < \infty.
\]

Hence $\int_0^1 \int_0^1 \left\{\mc{K}(s,s') \sum_k \psi_k\psi_{k+i}\psi_{k+r}\psi_{k+i+j} \right\}\ds \ds' < \infty.$  Note that $\widetilde{\gamma}_{r}(s,s')\widetilde{\gamma}_{r+j-i}(s,s')$ and $\widetilde{\gamma}_{r+j}(s,s')\widetilde{\gamma}_{r-i}(s,s')$ can be bounded in a similar way.  Therefore, denoted by $\tau_r$, the double integral of $G_{ijr}(s,s')$ over the unit square is finite, i.e.
\[
\tau_r = \int_0^1 \int_0^1 G_{ijr}(s,s')\ds\ds' < \infty.
\]

Next, we will show $\tau_r$ is absolutely summable in $r$. Notice that the components of the covariance kernel are absolutely summable in $r$,
\begin{equation}
\label{eq:tSum1}
\begin{aligned}
     &\sum_r \abs{\widetilde{\gamma}_{r}(s,s')\widetilde{\gamma}_{r+j-i}(s,s')}\\
    =&\sum_r \abs{\sum_k \psi_k \psi_{k+r} C_\varepsilon(s,s')}\abs{\sum_l \psi_l \psi_{l+r+j-i} C_\varepsilon(s,s')}\\
    \leq &\sum_r \sum_k \sum_l \abs{ \psi_k \psi_{k+r} \psi_l \psi_{l+r+j-i}} C^2_\varepsilon(s,s')\\
    \leq &\sum_k \abs{\psi_k} \sum_{k'}\abs{\psi_{k'}} \sum_l \abs{\psi_l} \sum_{l'} \abs{\psi_{l'}} C^2_\varepsilon(s,s') < \infty.
\end{aligned}
\end{equation}

Similarly, we have
\begin{equation}
\label{eq:tSum2}
\begin{aligned}
   & \sum_r \abs{\mc{K}(s,s) \sum_k \psi_{k}\psi_{k+i}\psi_{k+r} \psi_{k+r+j}} < \infty \text{, and } \\
   &\sum_r \abs{\widetilde{\gamma}_{r+j}(s,s')\widetilde{\gamma}_{r-i}(s,s')} < \infty.
\end{aligned}
\end{equation}

By $\eqref{eq:tSum1}$ and $\eqref{eq:tSum2}$, we have $\sum_r \abs{\tau_r} < \infty.$  Hence by the dominated convergence theorem
\begin{equation*}
\begin{aligned}
 & \lim_{n \rightarrow \infty} n\Cov{\int_0^1 \lambda^*_{i}(s) \ds}{\int_0^1 \lambda^*_{j}(s) \ds}\\
= & \lim_{n \rightarrow \infty} \frac{1}{n} \sum_{\abs{r} = 0}^{n-1} \sum_{t'-t = r} \int_0^1\int_0^1 G_{ijr}(s,s') \ds \ds'\\
= & \lim_{n \rightarrow \infty} \frac{\left\{\tau_{-(n-1)} + 2 \tau_{-(n-2)} + \dots+ (n-1) \tau_{-1} + n \tau_{0} + (n-1) \tau_{1} + \dots + \tau_{(n-1)} \right\}}{n} \\
=& \lim_{n \rightarrow \infty} \sum_{\abs{r} < n} \left( 1 - n^{-1}\abs{r} \right) \tau_r\\
=& \sum_{r = -\infty}^\infty  \tau_r < \infty.
\end{aligned}
\end{equation*}

It follows that
\begin{equation}
\lim_{n \rightarrow \infty} n\Cov{\int_0^1 \lambda^*_{i}(s) \ds}{\int_0^1 \lambda^*_{j}(s) \ds} = \sum_{r = -\infty}^\infty \left( S_1(r) K_2 + S_2(r)K_1 + S_3(r)K_1\right).
\end{equation}
\end{proof}

\begin{lem}
\label{l:pConv1}
Assume (A1'), (A2), (A3), and (A4) hold. Then
\begin{equation}
\label{eq:pConv1}
\frac{1}{n} \int_0^1 \mathbf{X}^\intercal(s)\mathbf{X}(s)\ds \overset{P}{\rightarrow}  \int_0^1 \bm{\Gamma}_p(s) \ds,
\end{equation}
where the convergence holds element-wise.
\end{lem}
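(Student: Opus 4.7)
The plan is to reduce the matrix convergence to element-wise convergence, relate each entry to the integrated sample autocovariance studied in Lemma~\ref{l:CvgCMat}, and finish with a Chebyshev argument. Fix $j,k \in \{1,\dots,p\}$ and set $m = |j-k|$. Since $\{\bm{\Gamma}_p(s)\}_{jk} = \lambda_m(s)$ by \eqref{eq:Gamma} and \eqref{eq:lambda}, and since the matrix is of fixed size $p\times p$, it suffices to show
$$
\frac{1}{n}\int_0^1 [\mathbf{X}^\intercal(s)\mathbf{X}(s)]_{jk}\,\ds = \frac{1}{n}\int_0^1 \sum_{r=1}^n X_{r-j}(s) X_{r-k}(s)\, \ds \overset{P}{\longrightarrow} \int_0^1 \lambda_m(s)\, \ds.
$$

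My first step is to replace the entry by the integrated sample autocovariance $\int_0^1 \lambda^*_m(s)\,\ds$ appearing in Lemma~\ref{l:CvgCMat}. After the shift $u = r - \max(j,k)$, the inner sum differs from $\sum_{u=1}^{n} X_u(s) X_{u+m}(s) = n\,\lambda^*_m(s)$ by at most $2p$ boundary products of the form $X_u(s) X_{u+m}(s)$ with indices in $\{1-p,\dots,0\}\cup\{n-p+1,\dots,n\}$. Using the causal MA($\infty$) representation $X_t(s) = \sum_{i\ge 0} \psi_i\,\varepsilon_{t-i}(s)$ from Theorem~\ref{t:WARpConv}, the Cauchy--Schwarz bound
$$
\mathbb{E}\,|X_u(s) X_{u+m}(s)| \leq \bigl(\mathbb{E}\,X_u^2(s)\bigr)^{1/2}\bigl(\mathbb{E}\,X_{u+m}^2(s)\bigr)^{1/2} \leq \Bigl(\sum_{i\ge 0} \psi_i^2\Bigr) C_\varepsilon(s,s)
$$
holds uniformly in $u$, and the right-hand side is integrable in $s$ under (A2). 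Hence
$$
\mathbb{E}\left|\frac{1}{n}\int_0^1 [\mathbf{X}^\intercal(s)\mathbf{X}(s)]_{jk}\,\ds - \int_0^1 \lambda^*_m(s)\,\ds\right| = O(p/n),
$$
which is $o_p(1)$.

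The second step reduces to showing $\int_0^1 \lambda^*_m(s)\,\ds \overset{P}{\to} \int_0^1 \lambda_m(s)\,\ds$. By Fubini and stationarity, $\mathbb{E}\int_0^1 \lambda^*_m(s)\,\ds = \int_0^1 \widetilde{\gamma}_m(s,s)\,\ds = \int_0^1 \lambda_m(s)\,\ds$. Lemma~\ref{l:CvgCMat} then supplies the variance bound
$$
\operatorname{Var}\left(\int_0^1 \lambda^*_m(s)\,\ds\right) = \frac{v_{mm}}{n} + o(n^{-1}) \longrightarrow 0,
$$
and Chebyshev's inequality delivers the claimed convergence in probability. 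Combining this with the first step gives entry-wise convergence for each $(j,k)$, and since $p$ is fixed this upgrades to convergence of the full $p\times p$ matrix.

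The main obstacle is the careful bookkeeping in the reindexing argument that handles the $2p$ boundary products and the verification that the second moments of $X_t(s)$ are integrable in $s$ uniformly in $t$; this is essentially routine given the absolute summability of $\{\psi_i\}$ from (A1') and the moment assumption (A2). The heavier variance computation has already been carried out in Lemma~\ref{l:CvgCMat} (which is where (A4) enters), so the remaining work in this lemma is a matter of assembly.
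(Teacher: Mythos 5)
Your proposal follows essentially the same route as the paper: identify each entry of $n^{-1}\int_0^1 \mathbf{X}^\intercal(s)\mathbf{X}(s)\,\ds$ with the integrated sample autocovariance $\int_0^1 \lambda^*_{|j-k|}(s)\,\ds$, use stationarity to match the mean, invoke Lemma~\ref{l:CvgCMat} for the $O(n^{-1})$ variance, and conclude by Chebyshev. The only difference is that you explicitly control the $O(p)$ boundary products created by the index shift (which the paper silently absorbs into an equality), and that extra bookkeeping is correct and harmless.
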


\begin{proof}
Note the $ij^{th}$ element of $\frac{1}{n}\int_0^1 \mathbf{X}^\intercal(s)\mathbf{X}(s) \ds$ is
\[
\frac{1}{n}\int_0^1 \sum_{t=1}^n X_{t-i}(s)X_{t-j}(s) \ds = \frac{1}{n}\int_0^1 \sum_{t=1-i}^{n-i} X_{t}(s)X_{t+i-j}(s) \ds = \int_0^1 \lambda^*_{\abs{i-j}}(s) \ds.
\]
By stationarity, $\mathbb{E} \int_0^1 \lambda^*_{\abs{i-j}}(s) \ds = \int_0^1 \lambda_{\abs{i-j}}(s)\ds$.  Hence it suffices to show for $i,j =1,\dots, p,$
\begin{equation}
\label{eq:Var0}
\lim_{n\rightarrow \infty} \operatorname{Var} \left[ \int_0^1 \lambda^*_{\abs{i-j}}(s) \ds \right] = 0.
\end{equation}

By Lemma~\ref{l:CvgCMat}, the variance of $\int_0^1 \lambda_{\abs{i-j}}^*(s) \ds$ converges at rate $O(n^{-1})$, i.e.
\begin{align}
\label{eq:cConv}
\lim_{n \rightarrow \infty}  n \operatorname{Var} \left[ \int_0^1 \lambda_{\abs{i-j}}^*(s) \ds \right] < \infty.
\end{align}

Therefore, \eqref{eq:Var0} holds and the result follows.

\end{proof}

\begin{lem}
\label{l:pConv2}
Assume (A1'), (A2), (A3), and (A4) hold. Then
\begin{equation}
  \label{eq:pConv2}
\frac{1}{n}\int_0^1 \mathbf{X}^\intercal(s)\mathbf{Y}(s) \ds \overset{P}{\rightarrow}  \int_0^1 \bm{\lambda}_p(s) \ds,
\end{equation}
where the convergence holds element-wise.
\end{lem}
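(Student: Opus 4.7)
\noindent\textbf{Proof proposal for Lemma~\ref{l:pConv2}.}

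The plan is to reduce the claim to the same moment calculation used for Lemma~\ref{l:pConv1}, since the entries of $\mathbf{X}^\intercal(s)\mathbf{Y}(s)$ are just shifted versions of the lag-product sums already analyzed in Lemma~\ref{l:CvgCMat}. Concretely, for $i=1,\dots,p$ the $i$-th entry of $\frac{1}{n}\int_0^1 \mathbf{X}^\intercal(s)\mathbf{Y}(s)\ds$ equals
\[
\int_0^1 \widetilde{\lambda}^*_i(s)\ds, \qquad \widetilde{\lambda}^*_i(s) := \frac{1}{n}\sum_{t=1}^n X_{t-i}(s)X_t(s).
\]
Re-indexing $t'=t-i$ gives $\widetilde{\lambda}^*_i(s) = \frac{1}{n}\sum_{t'=1-i}^{n-i} X_{t'}(s)X_{t'+i}(s)$, which differs from the $\lambda^*_i(s)$ in Lemma~\ref{l:CvgCMat} only by a fixed number ($|i|\le p$) of boundary terms. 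This negligible-boundary observation is the one step that is genuinely new relative to Lemma~\ref{l:pConv1}, but since each boundary term $X_{t'}(s)X_{t'+i}(s)$ has finite mean and variance by (A2)--(A4) and the second moment of $\int_0^1 X_{t'}(s) X_{t'+i}(s)\ds$ is uniformly bounded by stationarity and Cauchy--Schwarz, the difference $\int_0^1(\widetilde{\lambda}^*_i - \lambda^*_i)(s)\ds$ is $O_p(1/n)$.

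Next I would compute first and second moments. By stationarity and Fubini,
\[
\E\!\left[\int_0^1 \widetilde{\lambda}^*_i(s)\ds\right] = \int_0^1 \E{X_{t-i}(s)X_t(s)}\ds = \int_0^1 \lambda_i(s)\ds,
\]
which is exactly the target and also matches $\int_0^1(\bm{\lambda}_p(s))_i\ds$. For the variance, one repeats the derivation in the proof of Lemma~\ref{l:CvgCMat} with $j$ replaced by $i$ and an analogous summation range; all the ingredients used there---the MA($\infty$) representation \eqref{eq:twoSdd}, the four-case decomposition of $\E{\varepsilon_{t_1}\varepsilon_{t_2}\varepsilon_{t_3}\varepsilon_{t_4}}$, the absolute summability of $\{\psi_k\}$ from (A1'), the finiteness of $K_1,K_2$ from (A4), and the dominated-convergence argument that yields $\sum_r|\tau_r|<\infty$---go through verbatim, producing
\[
\Var\!\left[\int_0^1 \widetilde{\lambda}^*_i(s)\ds\right] = O(n^{-1}).
\]

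With mean equal to the target and variance $o(1)$, Chebyshev's inequality delivers $\int_0^1 \widetilde{\lambda}^*_i(s)\ds \overset{P}{\rightarrow} \int_0^1 \lambda_i(s)\ds$ for each $i=1,\dots,p$, and element-wise convergence of the $p$-vector follows. I expect no real obstacle here: the argument is essentially a verbatim re-run of Lemma~\ref{l:pConv1}, the only slightly non-cosmetic detail being the justification that the boundary mismatch between $\widetilde{\lambda}^*_i$ and $\lambda^*_i$ is asymptotically negligible, which is immediate from (A2)--(A4) and stationarity.
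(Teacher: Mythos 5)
Your proposal is correct and follows exactly the route the paper intends: the published proof of Lemma~\ref{l:pConv2} is omitted as ``a small modification of the proof of Lemma~\ref{l:pConv1}'', and your argument---identify the $i$-th entry with $\int_0^1\lambda^*_i(s)\,\ds$ up to $O_p(1/n)$ boundary terms, match the mean to $\int_0^1\lambda_i(s)\,\ds$ by stationarity, and invoke Lemma~\ref{l:CvgCMat} with $i=j$ plus Chebyshev for the variance---is precisely that modification, with the boundary-term bookkeeping made explicit.
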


\begin{proof}
The proof is a small  modification of the proof of
Lemma~\ref{l:pConv1}, so it is omitted.
\end{proof}

\begin{lem}
\label{l:asymSB}
Assume (A1'), (A2), (A3), and (A4) hold.  Then
\[
    n^{1/2} \left(\bm{\beta}^* - \bm{\beta}\right) \overset{D}{\rightarrow} \mathbf{N} \left(0, {\bm{\Sigma}} \right),
\]
where the matrix ${\bm{\Sigma}}$ is the same as in
Theorem~\ref{t:asymWARp}.
\end{lem}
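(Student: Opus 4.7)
The plan is to decompose
\[
n^{1/2}(\bm{\beta}^*-\bm{\beta}) = \widehat{\mathbf{A}}_n^{-1}\mathbf{Z}_n,\qquad \widehat{\mathbf{A}}_n = \frac{1}{n}\int_0^1 \mathbf{X}^\intercal(s)\mathbf{X}(s)\ds,\quad \mathbf{Z}_n = n^{-1/2}\int_0^1\mathbf{X}^\intercal(s)\bm{\varepsilon}(s)\ds,
\]
and treat the two factors separately. Lemma~\ref{l:pConv1} and the continuous mapping theorem give $\widehat{\mathbf{A}}_n^{-1}\overset{P}{\to}\bm{\Gamma}^{-1}$, where $\bm{\Gamma}=\int_0^1\bm{\Gamma}_p(s)\ds$ has entries $\Gamma_{jk}=C_0 A_{jk}$ with $C_0=\int_\R C_\epsilon(u,u)\fp(u)\du$ and $A_{jk}=\sum_i\psi_i\psi_{i+|j-k|}$. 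The matrix $\mathbf{A}$ is the standard autocovariance Toeplitz matrix of the causal AR$(p)$ filter $\psi$ and is positive definite under (A1'), so the inverse is well defined. By Slutsky's theorem it suffices to show $\mathbf{Z}_n\overset{D}{\to}\mathbf{N}(0,\mathbf{V}^*)$ for an appropriate $\mathbf{V}^*$ and to verify $\bm{\Gamma}^{-1}\mathbf{V}^*\bm{\Gamma}^{-1}=\bm{\Sigma}$.

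For the CLT on $\mathbf{Z}_n$ the key ingredient is the martingale structure coming from causality. Under (A1'), $X_{t-j}(s)=\sum_{i\ge 0}\psi_i\varepsilon_{t-j-i}(s)$, so $X_{t-j}$ is measurable with respect to $\mc{F}_{t-1}=\sigma(\varepsilon_s,\,s\le t-1)$ whenever $j\ge 1$. Setting $M_{tj}=\int_0^1 X_{t-j}(s)\varepsilon_t(s)\ds$, the vector $\mathbf{M}_t=(M_{t1},\dots,M_{tp})^\intercal$ is a strictly stationary, square-integrable martingale difference sequence for $\{\mc{F}_t\}$, and $\mathbf{Z}_n=n^{-1/2}\sum_{t=1}^n\mathbf{M}_t$. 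Since $\mathbf{M}_t$ is a fixed measurable functional of the iid sequence $\{\varepsilon_t\}$ under the shift, the sequence $\{\mathbf{M}_t\}$ is stationary and ergodic. I would then invoke the martingale CLT for stationary ergodic MDS (Billingsley; equivalently Cramér--Wold combined with the scalar Hall--Heyde theorem), which needs only finite second moments and yields $\mathbf{Z}_n\overset{D}{\to}\mathbf{N}(0,\mathbf{V}^*)$ with $\mathbf{V}^*=\E{\mathbf{M}_1\mathbf{M}_1^\intercal}$. Using independence of $\varepsilon_t$ from $\mc{F}_{t-1}$, Fubini, and $\widetilde{\gamma}_h(s,s')=(\sum_i\psi_i\psi_{i+h})C_\varepsilon(s,s')$,
\[
V^*_{jk}=\int_0^1\!\!\int_0^1 \widetilde{\gamma}_{j-k}(s,s')C_\varepsilon(s,s')\ds\ds' = K_1 A_{jk},\qquad K_1=\int_{\R^2}C_\epsilon^2(u,v)\fp(u)\fp(v)\du\dv.
\]
Assumption (A4) is used to guarantee finiteness of the fourth-order quantities underlying $K_1$ and $\sigma_\epsilon^2$, and to support the interchanges of sum, integral, and expectation implicit in the computation.

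With $\mathbf{V}^*=K_1\mathbf{A}$ and $\bm{\Gamma}=C_0\mathbf{A}$, the closing identity is immediate:
\[
\bm{\Gamma}^{-1}\mathbf{V}^*\bm{\Gamma}^{-1}=\frac{K_1}{C_0^2}\,\mathbf{A}^{-1}=\sigma_\epsilon^2\mathbf{A}^{-1}=\bm{\Sigma},
\]
which by Slutsky delivers the stated limit. The principal obstacle is the rigorous check that $\{\mathbf{M}_t\}$ fits the hypotheses of the chosen martingale CLT. Stationarity and the MDS property are essentially by construction, but if one prefers to avoid appealing to ergodicity, an alternative route is to apply the Hall--Heyde form of the CLT, which requires a Lindeberg condition plus the conditional covariance LLN
\[
\frac{1}{n}\sum_{t=1}^n \E{\mathbf{M}_t\mathbf{M}_t^\intercal\mid\mc{F}_{t-1}} = \frac{1}{n}\sum_{t=1}^n \int_0^1\!\!\int_0^1 X_{t-j}(s)X_{t-k}(s')C_\varepsilon(s,s')\ds\ds'\overset{P}{\to}\mathbf{V}^*.
\]
Both conditions reduce to a four-fold innovation expansion entirely analogous to the one carried out in Lemma~\ref{l:CvgCMat}: the Lindeberg condition follows from (A4) via Cauchy--Schwarz and the absolute summability of $\{\psi_i\}$ (which also delivers fourth moments for $X_{t-j}$), and the $L^2$ variance bound for the conditional covariance mirrors the calculation that produces \eqref{eq:cMat1}--\eqref{eq:cMat2} but with the additional deterministic weight $C_\varepsilon(s,s')$. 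These are routine but bookkeeping-heavy, and they constitute the only nontrivial work in the proof.
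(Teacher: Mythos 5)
Your proposal is correct and uses the same skeleton as the paper: the identity $n^{1/2}(\bm{\beta}^*-\bm{\beta})=\bigl(n^{-1}\int_0^1\mathbf{X}^\intercal\mathbf{X}\,\ds\bigr)^{-1}\bigl(n^{-1/2}\int_0^1\mathbf{X}^\intercal\bm{\varepsilon}\,\ds\bigr)$, Lemma~\ref{l:pConv1} plus Slutsky for the first factor, and the same covariance computation $V^*_{jk}=K_1\sum_i\psi_i\psi_{i+|j-k|}$ obtained by factoring $\E{X_{t-j}(s)\varepsilon_t(s)X_{t-k}(s')\varepsilon_t(s')}$ via causality (this is exactly the paper's display \eqref{eq:EU}). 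Where you genuinely diverge is the central limit theorem for $\mathbf{Z}_n$: the paper truncates the causal moving average to $X^m_t=\sum_{|k|\le m}\psi_k\varepsilon_{t-k}$, applies the CLT for $(m+p)$-dependent sequences to $\int_0^1\mathbf{a}^\intercal\mathbf{U}^m_t(s)\ds$, and then passes to the limit in $m$ via the standard truncation device (Proposition~6.3.9 of \cite{brockwell1991time}) together with Cram\'er--Wold; you instead observe that $\mathbf{M}_t=\int_0^1[X_{t-1}(s),\dots,X_{t-p}(s)]^\intercal\varepsilon_t(s)\ds$ is a stationary, ergodic, square-integrable martingale difference sequence for the innovation filtration and invoke the Billingsley/Hall--Heyde CLT. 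Both routes are valid and land on the same $\mathbf{V}^*$; yours is shorter, avoids the $m$-dependence bookkeeping of \eqref{eq:DIC1}--\eqref{eq:DIC3}, and makes clear that only second moments of $\mathbf{M}_t$ (hence essentially (A2)) are needed for this step, while the paper's route stays entirely within the classical linear-process toolkit and does not require identifying a filtration or an ergodicity argument. Two small remarks: your sandwich formula $\bm{\Sigma}=\sigma^2_\epsilon\mathbf{A}^{-1}$ with $\mathbf{A}$ the Toeplitz matrix $A_{jk}=\sum_i\psi_i\psi_{i+|j-k|}$ agrees with the quantity $\bm{\Gamma}_p^{-2}\,\E{\int_0^1\mathbf{U}_t\,\ds\int_0^1\mathbf{U}_t^\intercal\,\ds'}$ computed in the paper's proof (the entrywise expression for $\Sigma_{ij}$ in the statement of Theorem~\ref{t:asymWARp} should be read as the $(i,j)$ entry of $\sigma^2_\epsilon\mathbf{A}^{-1}$, as your derivation makes explicit); and if you take the plain stationary-ergodic MDS form of the CLT you do not in fact need the Lindeberg/conditional-variance verification you sketch as an alternative, so the ``bookkeeping-heavy'' work you flag at the end can be dispensed with entirely.
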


\begin{proof}
By \eqref{eq:BStar},
\begin{equation}
\label{eq:BStarE}
n^{1/2} (\bm{\beta}^* - \bm{\beta}) = n\left\{ \int_0^1 \mathbf{X}^\intercal(s) \mathbf{X}(s) \ds  \right\}\inv \left\{ n^{-1/2} \int_0^1 \mathbf{X}^\intercal(s) \bm{\varepsilon}(s) \ds\right\}.
\end{equation}

To further analyze the second factor in \eqref{eq:BStarE}, we set $\mathbf{U}_t(s) = [  X_{t-1}(s), \dots, X_{t-p}(s)]^\intercal \varepsilon_t(s)$, $t \geq 1$.  Then
\[
n^{-1/2} \int_0^1 \mathbf{X}^\intercal(s) \bm{\varepsilon}(s)  \ds = n^{-1/2} \int_0^1 \sum_{t=1}^n \mathbf{U}_t(s)\ds.
\]

The sequence $X_t(s)$ is causal under (A1'), hence it is easy to
check $\mathbb{E}\int_0^1 \mathbf{U}_t(s) \ds = 0$ and for $i,j = 1,2,\dots, p,$
\begin{equation}
\label{eq:EU}
\begin{aligned}
& \E{\int_0^1 \mathbf{U}_t(s) \ds \int_0^1 \mathbf{U}_t^\intercal(s') \ds'}_{ij} \\
=& \int_0^1 \int_0^1 \mathbb{E}\left[X_{t-i}(s)\epsilon_t(s) X_{t-j}(s')\epsilon_t(s') \right] \ds \ds'\\
=& \int_0^1 \int_0^1 \mathbb{E}\left[X_{t-i}(s) X_{t-j}(s')\right]\mathbb{E}\left[\epsilon_t(s)\epsilon_t(s') \right] \ds \ds' \quad (\text{by causality})\\
=& \int_0^1 \int_0^1 \sum_k \psi_k \psi_{k+\abs{i-j}} C^2_{\varepsilon}(s,s') \ds \ds'  < \infty.
\end{aligned}
\end{equation}

Moreover, $\mathbb{E} \left[\int_0^1 \mathbf{U}_t(s) \ds \int_0^1 \mathbf{U}_{t+h}^\intercal(s') \ds' \right]_{ij} = 0$ for $h \neq 0$.

Recall the notation in \eqref{eq:twoSdd}, i.e., $X_t(s) = \sum_{k=-\infty}^\infty \psi_k \varepsilon_{t-k}(s)$.  For some $m \in \mathbb{Z}^+$, we define the process $X^m_t(s) = \sum_{k=-m}^m \psi_k \varepsilon_{t-k}(s)$ and $\mathbf{U}^m_t(s) = [  X^m_{t-1}(s), \dots, X^m_{t-p}(s)]^\intercal \varepsilon_t(s)$.  By \eqref{eq:EU}, for $i,j = 1,2,\dots,p,$ the following expected values exist:
\[
    \E{\int_0^1 \mathbf{U}^m_t(s) \ds \int_0^1 {\mathbf{U}^m}_t^\intercal(s') \ds'}_{ij}.
\]

For any $\mathbf{a} \in \R^p$ such that $\mathbf{a}^\intercal\E{\int_0^1 \mathbf{U}^m_t(s) \ds \int_0^1 {\mathbf{U}^m}_t^\intercal(s') \ds'}\mathbf{a} > 0$, $\int_0^1 \mathbf{a}^\intercal \mathbf{U}^m_t(s) \ds$ is an $(m+p)$-dependent process, hence by the Central Limit Theorem for $m$-dependent processes,
\begin{equation}
\label{eq:DIC1}
n^{-1/2} \sum_{t=1}^n \int_0^1 \mathbf{a}^\intercal \mathbf{U}^m_t(s) \ds \overset{D}{\rightarrow} Z_m,
\end{equation}
where $Z_m \sim \mathbf{N}\left(0, \mathbf{a}^\intercal\E{\int_0^1 \mathbf{U}^m_t(s) \ds \int_0^1 {\mathbf{U}^m}_t^\intercal(s') \ds'}\mathbf{a} \right).$

Clearly $\E{\int_0^1 \mathbf{U}^m_t(s) \ds \int_0^1 {\mathbf{U}^m}_t^\intercal(s') \ds'}_{ij} \rightarrow \E{\int_0^1 \mathbf{U}_t(s) \ds \int_0^1 {\mathbf{U}}_t^\intercal(s') \ds'}_{ij}$ as $m \rightarrow \infty$, hence
\begin{equation}
\label{eq:DIC2}
    Z_m \overset{D}{\rightarrow} Z,
\end{equation}
where $Z \sim \mathbf{N}\left(0, \mathbf{a}^\intercal\E{\int_0^1 \mathbf{U}_t(s) \ds \int_0^1 {\mathbf{U}}_t^\intercal(s') \ds'}\mathbf{a} \right).$

Moreover, for $\forall n,$
\begin{equation}
\label{eq:DIC3}
\begin{aligned}
  & n^{-1}\Var{\mathbf{a}^\intercal \sum_{t=1}^n \int_0^1 \left(\mathbf{U}^m_t(s) - \mathbf{U}_t(s)\right)\ds} \\
= & \mathbf{a}^\intercal \int_0^1\int_0^1 \E{ \left(\mathbf{U}^m_t(s) - \mathbf{U}_t(s)\right) \left(\mathbf{U}^m_t(s') - \mathbf{U}_t(s')^\intercal \right)}\ds\ds' \mathbf{a} \rightarrow & 0 \text{ as } m \rightarrow \infty.
\end{aligned}
\end{equation}

According to \eqref{eq:DIC1} through \eqref{eq:DIC3}, by a well-known result used to  establish weak convergence via truncation (see Proposition~6.3.9 in \cite{brockwell1991time}), and the Cram{\'e}r-Wold device, we have
\begin{equation}
\label{eq:BConv1}
n^{-1/2} \int_0^1 \mathbf{X}^\intercal(s) \bm{\varepsilon}(s) \ds \overset{D}{\rightarrow} \mathbf{N}\left(0,\E{\int_0^1 \mathbf{U}_t(s) \ds \int_0^1 {\mathbf{U}}_t^\intercal(s') \ds'}\right).
\end{equation}

Denote $\bm{\Sigma} = \bm{\Gamma}_p^{-2}\E{\int_0^1 \mathbf{U}_t(s) \ds \int_0^1 {\mathbf{U}}_t^\intercal(s') \ds'}$.  By Lemma~\ref{l:pConv1} and \eqref{eq:BConv1},
\[
    n^{1/2} \left(\bm{\beta}^* - \bm{\beta}\right) \overset{D}{\rightarrow} \mathbf{N}\left(0, \bm{\Sigma} \right).
\]

Also by \eqref{eq:EU}, we can verify the $ij^{th}$ element of $\bm{\Sigma}$ defined in Theorem~\ref{t:asymWARp} is
\[
\Sigma_{ij} = \frac{\int_{\R^2}  C^2_{\epsilon}(u,v) \fp(u) \fp(v)\du \dv}{\sum_k \psi_k \psi_{k+\lvert i -j  \rvert}\left[\int_\R C_\epsilon(u) \fp(u)\du \right]^{2}} = \sigma^2_\epsilon \left( \sum_k \psi_k \psi_{k+\lvert i -j  \rvert} \right)\inv, \quad i,j = 1,2,\dots, p.
\]
\end{proof}

\begin{lem}
\label{l:HS_op1} Assume (A1'), (A2), (A3) and (A4) hold.  Then
\[
n^{1/2}\left(\widehat{\bm{\beta}} - \bm{\beta}^*\right) = o_P(1).
\]

Therefore, $n^{1/2}\widehat{\bm{\beta}}$ and $n^{1/2} \bm{\beta}^*$ share the same weak limit provided the weak limit exists.
\end{lem}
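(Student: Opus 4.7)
The plan is to write both estimators in a common Yule-Walker form and show that replacing $\Qp$ by $\hQp$ (together with the shift of summation ranges from the unobservable to the observable window) perturbs the estimator only at order $o_P(n^{-1/2})$. Applying the change of variable $s = \hFp(u)$ to \eqref{eq:BetaH} and $s = \Fp(u)$ to \eqref{eq:BStar} yields $\widehat{\bm{\beta}} = \widehat{\mathbf{A}}^{-1}\hat{\mathbf{b}}$ with $\widehat{\mathbf{A}}_{ij} = \int_0^1 \hat{\lambda}_{|i-j|}(s)\,\ds$ and $\hat{\mathbf{b}}_i = \int_0^1 \hat{\lambda}_i(s)\,\ds$, and $\bm{\beta}^* = (\mathbf{A}^*)^{-1}\mathbf{b}^* + r_n$, where $\mathbf{A}^*, \mathbf{b}^*$ are built analogously from $\lambda_h^*$ as in Lemma~\ref{l:CvgCMat}. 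The remainder $r_n$ absorbs the index shift between $\sum_{t=1-i}^{n-i}$ in \eqref{eq:WARpMat} and $\sum_{t=1}^{n}$ used for $\lambda_h^*$; since at most $2p$ summands differ and each has bounded expectation by (A2) and (A4), $r_n = O_P(n^{-1})$. By Lemmas~\ref{l:pConv1} and~\ref{l:pConv2} together with the continuous mapping theorem, $\widehat{\mathbf{A}}^{-1}, (\mathbf{A}^*)^{-1}, \mathbf{b}^*$ are all $O_P(1)$, so it suffices to verify entrywise that $\widehat{\mathbf{A}} - \mathbf{A}^* = o_P(n^{-1/2})$ and $\hat{\mathbf{b}} - \mathbf{b}^* = o_P(n^{-1/2})$, which reduces to proving
\[
\int_0^1 [\hat{\lambda}_h(s) - \lambda_h^*(s)]\,\ds = o_P(n^{-1/2}), \qquad h=0,1,\ldots,p.
\]

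Setting $\bar{X}_n(s) := \hQp(s) - \Qp(s) = n^{-1}\sum_{t=1}^n X_t(s)$ and expanding the product in $\hat{\lambda}_h$ gives
\[
\hat{\lambda}_h(s) - \lambda_h^*(s) = -\frac{1}{n}\sum_{t=n-h+1}^{n} X_t(s) X_{t+h}(s) - \bar{X}_n(s)\cdot \frac{1}{n}\sum_{t=1}^{n-h}[X_t(s)+X_{t+h}(s)] + \frac{n-h}{n}\bar{X}_n(s)^2.
\]
The first piece has at most $h\le p$ summands and integrates to an $L^1$ bound of order $n^{-1}$. For the other two I will establish the sharp $L^2$ bound $\mathbb{E}\|\bar{X}_n\|_{L^2[0,1]}^2 = n^{-2}\sum_{t,t'=1}^n \int_0^1 \widetilde{\gamma}_{t-t'}(s,s)\,\ds = O(n^{-1})$, invoking the identity $\int_0^1\widetilde{\gamma}_h(s,s)\,\ds = \left(\sum_k \psi_k\psi_{k+h}\right) \int_0^1 C_\varepsilon(s,s)\,\ds$, absolute summability of $\{\psi_k\}$ under (A1'), and (A2); hence $\|\bar{X}_n\|_{L^2} = O_P(n^{-1/2})$. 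Cauchy-Schwarz then controls the middle term by $\|\bar{X}_n\|_{L^2}$ times an analogous $O_P(n^{-1/2})$ factor, and the squared term is $\|\bar{X}_n\|_{L^2}^2 = O_P(n^{-1})$. Integrated over $s$, each piece is $O_P(n^{-1}) = o_P(n^{-1/2})$.

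Combining via Slutsky's theorem, $\widehat{\bm{\beta}} - \bm{\beta}^* = \widehat{\mathbf{A}}^{-1}(\hat{\mathbf{b}} - \mathbf{b}^*) + (\widehat{\mathbf{A}}^{-1} - (\mathbf{A}^*)^{-1})\mathbf{b}^* - r_n = o_P(n^{-1/2})$, as desired. The main obstacle will be the $L^2$ bound on $\bar{X}_n$: its derivation requires commuting expectation with the causal representation $X_t = \sum_k \psi_k \varepsilon_{t-k}$, paralleling the computation in Lemma~\ref{l:CvgCMat} but now applied to the centering sequence itself rather than to the sample autocovariances, and it is the only step where the full strength of absolute summability of $\{\psi_k\}$ and (A2) is used.
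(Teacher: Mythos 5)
Your proposal is correct and follows essentially the same route as the paper's own proof: both reduce the claim to showing $\int_0^1[\hat{\lambda}_h(s)-\lambda_h^*(s)]\,\ds=o_P(n^{-1/2})$ via the decomposition into boundary terms, cross terms with $\bar{X}_n$, and the $\bar{X}_n^2$ term, establish the key bound $\mathbb{E}\int_0^1\bar{X}_n^2(s)\,\ds=O(n^{-1})$ using absolute summability of $\{\psi_k\}$ and (A2), and handle the inverted matrices by the standard perturbation identity together with Lemmas~\ref{l:pConv1} and~\ref{l:pConv2}. Your explicit tracking of the index-shift remainder $r_n$ is slightly more careful than the paper's treatment but does not change the argument.
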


\begin{proof}
Note that 
\begin{align*}
  & n^{1/2}(\widehat{\bm{\beta}} - \bm{\beta}^*)\\
= & n^{1/2} \left[\left(\int_0^1 \widehat{\bm{\Gamma}}_p(s) \ds\right)\inv \int_0^1\hat{\bm{\lambda}}_p(s) \ds - \left(\int_0^1 \mathbf{X}^\intercal(s) \mathbf{X}(s) \ds \right)\inv \int_0^1 \mathbf{X}^\intercal(s) \mathbf{Y}(s) \ds \right]\\
= & n^{1/2}\left(\int_0^1 \widehat{\bm{\Gamma}}_p(s) \ds\right)\inv \left(\int_0^1 \hat{\bm{\lambda}}_p(s)\ds  - n\inv \int_0^1  \mathbf{X}^\intercal(s) \mathbf{Y}(s) \ds \right) \quad (\star)\\
& + n^{1/2} \left[\left(\int_0^1 \widehat{\bm{\Gamma}}_p(s) \ds\right)\inv - n \left(\int_0^1 \mathbf{X}^\intercal(s) \mathbf{X}(s) \ds \right)\inv \right]  n\inv \int_0^1 \mathbf{X}^\intercal(s) \mathbf{Y}(s) \ds . \quad (\star \star)
\end{align*}

To analyze $(\star)$, first observe
\[
\begin{aligned}
  & n^{1/2}  \int_0^1 \mathbb{E} \bar{X}^2(s) \ds \\
  = & n^{1/2}\int_0^1 \mathbb{E} \left\{ \frac{1}{n}\sum_{j=1}^n X_j(s) \right\}^2 \ds\\
  = & n^{1/2} \int_0^1 n^{-2} \left\{ n \lambda_{0}(s) + 2(n-1)\lambda_{1}(s) + \dots 2(2)\lambda_{n-2}(s) + 2\lambda_{n-1}(s) \right\}\ds \\
  = & n^{1/2} \int_0^1 n^{-1} \sum_{h=-n+1}^{n-1} \left( 1-\frac{\abs{h}}{n} \right) \lambda_h(s) \ds\\
  = & n^{-1/2}  \sum_{h=-n+1}^{n-1} \left( 1-\frac{\abs{h}}{n} \right) \left\{ \sum_{k= -\infty}^\infty \psi_k \psi_{k+h} \int_0^1 C_\varepsilon(s,s) \ds \right\}\\
   \leq & n^{-1/2} \sum_{h=-n+1}^{n-1} \sum_{k= -\infty}^\infty \abs{\psi_k \psi_{k+h}}  \left\{ \int_0^1 C_\varepsilon(s,s) \ds \right\}    \\
  = & n^{-1/2} \underbrace{\sum_{k= -\infty}^\infty  \abs{\psi_k} \left( \sum_{h=-n+1}^{n-1}  \abs{\psi_{k+h} }\right)}_{< \infty \text{ as } n \rightarrow \infty} \left\{\int_0^1 C_\varepsilon(s,s) \ds \right\}\\
  & \rightarrow 0 \text{ as } n \rightarrow \infty.
\end{aligned}
\]

Therefore $n^{1/2} \int \bar{X}^2(s) \ds \overset{L1}{\rightarrow} 0$
implying $ n^{1/2} \int \bar{X}^2(s) \ds = o_P(1)$.
Hence for $i = 1,2,\dots, p,$ $p < n$,
\begin{equation}
\label{eq:C1}
    \begin{aligned}
    C_1  \vcentcolon = n^{1/2} \int_0^1 \left\{ \bar{X}(s)\left( (1-i/n)\bar{X}(s) - n^{-1} \sum_{j=1}^{n-i} (X_{j+i}(s) + X_j(s)) \right) \right\} \ds  = o_P(1).
    \end{aligned}
\end{equation}

It is clear that
\begin{equation}
\label{eq:C2}
\begin{aligned}
  C_2  \vcentcolon =n^{-1/2} \int_0^1 \left\{ -\sum_{j=1-i}^{0} X_j(s)X_{j+i}(s) \right\} \ds  \rightarrow 0 \text{ as } n \rightarrow \infty.
\end{aligned}
\end{equation}

By \eqref{eq:C1} and \eqref{eq:C2}, for $i = 1,2, \dots, p,$ we have
\begin{equation}
  \begin{aligned}
    \label{eq:C1C2}
  & n^{1/2} \left( \int_0^1 \hat{\lambda}_i(s) \ds - n\inv \int_0^1  \sum_{j=1}^nX_{j-i}(s) X_j(s) \ds \right)\\
= & n^{-1/2} \int_0^1 \left\{  \sum_{j=1}^{n-i} (X_j(s) - \bar{X}(s)) ( X_{j+i}(s)- \bar{X}(s))- \sum_{j = 1-i}^{n-i} X_j(s)X_{j+i}(s) \right\} \ds   \\
= & n^{-1/2}\int_0^1 \left\{ -\sum_{j=1-i}^{0} X_j(s)X_{j+i}(s) - \sum_{j=1}^{n-i} \bar{X}(s)X_{j+i}(s) - \sum_{j=1}^{n-i}X_j(s)\bar{X}(s) + (n-i)\bar{X}^2(s) \right\} \ds   \\
= & n^{-1/2} \int_0^1 \left\{ -\sum_{j=1-i}^{0} X_j(s)X_{j+i}(s) \right\} \ds \\
& + n^{1/2} \int_0^1 \left\{ \bar{X}(s)\left( \left(1-\frac{i}{n}\right)\bar{X}(s) - n^{-1} \sum_{j=1}^{n-i} (X_{j+i}(s) + X_j(s)) \right) \right\} \ds\\
= & C_1 + C_2\\
= & o_P(1).
  \end{aligned}
\end{equation}
Therefore
\begin{equation}
\label{eq:BConv2}
    n^{1/2} \left(\int_0^1 \hat{\bm{\lambda}}_p(s) \ds - n\inv \int_0^1 \mathbf{X}^\intercal(s) \mathbf{Y}(s)\ds \right) = o_P(1).
\end{equation}
Moreover, we can also conclude from \eqref{eq:C1C2} that
\begin{equation}
    \label{eq:BConv3}
    n^{1/2} \left(\int_0^1 \widehat{\bm{\Gamma}}_p(s) \ds - n\inv \int_0^1  \mathbf{X}^\intercal(s) \mathbf{X}(s)\ds \right) = o_P(1).
\end{equation}
Hence, we can conclude $(\star) = o_P(1)$.

To analyze $(\star \star)$, let $\lVert \cdot \rVert_F$ be the Frobenius norm, we have
\begin{align*}
  & n^{1/2} \left \lVert \left(\int_0^1 \widehat{\bm{\Gamma}}_p(s) \ds\right)\inv - n \left( \int_0^1 \mathbf{X}^\intercal(s) \mathbf{X}(s) \ds \right)\inv \right \rVert_F \\
= & n^{1/2} \left \lVert \left(\int_0^1 \widehat{\bm{\Gamma}}_p(s) \ds\right)\inv \left( n\inv \int_0^1 \mathbf{X}^\intercal(s) \mathbf{X}(s) \ds - \int_0^1 \widehat{\bm{\Gamma}}_p(s) \ds \right)  n\left(\int_0^1 \mathbf{X}^\intercal(s) \mathbf{X}(s) \ds \right)\inv \right \rVert_F \\
\leq & n^{1/2} \left \lVert \left(\int_0^1 \widehat{\bm{\Gamma}}_p(s) \ds\right)\inv \right\rVert_F  \left \lVert  n\inv \int_0^1 \mathbf{X}^\intercal(s)\mathbf{X}(s)\ds - \int_0^1 \widehat{\bm{\Gamma}}_p(s) \ds \right \rVert_F   \left \lVert n\left(\int_0^1 \mathbf{X}^\intercal(s) \mathbf{X}(s) \ds \right)\inv \right \rVert_F \\
& = o_P(1),
\end{align*}
since
$n \left(\int_0^1 \mathbf{X}^\intercal(s) \mathbf{X}(s) \ds \right)\inv \overset{P}{\rightarrow} \bm{\Gamma}_p\inv$, and $\left(\int_0^1 \widehat{\bm{\Gamma}}_p(s) \ds\right)\inv \overset{P}{\rightarrow} \bm{\Gamma}_p\inv$.  Then with Lemma~\ref{l:pConv2}, we can conclude $(\star \star) = o_P(1)$.  Therefore the claim $n^{1/2}\left(\widehat{\bm{\beta}} - \bm{\beta}^*\right) = o_P(1)$
follows.
\end{proof}

\subsection{Proof of Theorem~\ref{t:asymAcf}}
\label{ssec:acf}
\begin{proof}
For some integer $h \in \{0,1,\dots, n-1\}$, define the vector
\[
\begin{aligned}
\bm{\Lambda}^*_h = &  \left[ \int_0^1 \lambda_0^*(s) \ds, \int_0^1 \lambda_1^*(s) \ds,
\dots,
\int_0^1 \lambda_h^*(s) \ds \right]^\intercal,\\
\bm{\Lambda}_h = &  \left[ \int_0^1 \lambda_0^*(s) \ds, \bm{\lambda}^\intercal_h  \right]^\intercal, \text{ and } \hat{\bm{\Lambda}}_h = \left[ \int_0^1 \hat{\lambda}_0(s) \ds, \hat{\bm{\lambda}}^\intercal_h  \right]^\intercal.
\end{aligned}
\]

Similarly to the proof of Lemma~\ref{l:asymSB},
\[
n^{1/2}\left( \bm{\Lambda}^*_h - \bm{\Lambda}_h \right) \overset{D}{\rightarrow} \mathbf{N}\left(0, \mathbf{V}\right),
\]
where $\mathbf{V}$ is the covrariance matrix whose $ij^{th}$ elements $v_{ij}$ are defined in \eqref{eq:cMat1} and \eqref{eq:cMat2} in Lemma~\ref{l:CvgCMat}.

By a similar argument as in the proof of Lemma~\ref{l:HS_op1}, we have
\[
n^{1/2}\left( \hat{\bm{\Lambda}}_h - \bm{\Lambda}^*_h \right) = o_P(1).
\]

Thus
\[
n^{1/2}\left( \hat{\bm{\Lambda}}_h - \bm{\Lambda}_h \right) \overset{D}{\rightarrow} \mathbf{N}\left(0, \mathbf{V} \right).
\]
The result follows from an application of the delta method.
\end{proof}
\end{document}